
\documentclass[conference]{IEEEtran}
% Some Computer Society conferences also require the compsoc mode option,
% but others use the standard conference format.
%
% If IEEEtran.cls has not been installed into the LaTeX system files,
% manually specify the path to it like:
% \documentclass[conference]{../sty/IEEEtran}

% Some very useful LaTeX packages include:
% (uncomment the ones you want to load)

% *** MISC UTILITY PACKAGES ***
%
%\usepackage{ifpdf}
% Heiko Oberdiek's ifpdf.sty is very useful if you need conditional
% compilation based on whether the output is pdf or dvi.
% usage:
% \ifpdf
%   % pdf code
% \else
%   % dvi code
% \fi
% The latest version of ifpdf.sty can be obtained from:
% http://www.ctan.org/pkg/ifpdf
% Also, note that IEEEtran.cls V1.7 and later provides a builtin
% \ifCLASSINFOpdf conditional that works the same way.
% When switching from latex to pdflatex and vice-versa, the compiler may
% have to be run twice to clear warning/error messages.

\usepackage{amsmath,amsthm}
\usepackage{amsfonts}
\usepackage{bm}
\usepackage{graphicx}
\usepackage{subcaption}
\usepackage{algorithm}
\usepackage{algpseudocode}
\usepackage{multirow}
\usepackage{booktabs}
\usepackage{url}
\usepackage{color}
\usepackage{pdfx}

\newcommand{\xc}[1]{{{#1}}}

\newcommand{\myparatight}[1]{\smallskip\noindent{\bf {#1}:}~}
\newtheorem{thm}{Theorem}
\newtheorem{assumption}{Assumption}
\newtheorem{coro}{Corollary}

% *** CITATION PACKAGES ***
%
%\usepackage{cite}
% cite.sty was written by Donald Arseneau
% V1.6 and later of IEEEtran pre-defines the format of the cite.sty package
% \cite{} output to follow that of the IEEE. Loading the cite package will
% result in citation numbers being automatically sorted and properly
% "compressed/ranged". e.g., [1], [9], [2], [7], [5], [6] without using
% cite.sty will become [1], [2], [5]--[7], [9] using cite.sty. cite.sty's
% \cite will automatically add leading space, if needed. Use cite.sty's
% noadjust option (cite.sty V3.8 and later) if you want to turn this off
% such as if a citation ever needs to be enclosed in parenthesis.
% cite.sty is already installed on most LaTeX systems. Be sure and use
% version 5.0 (2009-03-20) and later if using hyperref.sty.
% The latest version can be obtained at:
% http://www.ctan.org/pkg/cite
% The documentation is contained in the cite.sty file itself.

% *** GRAPHICS RELATED PACKAGES ***
%
\ifCLASSINFOpdf
  % \usepackage[pdftex]{graphicx}
  % declare the path(s) where your graphic files are
  % \graphicspath{{../pdf/}{../jpeg/}}
  % and their extensions so you won't have to specify these with
  % every instance of \includegraphics
  % \DeclareGraphicsExtensions{.pdf,.jpeg,.png}
\else
  % or other class option (dvipsone, dvipdf, if not using dvips). graphicx
  % will default to the driver specified in the system graphics.cfg if no
  % driver is specified.
  % \usepackage[dvips]{graphicx}
  % declare the path(s) where your graphic files are
  % \graphicspath{{../eps/}}
  % and their extensions so you won't have to specify these with
  % every instance of \includegraphics
  % \DeclareGraphicsExtensions{.eps}
\fi

\begin{document}
%
% paper title
% Titles are generally capitalized except for words such as a, an, and, as,
% at, but, by, for, in, nor, of, on, or, the, to and up, which are usually
% not capitalized unless they are the first or last word of the title.
% Linebreaks \\ can be used within to get better formatting as desired.
% Do not put math or special symbols in the title.
\title{FedRecover: Recovering from Poisoning Attacks in Federated Learning using Historical Information}

% \author{}
\author{ 
\IEEEauthorblockN{Xiaoyu Cao$^*$  \quad Jinyuan Jia$^*$ \quad Zaixi Zhang$^+$ \quad Neil Zhenqiang Gong$^*$}
\IEEEauthorblockA{$^*$Duke University \quad $^+$University of Science and Technology of China\\
\{xiaoyu.cao, jinyuan.jia, neil.gong\}@duke.edu \quad zaixi@mail.ustc.edu.cn
}
}

\maketitle

% As a general rule, do not put math, special symbols or citations
% in the abstract

% no keywords

\begin{abstract}
    Federated learning is vulnerable to  poisoning attacks in which malicious clients poison the global model via sending malicious model updates to the server. 
    Existing defenses focus on \emph{preventing} a small number of malicious clients from poisoning the global model via robust federated learning methods and \emph{detecting} malicious clients when there are a large number of them.  However, 
    it is still an open challenge how to \emph{recover} the global model from poisoning attacks after the malicious clients are detected. A naive solution is to remove the detected malicious clients and  train a new global model from scratch using the remaining clients. However, such \emph{train-from-scratch} recovery method incurs a large computation and communication cost, which may be intolerable for resource-constrained clients such as  smartphones and IoT devices.

    In this work, we propose \emph{FedRecover}, a method that can recover an accurate global model from poisoning attacks with a small computation and communication cost for the clients. Our key idea is that the server estimates the clients' model updates instead of asking the clients to compute and communicate them during the recovery process. In particular, the server stores the historical information, including the global models and clients' model updates in each round, when training the poisoned global model before the malicious clients are detected. During the recovery process, the server estimates a client's model update in each round using its stored historical information.  
    Moreover, we further optimize FedRecover to recover a more accurate global model using \emph{warm-up}, \emph{periodic correction}, \emph{abnormality fixing}, and \emph{final tuning} strategies, in which 
    the server asks the clients to compute and communicate their exact model updates. Theoretically, we show that the global model recovered by FedRecover is close to or the same as that recovered by train-from-scratch under some assumptions. Empirically, our 
    evaluation on four datasets, three federated learning methods, as well as  untargeted and targeted poisoning attacks (e.g., backdoor attacks) shows that FedRecover is both accurate and efficient. 
    
    \end{abstract}
\section{Introduction}
\label{sec:intro}
Federated learning (FL) \cite{Konen16,McMahan17} is an emerging machine learning paradigm that enables many clients (e.g., smartphones, IoT devices, and edge devices) to collaboratively learn a shared machine learning model (called \emph{global model}).  
Specifically, training data are decentralized over the clients in FL, and a server maintains the global model. Roughly speaking, FL performs the following three steps in each round: the server broadcasts the current global model to (a subset of) 
the clients; each client 
fine-tunes the global model  using its local training data and reports its \emph{model update} to the server; and the server aggregates the clients' model updates following some \emph{aggregation rule} and uses the aggregated model update to update the global model. Different FL methods essentially use different aggregation rules. FL 
has been deployed by tech giants. For instance, Google uses FL on a virtual keyboard app called Gboard~\cite{gboard} for  next-word prediction; and WeBank leverages FL for credit risk prediction~\cite{webank}.

However, due to its distributed setting, FL is vulnerable to \emph{poisoning attacks} \cite{bagdasaryan2020backdoor,fang2019local,bhagoji2019analyzing,cao2022mpaf}. Specifically, an attacker may have access to some \emph{malicious clients}, which could be fake clients injected into the system by the attacker~\cite{cao2022mpaf} or  genuine clients compromised by the attacker~\cite{bagdasaryan2020backdoor,fang2019local,bhagoji2019analyzing}. The malicious clients poison the global model via sending carefully crafted malicious model updates to the server. A malicious client can craft its malicious model update by poisoning its local training data and/or directly constructing it without following the prescribed FL protocol. 
In an \emph{untargeted poisoning attack} \cite{fang2019local,cao2022mpaf}, the poisoned global model indiscriminately misclassifies many test inputs, i.e., the poisoned global model has a large test error rate. 
In a \emph{targeted poisoning attack}~\cite{bagdasaryan2020backdoor,bhagoji2019analyzing}, the poisoned global model predicts an attacker-chosen target label for attacker-chosen target test inputs but its predictions for other test inputs are unaffected. For instance, in backdoor attacks (one category of targeted poisoning attacks)~\cite{bagdasaryan2020backdoor}, the target test inputs could be any input embedded with an attacker-chosen trigger.

 Existing defenses  against poisoning attacks to FL  \emph{prevent} a small number of malicious clients from poisoning the global model and/or \emph{detect} malicious clients.   
 Specifically, some studies proposed Byzantine-robust~\cite{cao2020fltrust,Mhamdi18,Yin18,Blanchard17,ChenPOMACS17,nguyen2021flame}  or provably robust~\cite{cao2021provably} FL methods that can {prevent} a small number of malicious clients from poisoning the global model, i.e., they can guarantee the global model learnt with malicious clients is close to the global model learnt without them~\cite{cao2020fltrust,Mhamdi18,Yin18} or guarantee a lower bound of testing accuracy under a bounded number of malicious clients~\cite{cao2021provably}. However, these FL methods are still vulnerable to poisoning attacks with a large number of malicious clients~\cite{fang2019local,shejwalkarmanipulating}. 
 Therefore, some studies~\cite{li2020learning,shen2016auror,zhang2022fldetector} further proposed to \emph{detect} malicious clients during or after the training process, which can be used together with the prevention methods in a defense-in-depth strategy. For instance, the server may distinguish between the malicious clients and benign ones via some statistical differences in their model updates sent to the server. Since such detection methods require enough model updates to make confident decisions, the malicious clients often have already poisoned the global model before being detected. 
 Therefore, the server needs to \emph{recover} an accurate global model from the poisoned one after detecting the malicious clients. 
 
However, efficient model recovery in FL is largely unexplored.  Since the server does not know in which round the attack happens, the server may not be able to simply roll back to a clean global model in a prior round. 
A naive recovery method (we call it \emph{train-from-scratch}) is to remove the detected malicious clients and train a new global model from scratch using the remaining clients. 
 Train-from-scratch could recover an accurate global model. However, it introduces substantial computation and communication cost to the clients since it requires them to participate in the entire training process once again. Such computation and communication cost may be intolerable for resource-constrained clients such as smartphones and IoT devices.

\myparatight{Our work} In this work, we  propose \emph{FedRecover}, a method that can recover an accurate global model from a poisoned one while introducing small computation and communication cost for the clients.  
Like train-from-scratch, FedRecover {removes the detected malicious clients},  {re-initializes a global model}, and trains it iteratively in multiple rounds. However, unlike train-from-scratch, FedRecover reduces the  cost for the clients by changing the way of obtaining their model updates. 
 Our intuition is that the \emph{historical information}, including the global models and clients' model updates, which the server collected when training the poisoned global model before the malicious clients are detected, still carry valuable information for model recovery. Based on the intuition, our key idea is that, during the recovery process, the server estimates the remaining clients' model updates using such historical information instead of asking the clients to compute and communicate them. FedRecover is independent of the detection methods used to detect the malicious clients and the aggregation rules of FL. In other words, FedRecover can be used together with any detection method and FL aggregation rule in a defense-in-depth strategy.

The key of FedRecover is that the server estimates the clients' model updates itself during the recovery process. Specifically, the server stores the historical information when training the poisoned global model before the malicious clients are detected. During the recovery process,  the server uses the well-known \emph{Cauchy mean value theorem} to estimate each client's model update in each round. However, the Cauchy mean value theorem requires an integrated Hessian matrix for each client, whose exact value is challenging to compute. To address the challenge, we further leverage an L-BFGS based algorithm to efficiently approximate the integrated Hessian matrix.  FedRecover introduces some storage and computation cost to the server due to storing the historical information and estimating the clients' model updates. However, such cost is  acceptable since the server is powerful.

Since FedRecover estimates the clients'  model updates, the estimation errors may accumulate over multiple rounds during the recovery process, which eventually  may result in a less accurate recovered global model. We propose multiple strategies to  address the challenge. Specifically,  the L-BFGS algorithm requires the recovered global models in the previous several rounds to estimate a client's model update in the current round. The accurately recovered global models in the first several rounds of the recovery process will help reduce the estimation errors in the future rounds. Therefore, we propose the \emph{warm-up} strategy, in which the server asks the clients to compute and communicate their exact model updates in the first $T_w$ rounds of the recovery process. Moreover, we propose the \emph{periodic correction} strategy, in which the server asks the clients to compute and communicate their exact model updates in every $T_c$ rounds.  When an estimated model update for a client is large, it has large influence on the recovered global model. To reduce the impact of potentially incorrectly estimated large model updates, we propose the \emph{abnormality fixing} strategy, in which the server asks a client to compute its exact model update when at least one coordinate of the estimated model update is larger than a threshold $\tau$. Furthermore, we propose \emph{final tuning} strategy to reduce the estimation error before the training terminates, in which the server asks the clients to compute and communicate their exact model updates in the last $T_f$ rounds. 
The parameters $T_w$, $T_c$,   $\tau$, and $T_f$ control the trade-off between accuracy of the recovered global model and computation/communication cost for the clients. In particular, a larger $T_w$, a smaller $T_c$,   a smaller $\tau$, or a larger $T_f$ may recover a more accurate global model but also introduces a larger cost to the clients.

Theoretically, we show that the difference between the global model recovered by FedRecover and the global model recovered by train-from-scratch can be bounded under some assumptions, e.g., the loss function used to learn the global model is smooth and strongly convex. Empirically, we evaluate FedRecover extensively using four datasets, three FL methods (e.g., FedAvg~\cite{McMahan17},  Median~\cite{Yin18}, and Trimmed-mean~\cite{Yin18}), as well as Trim attack (an untargeted poisoning attack)~\cite{fang2019local} and backdoor attack (a targeted poisoning attack)~\cite{bagdasaryan2020backdoor}. Our empirical results show that FedRecover can recover global models that are as accurate as those recovered by train-from-scratch while saving lots of computation/communication cost for the clients. 
For instance, 
the backdoor attack with 40 malicious clients can achieve 1.00 attack success rate when the dataset is MNIST  and the FL method is Trimmed-mean. Both FedRecover and train-from-scratch can recover global models with 0.07 test error rate and 0.01 attack success rate, but FedRecover saves the clients' computation/communication cost by 88\% on average compared to train-from-scratch. Moreover, 
FedRecover can efficiently recover as accurate global models as train-from-scratch even if the detection method incorrectly detects some malicious clients as benign and/or some benign clients as malicious.

In summary, our key contributions are as follows: 
\begin{itemize}
    \item We perform the first systematic study on  model recovery from poisoning attacks in FL. 
    \item We propose FedRecover 
     to recover a global model via estimating clients' model updates through historical information and multiple optimization strategies. 
    \item We evaluate FedRecover both theoretically and empirically. Our results show that FedRecover can recover a global model both accurately and efficiently.
\end{itemize}
\section{Background and related work}
\subsection{Background on FL}
\label{sec:backgroundFL}

Suppose the FL system has $n$ clients,  each of which has a local training dataset $D_i$, $i=1,2,\cdots,n$. We use $D=\bigcup_{i=1}^{n} D_{i}$ to denote the joint training dataset, which is the union of the  clients' local training datasets. 
The $n$ clients aim to collaboratively train a shared machine learning model (called \emph{global model}) based on the joint training dataset. To achieve the goal, the $n$ clients jointly minimize a loss function on their training datasets, i.e., $\min_{\bm{w}}\mathcal{L}(D;\bm{w})=\min_{\bm{w}}\sum_{i=1}^{n}\mathcal{L}(D_i;\bm{w})$, where $\bm{w}$ represents the global model parameters and $\mathcal{L}$ is the empirical loss function (e.g., cross-entropy loss). For simplicity, we let $\mathcal{L}_i(\bm{w})=\mathcal{L}(D_i;\bm{w})$ in the rest of this work. 
A server provided by a service provider (e.g., Google, Facebook, Apple) maintains the global model. The global model is iteratively updated in multiple rounds, and in the $t$th round, FL takes the following three steps: 

\begin{itemize}
    \item Step I: The server broadcasts the current global model $\bm{w}_t$ to the clients. The server may also broadcast the global model to a subset of the clients. Our method is also applicable in this scenario. However, for simplicity, we assume all clients are involved in each round.  
    \item Step II: The $i$th client computes a \emph{model update} $\bm{g}_t^i=\frac{\partial \mathcal{L}_i(\bm{w}_t)}{\partial \bm{w}_t}$ based on the received global model $\bm{w}_t$ and the client's local training data $D_i$ using gradient descent. The client may also use  stochastic gradient descent with a mini-batch of its local training dataset if it is large. For simplicity, we assume gradient descent in the description of our method, but we adopt stochastic gradient descent in our experiments. Then, the client reports the model update $\bm{g}_t^i$ to the server. Note that the clients calculate their model updates in  parallel.
    
    \item Step III: The server  aggregates  the clients' model updates   according to an \emph{aggregation rule} $\mathcal{A}$. Then, the server uses the aggregated model update to update the global model  with a learning rate $\eta$, i.e., $\bm{w}_{t+1}=\bm{w}_t - \eta\cdot\mathcal{A}(\bm{g}_t^1, \bm{g}_t^2, \cdots, \bm{g}_t^n)$.
    
\end{itemize}

In  train-from-scratch, the server initializes a global model, and then the server and the remaining clients follow the above three steps in each round to iteratively update it. 
Different FL methods essentially use different aggregation rules 
\cite{Blanchard17,cao2020fltrust,ChenPOMACS17,McMahan17,Mhamdi18,Yin18} 
in Step III. Next, we review several popular aggregation rules. 

\myparatight{FedAvg} FedAvg~\cite{McMahan17}, developed by Google Inc., computes the weighted average of the clients' model updates as the aggregated model update. Formally, 
given the model updates $g_t^1, g_t^2, \cdots, g_t^n$ in the $t$th round, the aggregated  model update is as follows:
\begin{align}
    \mathcal{A}(\bm{g}_t^1, \bm{g}_t^2, \cdots, \bm{g}_t^n)= \sum_{i=1}^n \frac{|D_i|}{|D|} \cdot \bm{g}_t^i,
\end{align}
where $|\cdot|$ represents the size of a dataset.

\myparatight{Median}
Median \cite{Yin18} is a coordinate-wise aggregation rule that aggregates each coordinate of the model update  separately. 
In particular, for each coordinate, Median 
calculates the median value of the corresponding coordinates in the  $n$ model updates and treats it as the corresponding coordinate of the aggregated model update. 

\myparatight{Trimmed-mean}
Trimmed-mean \cite{Yin18} is also a coordinate-wise aggregation rule. For each coordinate, Trimmed-mean sorts the values of the corresponding coordinates in the $n$ model updates. Then, it removes the largest and the smallest $k$ values. Finally, it  computes the average of the remaining values as the corresponding coordinate of the aggregated model update. 
 $k<\frac{n}{2}$ is a hyper-parameter for Trimmed-mean.

\subsection{Poisoning Attacks to FL}
\label{sec:poisoning}
Federated learning is  vulnerable to \emph{poisoning attacks} \cite{bagdasaryan2020backdoor,baruch2019little,bhagoji2019analyzing,fang2019local,shejwalkarmanipulating,cao2022mpaf}, in which malicious clients poison the global model via sending malicious model updates to the server in Step II of FL. The malicious clients can construct their malicious model updates via poisoning their local training data  and/or directly manipulating the model updates without following the prescribed FL protocol in Step II~\cite{bagdasaryan2020backdoor,baruch2019little,bhagoji2019analyzing,fang2019local,shejwalkarmanipulating,cao2022mpaf}.    
Based on the attacker's goal,  poisoning attacks can be categorized into \emph{untargeted poisoning attacks} \cite{fang2019local,shejwalkarmanipulating,cao2022mpaf} and \emph{targeted poisoning attacks} \cite{bagdasaryan2020backdoor,baruch2019little,bhagoji2019analyzing}. In untargeted poisoning attacks, the poisoned global model has a large test error rate for a large proportion of test inputs indiscriminately.
In targeted poisoning attacks, the poisoned global model  predicts an attacker-chosen target label for attacker-chosen target test inputs; and to stay stealthy, the poisoned global model's test error rate for other test inputs is unaffected. For instance, backdoor attacks \cite{bagdasaryan2020backdoor,baruch2019little} are popular targeted poisoning attacks, in which the attacker-chosen test inputs are any inputs embedded with a trigger. Next, we review \emph{Trim attack} (a popular untargeted poisoning attack) \cite{fang2019local} and \emph{backdoor attack} (a popular targeted poisoning attack) \cite{bagdasaryan2020backdoor}.

\myparatight{Trim attack} Fang et al. \cite{fang2019local} formulated untargeted poisoning attacks to FL as a general framework. Roughly speaking, the framework aims to craft malicious model updates that maximize the difference between the aggregated model updates before and after attack. 
The framework can be applied to different aggregation rules. The Trim attack is constructed based on the Trimmed-mean aggregation rule under the framework, and is also effective for other aggregation rules such as FedAvg and Median.

\myparatight{Backdoor attack} In the backdoor attack \cite{bagdasaryan2020backdoor}, the attacker poisons the malicious clients' local training data via augmenting them with trigger-embedded duplicates. Specifically, for each input in a malicious client's local training dataset, the attacker makes a copy of it and embeds a trigger into the copy. Then, the attacker injects the trigger-embedded copy into the malicious client's local training dataset and relabels it as the target label. 
In every round of FL, each malicious client computes a model update based on its poisoned local training data. To amplify the impact of the model updates, the malicious clients further scale them up  by a large factor before reporting them to the server. 
We notice that some methods \cite{chen2019deepinspect,wang2019neural} have been proposed to detect and remove  backdoor in neural networks. However, they are insufficient for FL. For instance, \cite{wang2019neural} assumes that a clean training dataset is available, which usually does not hold for an FL server.

\subsection{Detecting Malicious Clients}
Malicious-client detection  \cite{li2020learning,shen2016auror,zhang2022fldetector} aims to distinguish  malicious clients from  benign ones, which is essentially a binary classification problem. Roughly speaking, the key idea is to leverage some statistical difference between the features (e.g., model updates \cite{li2020learning}) of malicious clients and those of benign clients. 
Different detection methods use different features and binary classifiers to perform the detection. 
Specifically, for  each client, these detection methods first extract features from its model updates in one or multiple rounds and then use a classifier to predict whether it is malicious or not.
For instance, Zhang et al.~\cite{zhang2022fldetector} proposed to detect malicious clients via checking a client's model-update consistency. In particular, the server predicts a client's model update based on its historical model updates in each round.  If the received model updates are inconsistent with the predicted ones in multiple rounds, then the server flags the client as malicious. Zhang et al. also leveraged the Cauchy mean value theorem and the L-BFGS algorithm to predict a client's model update, but they used the same approximate Hessian matrix for all clients, which we experimentally found to be ineffective for model recovery, e.g.,  accuracy of the recovered model may be nearly random guessing.

Detecting malicious clients is also related to Sybil detection in distributed systems~\cite{douceur2002sybil}. Therefore,  conventional Sybil detection methods could also be used to detect  malicious clients, where  malicious clients are treated as Sybil. In particular, these Sybil detection methods (e.g.,~\cite{wang2018ghost,yu2006sybilguard,yuan2019detecting,gong2014sybilbelief,wang2018graph}) leverage the clients' IPs, network behaviors, and social graphs if available.

\subsection{Machine Unlearning}
Machine unlearning  aims to make a machine learning model  ``forget'' some training examples. 
For instance,  a user may desire a model to forget its data for privacy concerns. 
Multiple methods~\cite{bourtoule2019machine,cao2015towards,wu2020deltagrad} have been proposed for efficient machine unlearning. 
For instance, Cao et al. \cite{cao2015towards} proposed to transform the learning algorithm used to train a machine learning model into a summation form. Therefore, only a small number of summations need to be updated to unlearn a training example. 
Bourtoule et al. \cite{bourtoule2019machine} 
broke the model training into an aggregation of multiple constituent models and each training example only contributes to one constituent model. Therefore, only one constituent model needs to be  retrained when unlearning a training example.
Wu et al. \cite{wu2020deltagrad} proposed DeltaGrad that estimates the gradient of the loss function on the remaining training examples using the gradient on the  training examples to be unlearnt. 

Model recovery from poisoning attacks in FL is related to machine unlearning. In particular, model recovery can be viewed as unlearning the detected malicious clients, i.e., making the global model forget the model updates from the detected malicious clients. However, existing machine unlearning methods are insufficient for FL because 1) 
they require changing the FL algorithm to train multiple constituent models and are inefficient when multiple constituent models involve detected malicious clients and thus require retraining~\cite{bourtoule2019machine}, and/or 2) they require access to the clients' private local training data~\cite{cao2015towards,wu2020deltagrad}.

\section{Problem Definition}
\subsection{Threat Model}
We follow the threat model considered in previous studies on poisoning attacks to FL \cite{bagdasaryan2020backdoor,bhagoji2019analyzing,fang2019local,cao2022mpaf}. Specifically, we  discuss in detail the attacker's goals, capabilities, and background knowledge. 

\myparatight{Attacker's goals} In an untargeted poisoning attack, the attacker's goal is to increase the test error rate of the global model indiscriminately for a large number of test inputs. 
In a targeted poisoning attack, the attacker's goal is to poison the global model such that it predicts an attacker-chosen target label for attacker-chosen target test inputs but the predictions for other test inputs are unaffected. For instance, in a category of targeted poisoning attacks also known as backdoor attacks, the target test inputs  include any input embedded with an attacker-chosen trigger, e.g., a feature pattern.

\myparatight{Attacker's capabilities} 
We assume the attacker controls some malicious clients but does not compromise the server. 
The malicious clients could be  fake clients injected into the FL system by the attacker or  genuine clients in the FL system compromised by the attacker. The malicious clients can send arbitrary model updates to the server.

\myparatight{Attacker's background knowledge}
There are two common settings for the attacker's background knowledge about the FL system~\cite{fang2019local}, i.e., \emph{partial-knowledge setting} and \emph{full-knowledge setting}. 
The partial-knowledge setting  assumes the attacker knows the global model, the loss function, as well as local training data and model updates on the malicious clients. The full-knowledge setting further assumes the attacker knows the local training data and model updates on all clients as well as the server's aggregation rule. 
The poisoning attacks are often stronger in the full-knowledge setting than in the 
partial-knowledge setting. In this work, we consider strong poisoning attacks in the full-knowledge setting.

\subsection{Design Goals}
We aim to design an accurate and efficient model recovery method for FL. 
We use train-from-scratch as a baseline to measure the accuracy and  efficiency of a recovery method. Our method should  recover a global model as accurate as the one recovered by  train-from-scratch, while incurring less client-side computation and communication cost. Specifically, our design goals are as follows:

\myparatight{Accurate}
The global model recovered by our recovery method should be accurate. In particular, for untargeted poisoning attacks, the  test error rate of the recovered global model should be close to that of the  global model recovered by train-from-scratch. For  targeted poisoning attacks, we further require that the attack success rate  for  the global model recovered by our method should be as low as that for the global model recovered by  train-from-scratch.  

\myparatight{Efficient}
Our recovery method should incur small client-side computation and communication cost.  
We focus on the client-side efficiency because clients are usually resource-constrained devices. 
Model recovery introduces a unit of communication and computation cost to a client when it is asked to compute its exact model update in a round. 
Therefore, we measure the efficiency of a recovery method  by the number of rounds in which the clients are asked to compute their exact model updates. We aim to design an efficient recovery method that requires the clients to compute their exact model updates  only in a small fraction of rounds. Note that our method incurs an acceptable computation and storage cost for the server.

\myparatight{Independent of detection methods}
Different detection methods have been proposed to detect malicious clients. 
Moreover, new  detection methods may be developed in the future.  
Therefore, we aim to design a general recovery method that is compatible with any detection method. 
Specifically, all detection methods predict a list of malicious clients and our recovery method should be able to recover a global model using this list without any other information about the detection process. 
In practice, a detector may miss some malicious clients (i.e., false negatives) or incorrectly detect some benign clients as malicious (i.e., false positives). 
Our recovery method should still be as accurate as and more efficient than train-from-scratch  when the detector's false negative rate and false positive rate are non-zero.

\myparatight{Independent of aggregation rules}
Various aggregations rules have been proposed in FL and the poisoned global models might be trained using different aggregation rules. Therefore, we aim to design a general recovery method that is compatible with any aggregation rule. Our recovery method should not rely on the FL's aggregation rule. In particular, during the recovery process, we use the same aggregation rule as the one used for training the poisoned global model.

\subsection{Server Requirements}

We assume the server has storage capacity to save the global models and  clients' model updates that the server collected when training the poisoned global model before the malicious clients are detected. We also assume the server has computation power to estimate the clients' model updates during recovery. These requirements are reasonable since the server (e.g., a  data center) is often powerful. We will discuss more details about the cost for the server in Section \ref{sec:server_cost}.

\section{FedRecover}

\subsection{Overview}
\label{sec:overview}
After the detected malicious clients are removed, FedRecover initializes a new global model and trains it iteratively in multiple rounds. In each round, FedRecover \emph{simulates} the FL's three steps we discussed in Section~\ref{sec:backgroundFL} on the server. \xc{Instead of asking the remaining clients to compute and communicate the model updates, the server estimates the model updates using the stored historical information, including the original global models and the original model updates. }
The estimation errors in the clients' model updates  may accumulate in multiple rounds, eventually leading to an inaccurate recovered global model. Therefore, we further propose several strategies, including warm-up, periodic correction, abnormality fixing, and final tuning to optimize FedRecover. In these strategies, the server asks the clients to compute their exact model updates instead of estimating them in the first several rounds of the recovery process, periodically in every certain number of rounds, when the estimated model updates are abnormal, and in the last few rounds, respectively. Theoretically, we can bound the difference between the global model recovered by  FedRecover and the global model recovered by train-from-scratch under some assumptions; and we show that  such difference decreases exponentially as FedRecover increases the computation/communication cost for the clients.

\subsection{Estimating Clients' Model Updates}
\label{sec:estimate_updates}
\myparatight{Notations} We first define some notations (shown in Table~\ref{tab:notation} in Appendix) that will be useful to describe our method. We call the global models and the clients' model updates the server collected in the original training (i.e., before detecting malicious clients) \emph{original global models} and \emph{original model updates}. In particular, we use $\bm{\bar{w}}_t$ to denote the original global model and  $\bm{\bar{g}}_t^i$ to denote the original model update reported by the $i$th client in the $t$th round, where $i=1,2,\cdots,n$ and $t=1,2,\cdots, T$.  
Moreover, we use $\bm{\hat{w}}_t$ to denote the recovered global model in the $t$th round of FedRecover. We  use $\bm{g}_t^i$ to denote the $i$th client's exact model update in the $t$th round of the recovery process if the client computes it, i.e., $\bm{g}_t^i=\frac{\partial \mathcal{L}_i(\bm{\hat{w}}_t)}{\partial \bm{\hat{w}}_t}$. In train-from-scratch, the server asks each client to compute and communicate $\bm{g}_t^i$ in Step II of the FL framework.  
In FedRecover, the server stores $\bm{\bar{w}}_t$, $\bm{\bar{g}}_t^i$, and $\bm{\hat{w}}_t$, where $i=1, 2, \cdots, n$ and $t=1,2,\cdots,T$; and the server uses them to estimate $\bm{g}_t^i$ instead of asking a client to compute it in Step II of the FL framework. We denote the estimated version of $\bm{g}_t^i$ as  $\bm{\hat{g}}_t^i$. Next, we discuss how to estimate $\bm{\hat{g}}_t^i$. 

\myparatight{Calculating model updates using the Cauchy mean value theorem}
\xc{Based on the integral version of the Cauchy mean value theorem (Theorem 4.2 on page 341 in \cite{lang1993real})},\footnote{We note that this theorem requires $\bm{g}_t^i$ to be continuously differentiable.} we can calculate the exact model update $\bm{g}_t^i$ as follows:
\begin{align}
\label{approximate_model_update}
    \bm{g}_t^i = \bm{\bar{g}}_t^i + \mathbf{H}_t^i (\bm{\hat{w}}_t - \bm{\bar{w}}_t),
\end{align}
where $\mathbf{H}_t^i=\int_0^1 \mathbf{H}(\bm{\bar{w}}_t+z(\bm{\hat{w}}_t - \bm{\bar{w}}_t))dz$ is an integrated Hessian matrix for the $i$th client in the $t$th round.  
\xc{Intuitively, the gradient $\bm{g}$ is a function of the model parameters $\bm{w}$. The difference between the function values $\bm{g}^i_t - \bar{\bm{g}}^i_t$ can be characterized by the difference between the variables $\hat{\bm{w}}_t - \bar{\bm{w}}_t$ and the integrated gradient of the function $\bm{g}$ along the line between the variables, i.e.,  $\mathbf{H}_t^i$.} 
Note that the equation above  involves  an integrated Hessian matrix, which is challenging to compute exactly. To address the challenge, we leverage an efficient L-BFGS algorithm to compute an approximate Hessian matrix. Next, we discuss how to approximate an integrated Hessian matrix.

\myparatight{Approximating an integrated Hessian matrix using an L-BFGS algorithm}
In optimization, L-BFGS algorithm \cite{nocedal1980updating} is a popular tool to approximate a Hessian matrix \xc{or its inverse}. 
The L-BFGS algorithm needs the differences of the global models and the model updates in the past rounds to make the approximation in the current round. Specifically, we define the \emph{global-model difference} in the $t$th round as $\Delta\bm{w}_t=\bm{\hat{w}}_t - \bm{\bar{w}}_t$, and the \emph{model-update difference} of the $i$th client in the $t$th round  as $\Delta\bm{g}_t^i=\bm{g}_t^i - \bm{\bar{g}}_t^i$. Note that a global-model difference measures the difference between the recovered global model and the original global model in a round, while a model-update difference measures the difference between a client's exact model update and original model update in a round. 
The L-BFGS algorithm  maintains a buffer of the global-model differences in the $t$th round $\Delta\bm{W}_t=[\Delta\bm{w}_{b_1}, \Delta\bm{w}_{b_2}, \cdots, \Delta\bm{w}_{b_s}]$, where $s$ is the buffer size.
Moreover, for each client $i$, the L-BFGS algorithm maintains a buffer of the model-update differences $\Delta\bm{G}_t^i=[\Delta\bm{g}_{b_1}^i, \Delta\bm{g}_{b_2}^i, \cdots, \Delta\bm{g}_{b_s}^i]$. 
The L-BFGS algorithm takes $\Delta\bm{W}_t$ and $\Delta\bm{G}_t^i$ as an input and outputs an approximate Hessian matrix $\bm{\Tilde{H}}_t^i$ for the $i$th client in the $t$th round, i.e., $\bm{\Tilde{H}}_t^i=\text{L-BFGS}(\Delta\bm{W}_t,\Delta\bm{G}_t^i)$. 

\xc{Note that the size of the Hessian matrix is the square of the number of global model parameters, and thus the Hessian matrix may be too large to store in memory when the global model is deep neural network. Moreover, in practice, the product of the Hessian matrix and a vector $\bm{v}$ is usually desired, which is called Hessian-vector product. For instance, in FedRecover, we aim to find $\bm{H}_t^i\bm{v}$, where $\bm{v}=\bm{\hat{w}}_t - \bm{\bar{w}}_t$. Therefore, modern implementation of the L-BFGS algorithm \cite{byrd1995limited} takes the vector $\bm{v}$ as an additional input and directly approximates the Hessian-vector product in an efficient way, i.e., $\bm{\Tilde{H}}_t^i\bm{v}=\text{L-BFGS}(\Delta\bm{W}_t,\Delta\bm{G}_t^i, \bm{v})$. }
We use the algorithm in \cite{byrd1995limited}, whose details can be found in Algorithm \ref{alg:lbfgs} in Appendix. \xc{There are other variants and implementations \cite{nocedal1980updating,schraudolph2007stochastic} of L-BFGS. However, they approximate the \emph{inverse}-Hessian-vector product instead of the Hessian-vector product, and thus are not applicable to FedRecover.}
After obtaining the \xc{approximate Hessian-vector product $\bm{\Tilde{H}}_t^i(\bm{\hat{w}}_t - \bm{\bar{w}}_t)$}, we can compute the estimated model update as 
$\bm{\hat{g}}_t^i = \bm{\bar{g}}_t^i + \bm{\Tilde{H}}_t^i (\bm{\hat{w}}_t - \bm{\bar{w}}_t)$.

Note that in the standard L-BFGS algorithm, the buffer of the global-model differences (or model-update differences) in the $t$th round consist of the global-model differences (or model-update differences) in the previous $s$ rounds, i.e., $b_j=t-s+j-1$. 
This standard L-BFGS algorithm faces a key challenge: it requires the exact model update $\bm{g}_t^i$ in each round in order to calculate the buffer of the model-update differences, but our goal is to avoid asking the clients to compute their exact model updates in most rounds. Next, we propose several optimization strategies to address the challenge.

\subsection{Optimization Strategies}
\label{sec:optimization}

\myparatight{Warm-up} Our first optimization strategy is to warm-up the L-BFGS algorithm in the first several rounds of the recovery process. 
In particular, in the first $T_w > s$ rounds, the server asks the clients to compute their exact model updates  $\bm{g}_t^i$, and uses them to update the recovered global model. Based on the last $s$ warm-up rounds, the server computes the buffer $\Delta\bm{W}_t$ of the global-model differences and the buffer $\Delta\bm{G}_t^i$ of the model-update differences for each client $i$. Then, in the future rounds, the server can use the L-BFGS algorithm with these buffers to compute the approximate Hessian matrices, then uses the approximate Hessian matrices to compute the estimated model updates, and finally uses the estimated model updates to update the recovered global model. However,  the buffers constructed based on the warm-up rounds may be outdated for  the future rounds, which leads to inaccurate approximate Hessian matrices, inaccurate estimated model updates, and eventually inaccurate recovered global model. To address the challenge, we further propose periodic correction and abnormality fixing strategies, which we discuss next.

\myparatight{Periodic correction and abnormality fixing} 
In periodic correction,  the server asks each client to periodically compute its exact model update in every $T_c$ rounds after warm-up. In abnormality fixing, the server asks a client to compute its exact model update in a round if the estimated model update is abnormally large, i.e., if at least one coordinate of the estimated model update  
is larger than $\tau$, which we call the \emph{abnormality threshold}. A large estimated model update has a large influence on the recovered global model, and thus a large incorrectly estimated model update would negatively influence the recovered global model substantially. Therefore, we consider the abnormality fixing strategy to limit the impact of potentially incorrectly estimated model updates.  

Our abnormality fixing strategy may also treat  correctly estimated large model updates   as abnormal if the abnormality threshold $\tau$ is too small, which increases computation/communication cost for the clients. Therefore, we select $\tau$ based on the historical information. Specifically, for each round $t$, we collect the original model updates $\bm{\bar{g}}^i_t$ of all clients $i$ who participant in the recovery. We select $\tau_t$ such that at most $\alpha$ fraction of parameters in the clients' original model updates $\bm{\bar{g}}^i_t$ are greater than $\tau_t$. Then we choose $\tau$ as the largest value among $\tau_t$, i.e., $\tau=\max_t\{\tau_t\}$. Here,  the probability of a parameter in benign model updates being treated as abnormal is no greater than $\alpha$ in any round, and we call $\alpha$ the \emph{tolerance rate} since we allow at most $\alpha$ fraction of such mistreatment.

\myparatight{Final tuning} We find that if we terminate the training with a round of estimated model updates, the performance of the recovered global model could be unstable due to the potential estimation error. Therefore, we further propose the final tuning strategy, where the server asks the clients to compute their exact model updates in the last $T_f$ rounds before the training ends. As we will show in experiments, only a small number of rounds (e.g., $T_f=5$) are needed to ensure a good performance of the recovered global model.

We note that, when some malicious clients are not detected by the malicious-client detection method, they can still perform poisoning attacks in the warm-up, periodic correction, abnormality fixing, and final tuning rounds. However, our experiments will show that FedRecover can still recover an accurate global model in such scenarios. This is because the number of warm-up, periodic correction, abnormality fixing, and final tuning rounds is small.

\myparatight{Updating the buffers of the L-BFGS algorithm} Recall that the buffers of the L-BFGS algorithm require the clients' exact model updates. Therefore, we only update the buffer $\Delta\bm{W}_t$ after the the server asks {all} clients to compute their model updates, and update the buffer $\Delta\bm{G}_t^i$ after the server asks the $i$th client to compute its exact model update. \xc{Note that the clients only compute their exact model updates for warm-up, periodic correction, abnormality fixing, or final tuning. In the $t$th round, $\Delta\bm{W}_t$ contains the global-model differences in the previous $s$ rounds, in which all clients compute their exact model updates; and $\Delta\bm{G}_t^i$ contains the model-update differences of the $i$th client in the previous $s$ rounds, in which the $i$th client computes its exact model updates.}

\subsection{Complete Algorithm}
Algorithm \ref{alg:fedrecover} in Appendix shows our complete algorithm of FedRecover. 
Without loss of generality, we assume the first $m$ clients are malicious. 
In the first $T_w$ warm-up rounds, the server follows the three steps of the FL framework discussed in Section~\ref{sec:backgroundFL} to update the recovered global model. In each round $t$ after warm-up, the server first updates the buffers of the L-BFGS algorithm as discussed in Section~\ref{sec:optimization} if the server asked the clients to compute the exact model updates in the previous round $t-1$. Then, the server uses periodic correction or the estimated model updates to update the recovered global model. If at least one coordinate of an estimated model update is larger than the abnormality threshold $\tau$, the client is asked to compute the exact model update. Finally, before the server terminates the training process, it asks the clients to compute exact model updates for final tuning.

\subsection{Theoretical Analysis}
\label{sec:theo}
We first analyze the computation and communication cost for the clients introduced by both train-from-scratch and FedRecover. Then, we show that the difference between the global model recovered by FedRecover and the global model recovered by  train-from-scratch can be bounded in each round under some assumptions. Finally, we show the connection between such difference and the computation/communication cost for the clients, i.e., the trade-off between the accuracy of the recovered global model and the computation/communication cost for the clients in FedRecover.  We note that our theoretical bound analysis is based on some assumptions, which may not hold for complex models such as neural networks. Therefore, we empirically evaluate FedRecover for neural networks in the next section. 

\myparatight{Computation and communication cost for the clients} When a client is asked to compute model update, we introduce some computation and communication cost to the client. Moreover, such computation/communication cost roughly does not depend on which round the client is asked to compute model update. Therefore, we can view such cost as an unit of cost.  Train-from-scratch asks each client to compute model update in each round. Therefore, the average computation/communication cost per client for train-from-scratch is $O(T)$, where $T$ is the total number of rounds. 
In FedRecover, the cost depends on the number of warm-up rounds $T_w$, the periodic correction parameter $T_c$, the number of rounds in which the abnormality fixing is triggered, and the number of final tuning rounds $T_f$. The number of rounds for abnormality fixing depends on dataset, FL method, and the threshold $\tau$, which makes it hard to theoretically analyze the cost for FedRecover. However, when the abnormality fixing is not used, i.e., $\tau=\infty$, we can show that the average computation/communication cost per client for FedRecover is $O(T_w + T_f + \lfloor(T-T_w-T_f)/T_c\rfloor)$. 

\myparatight{Bounding the difference in the global models recovered by FedRecover and train-from-scratch} We first describe the assumptions that our theoretical analysis is based on. Then, we show our bound for the difference in the global models recovered by FedRecover and train-from-scratch.

\begin{assumption}
    The loss function  is $\mu$-strongly convex and $L$-smooth. Formally, for each client $i$, we have the following two inequalities for any $\bm{w}$ and $\bm{w}'$:
    \begin{align}
    \langle\bm{w}-\bm{w}', \nabla\mathcal{L}_i(\bm{w})-\nabla\mathcal{L}_i(\bm{w}')\rangle &\ge \mu\Vert\bm{w}-\bm{w}'\Vert^2,\\
     \langle\bm{w}-\bm{w}', \nabla\mathcal{L}_i(\bm{w})-\nabla\mathcal{L}_i(\bm{w}')\rangle &\ge \frac{1}{L}\Vert\nabla\mathcal{L}_i(\bm{w})-\nabla\mathcal{L}_i(\bm{w}')\Vert^2,
\end{align}
where $\mathcal{L}_i$ is the loss function for client $i$, $ \langle\cdot, \cdot\rangle$ represents inner product of two vectors, and $\Vert\cdot \Vert$ represents $\ell_2$ norm of a vector. 
\label{as:convex}
\end{assumption}

\begin{assumption}
The error of approximating a \xc{Hessian-vector product} in the L-BFGS algorithm is bounded. Formally, each approximated \xc{Hessian-vector product} satisfies the following: 
    \begin{align}
        \forall i, \forall t, \Vert\bm{\Tilde{H}}^i_t(\bm{\hat{w}}_t-\bm{\bar{w}}_t) + \bm{\bar{g}}^i_t - \bm{g}^i_t\Vert \le M,
    \end{align}
where M is a finite positive value.
    \label{as:approx}
\end{assumption}

\begin{thm}
Suppose Assumption 1-2 hold, 
FedAvg is used as the aggregation rule, the threshold $\tau=\infty$ (i.e., abnormality fixing is not used), the learning rate $\eta$ satisfies $\eta\le\text{min}(\frac{1}{\mu}, \frac{1}{L})$,  and all malicious clients are detected. Then, the  difference between the  global model recovered by FedRecover and that recovered by  train-from-scratch  in each round $t>0$ can be bounded as follows: 
\begin{align}
    \Vert\bm{\hat{w}}_{t}-\bm{w}_{t}\Vert &\le  (\sqrt{1-\eta\mu})^{t}\Vert\bm{\hat{w}}_{0}-\bm{w}_{0}\Vert%\nonumber\\
    + \frac{1-(\sqrt{1-\eta\mu})^{t}}{1-\sqrt{1-\eta\mu}}\eta M,
    \label{eq:thm}
\end{align}
where $\bm{\hat{w}}_{t}$ and $\bm{w}_{t}$ respectively are the global models recovered by FedRecover and train-from-scratch in round $t$.
\label{thm}
\end{thm}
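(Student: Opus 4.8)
The plan is to set up a one-step recursion for $\bm{\delta}_t := \bm{\hat{w}}_t - \bm{w}_t$ and then unroll it into a geometric series. Since $\tau=\infty$ and all malicious clients are removed, both FedRecover and train-from-scratch run FedAvg over the \emph{same} set of benign clients with weights $p_i$ (summing to one), so subtracting the two update rules gives
\begin{align}
\bm{\delta}_{t+1} = \bm{\delta}_t - \eta\sum_i p_i\bigl(\bm{\hat{g}}_t^i - \nabla\mathcal{L}_i(\bm{w}_t)\bigr).
\end{align}
The first move is to split each summand as $\bm{\hat{g}}_t^i - \nabla\mathcal{L}_i(\bm{w}_t) = \bigl(\bm{\hat{g}}_t^i - \nabla\mathcal{L}_i(\bm{\hat{w}}_t)\bigr) + \bigl(\nabla\mathcal{L}_i(\bm{\hat{w}}_t) - \nabla\mathcal{L}_i(\bm{w}_t)\bigr)$, where the first term is exactly the L-BFGS estimation error bounded by $M$ in Assumption~2 (and is $\bm{0}$ in rounds where exact updates are computed, so the bound holds uniformly over all rounds), and the second term is the genuine gradient gap between the two models evaluated at their respective iterates.

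Next I would collapse the per-client sums into a single aggregate function. Define $\mathcal{F}(\bm{w}) = \sum_i p_i\mathcal{L}_i(\bm{w})$, whose gradient is the FedAvg-weighted average $\sum_i p_i\nabla\mathcal{L}_i(\bm{w})$. A short check shows $\mathcal{F}$ inherits both inequalities of Assumption~1: $\mu$-strong convexity follows by taking the convex combination of the first inequality, and the co-coercivity bound (the second inequality with constant $1/L$) follows from the same convex combination together with Jensen's inequality $\sum_i p_i\Vert\bm{a}_i\Vert^2 \ge \Vert\sum_i p_i\bm{a}_i\Vert^2$ applied to the squared norm on the right-hand side. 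This lets me treat $\sum_i p_i\bigl(\nabla\mathcal{L}_i(\bm{\hat{w}}_t) - \nabla\mathcal{L}_i(\bm{w}_t)\bigr)$ as $\nabla\mathcal{F}(\bm{\hat{w}}_t) - \nabla\mathcal{F}(\bm{w}_t)$ for a single $\mu$-strongly convex, $L$-smooth $\mathcal{F}$.

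The core estimate, which I expect to be the main obstacle, is the one-step gradient-descent contraction
\begin{align}
\Bigl\Vert \bm{\delta}_t - \eta\bigl(\nabla\mathcal{F}(\bm{\hat{w}}_t) - \nabla\mathcal{F}(\bm{w}_t)\bigr)\Bigr\Vert \le \sqrt{1-\eta\mu}\,\Vert\bm{\delta}_t\Vert.
\end{align}
I would prove it by expanding the squared norm as $\Vert\bm{\delta}_t\Vert^2 - 2\eta\langle\bm{\delta}_t,\bm{h}\rangle + \eta^2\Vert\bm{h}\Vert^2$, with $\bm{h}=\nabla\mathcal{F}(\bm{\hat{w}}_t) - \nabla\mathcal{F}(\bm{w}_t)$, then spending one copy of the inner product against strong convexity to expose the target $-\eta\mu\Vert\bm{\delta}_t\Vert^2$, and using co-coercivity together with $\eta\le 1/L$ to dominate the stray $\eta^2\Vert\bm{h}\Vert^2$ term by the remaining inner product. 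The condition $\eta\le 1/\mu$ keeps $1-\eta\mu\ge 0$ so the square root is real; the delicate part is the exact bookkeeping of how much of $\langle\bm{\delta}_t,\bm{h}\rangle$ is allocated to each of the two assumptions.

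Finally, combining this contraction with the triangle inequality and the bound $\sum_i p_i\Vert\bm{\hat{g}}_t^i - \nabla\mathcal{L}_i(\bm{\hat{w}}_t)\Vert \le M$ (weights sum to one) yields the scalar recursion $\Vert\bm{\delta}_{t+1}\Vert \le \sqrt{1-\eta\mu}\,\Vert\bm{\delta}_t\Vert + \eta M$. Unrolling from round $0$ gives $\Vert\bm{\delta}_t\Vert \le (\sqrt{1-\eta\mu})^t\Vert\bm{\delta}_0\Vert + \eta M\sum_{j=0}^{t-1}(\sqrt{1-\eta\mu})^j$, and summing the finite geometric series produces exactly the claimed bound in \eqref{eq:thm}.
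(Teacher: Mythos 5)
Your proposal is correct, and its skeleton matches the paper's proof: both split the one-step difference between the two recovered models into an exact-gradient contraction term plus the L-BFGS estimation error (bounded by $\eta M$ via Assumption 2 and the triangle inequality), both establish the $\sqrt{1-\eta\mu}$ contraction by expanding the squared norm and spending the inner product once against strong convexity and once against co-coercivity under $\eta\le 1/L$, and both finish with the same scalar recursion and geometric sum. The genuine difference is where Jensen's inequality enters and how the rounds are organized. The paper case-splits between exact rounds (warm-up, periodic correction, final tuning) and estimated rounds, and keeps the per-client weighted sums inside the norm expansion; you aggregate first, defining $\mathcal{F}=\sum_i p_i\mathcal{L}_i$ and proving it inherits $\mu$-strong convexity and $1/L$-co-coercivity, which lets you run the contraction once at the function level and treat all rounds uniformly, since the estimation error vanishes in exact rounds. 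Your packaging buys more than elegance: the paper's inline step bounds $\Vert\sum_i p_i \bm{a}_i\Vert^2$ by $\sum_i p_i^2\Vert\bm{a}_i\Vert^2$ (calling it a triangle inequality), which is false in general (take all $\bm{a}_i$ equal, with weights summing to one), whereas your function-level Jensen bound $\Vert\sum_i p_i\bm{a}_i\Vert^2 \le \sum_i p_i\Vert\bm{a}_i\Vert^2$ is the correct inequality; with it, the leftover co-coercivity surplus still has the right sign under exactly the stated condition $\eta\le 1/L$, so your route reaches the identical bound while quietly repairing the one shaky step in the paper's derivation.
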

\begin{proof}
Our idea is to recursively bound the difference in each round. Appendix \ref{ap:proof} shows the detailed  proof. 
\end{proof}

Given Theorem~\ref{thm}, we have $\lim_{t\rightarrow \infty}\Vert\bm{\hat{w}}_{t}-\bm{w}_{t}\Vert \le  
     \frac{\eta M}{1-\sqrt{1-\eta\mu}}$. Moreover, we have the following corollary: 
\begin{coro}
\label{coro11}
    When the L-BFGS algorithm can exactly compute the integrated \xc{ Hessian-vector product} (i.e., $M=0$),  the difference between 
   the  global model recovered by FedRecover and that  recovered by  train-from-scratch is bounded as $\Vert\bm{\hat{w}}_{t}-\bm{w}_{t}\Vert \le  (\sqrt{1-\eta\mu})^{t}\Vert\bm{\hat{w}}_{0}-\bm{w}_{0}\Vert$. Therefore, the  global model recovered by FedRecover converges to the global model recovered by  train-from-scratch, i.e., we have $\lim_{t\rightarrow \infty} \bm{\hat{w}}_{t}= \lim_{t\rightarrow \infty} \bm{w}_{t}$.
\end{coro}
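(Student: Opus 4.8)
The plan is to obtain the corollary as a direct specialization of Theorem~\ref{thm}, since the genuinely hard work—the recursive bounding of $\Vert\bm{\hat{w}}_{t}-\bm{w}_{t}\Vert$ across rounds—has already been carried out there. First I would simply set $M=0$ in the bound of Eq.~\eqref{eq:thm}. The second summand $\frac{1-(\sqrt{1-\eta\mu})^{t}}{1-\sqrt{1-\eta\mu}}\eta M$ carries $M$ as a multiplicative factor, so it vanishes identically, leaving $\Vert\bm{\hat{w}}_{t}-\bm{w}_{t}\Vert \le (\sqrt{1-\eta\mu})^{t}\Vert\bm{\hat{w}}_{0}-\bm{w}_{0}\Vert$, which is exactly the claimed bound. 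This step is purely algebraic and invokes no new estimates, since $M=0$ is precisely the hypothesis that the L-BFGS algorithm computes the integrated Hessian-vector product without error (Assumption~\ref{as:approx} with $M=0$).

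The second part—convergence of the two recovered models to a common limit—follows by letting $t\rightarrow\infty$ in this bound, and the only thing to verify carefully is that the geometric factor $\sqrt{1-\eta\mu}$ lies strictly inside $[0,1)$. I would argue this directly from the hypotheses: strong convexity in Assumption~\ref{as:convex} forces $\mu>0$, while the learning-rate condition $\eta\le\min(\tfrac{1}{\mu},\tfrac{1}{L})$ together with $\eta>0$ gives $0<\eta\mu\le 1$. Hence $0\le 1-\eta\mu<1$, so $0\le\sqrt{1-\eta\mu}<1$ and consequently $(\sqrt{1-\eta\mu})^{t}\rightarrow 0$ (including the edge case $\eta\mu=1$, where the factor is $0$ and $0^{t}=0$ for every $t>0$). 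Taking the limit then yields $\lim_{t\rightarrow\infty}\Vert\bm{\hat{w}}_{t}-\bm{w}_{t}\Vert=0$, because the right-hand side tends to zero and the norm on the left is nonnegative.

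Finally, to pass from the vanishing difference to the statement $\lim_{t\rightarrow\infty}\bm{\hat{w}}_{t}=\lim_{t\rightarrow\infty}\bm{w}_{t}$, I would invoke the classical fact that gradient descent with step size $\eta\le\min(\tfrac{1}{\mu},\tfrac{1}{L})$ on an $L$-smooth, $\mu$-strongly convex objective is contractive toward the unique minimizer, so the train-from-scratch iterate $\bm{w}_{t}$ converges; combining this with $\Vert\bm{\hat{w}}_{t}-\bm{w}_{t}\Vert\rightarrow 0$ through the triangle inequality shows $\bm{\hat{w}}_{t}$ converges to the same limit. The main—and essentially only—obstacle is the edge-case bookkeeping for the geometric rate: one must rule out $\eta\mu=0$ (guaranteed by $\mu>0$ and $\eta>0$) to secure strict contraction, and confirm $1-\eta\mu\ge 0$ (guaranteed by $\eta\le\tfrac{1}{\mu}$) so the square root is real. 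Everything else is immediate from Theorem~\ref{thm}.
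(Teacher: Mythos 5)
Your proposal is correct and matches the paper's own (implicit) argument: the paper states Corollary~\ref{coro11} as an immediate specialization of Theorem~\ref{thm} with $M=0$, exactly as you do. Your additional bookkeeping---checking that $\sqrt{1-\eta\mu}\in[0,1)$ and invoking convergence of the train-from-scratch iterates so that $\lim_{t\rightarrow\infty}\bm{w}_t$ exists (the vanishing difference alone does not give two well-defined equal limits)---is a point the paper glosses over, and it is a welcome tightening rather than a different route.
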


\myparatight{Trade-off between the difference bound and the computation/communication cost} Given Corollary~\ref{coro11}, we have the difference bound as $\Vert\bm{\hat{w}}_{T}-\bm{w}_{T}\Vert \le  (\sqrt{1-\eta\mu})^{T}\Vert\bm{\hat{w}}_{0}-\bm{w}_{0}\Vert$ when FedRecover runs for $T$ rounds. The difference bound decreases exponentially as $T$ increases.  Moreover, the computation/communication cost of FedRecover is linear to $T$  when $\tau=\infty$. Therefore, the difference bound decreases exponentially as the cost increases. In other words, we observe an accuracy-cost trade-off for FedRecover, i.e., the global model recovered by FedRecover is more accurate (i.e., closer to the train-from-scratch global model) when  more cost is introduced for the clients.

\section{Evaluation}

\subsection{Experimental Setup}
\subsubsection{Datasets} 
We consider multiple datasets for different learning tasks in our evaluation. Specifically, we use two image classification datasets (MNIST and Fashion-MNIST), a purchase style prediction dataset (Purchase), and a human activity recognition dataset (HAR). Unless otherwise mentioned, we show experimental results on MNIST for simplicity.  

\myparatight{MNIST}
MNIST \cite{lecun2010mnist} is a 10-class digit image classification dataset, which contains 60,000 training images and 10,000 test images. Both the height and the width of an image are 28. We adopt the Convolutional Neural Network (CNN) in \cite{fang2019local} as the global model architecture. 
In particular, the CNN consists of two convolutional layers,  each of which is followed by a pooling layer, and two fully-connected layers. We assume 100 clients and use the method in \cite{fang2019local} to distribute the training images to them, where the method has a parameter called  \emph{degree of non-iid} that ranges between 0.1 and 1. The clients' local training data are non-iid when the degree of non-iid is larger than 0.1 and are more non-iid when the degree of non-iid is larger. By default, we set the degree of non-iid to 0.5 when distributing the training images to the clients, but we will explore its impact on FedRecover.

\myparatight{Fashion-MNIST}
Fashion-MNIST \cite{xiao2017/online} is another 10-class image classification dataset. Unlike MNIST that contains digit images, Fashion-MNIST contains 70,000 fashion images. The dataset is split into 60,000 training images and 10,000 test images, where the size of each image is $28 \times 28$. We adopt the same CNN as MNIST. Moreover, we also assume 100 clients and we set the default degree of non-iid to 0.5 when distributing the training images to them.

\myparatight{Purchase}
Purchase is a retail dataset released by \cite{purchase}. The task is to predict the purchase style that a customer belongs to. The dataset contains 197,324 purchase records in total, where each record has 600 binary features and belongs to one of the 100 unbalanced classes. The dataset is split into 180,000 training records and 17,324 test records. Following \cite{shejwalkarmanipulating}, we adopt a fully connected neural network with one hidden layer as the global model architecture, where the number of neurons in the hidden layer is 1,024 and the activation function is Tanh.
We also assume there are 100 clients in total. Following \cite{shejwalkarmanipulating}, we evenly distribute the training records to them.

\myparatight{Human activity recognition (HAR)} 
HAR \cite{anguita2013public}
 is a 6-class human activity recognition dataset. The dataset is collected from the smartphones of 30 real-world users. Each data sample consists of 561 features representing the signals collected from multiple sensors of a user's smartphone, and belongs to one of the 6 possible activities (e.g., walking, sitting, and standing). We consider each user in the dataset as a client. Furthermore, following \cite{cao2020fltrust}, we use 75\% of each client's data as local training data and the rest 25\% as test data. We adopt a fully connected neural network with two hidden layers as the global model architecture, where each hidden layer consists of 256 neurons and uses ReLU as the activation function.

\subsubsection{FL Settings} Recall that  the original FL training has three steps in each round. 
We consider clients use stochastic gradient descent to compute model updates. Considering the different characteristics in the datasets, we adopt the following parameter settings for the original FL training: for MNIST and Fashion-MNIST, we train for 2,000 rounds with learning rate $3\times10^{-4}$  and batch size 32; 
for Purchase, we train  for 1,000 rounds with learning rate $1\times 10^{-4}$  and batch size 2,000; and for HAR, we train  for 1,000 rounds with learning rate $3\times10^{-4}$ and batch size 32. We consider three aggregation rules: FedAvg \cite{McMahan17}, Median \cite{Yin18}, and Trimmed-mean \cite{Yin18}.
We do not consider Krum \cite{Blanchard17} because it is neither accurate nor robust \cite{fang2019local,bagdasaryan2020backdoor}, and we do not consider FLTrust \cite{cao2020fltrust}  
as it requires an additional clean dataset for the server. 
We set the trim parameter $k= n\times 20\%$ in Trimmed-mean for all  datasets. In particular,  $k$ is respectively 20, 20, 20, and 6 for MNIST, Fashion-MNIST, Purchase, and HAR datasets.

\begin{figure*}[!t]
    \centering
    \includegraphics[width=0.99\textwidth]{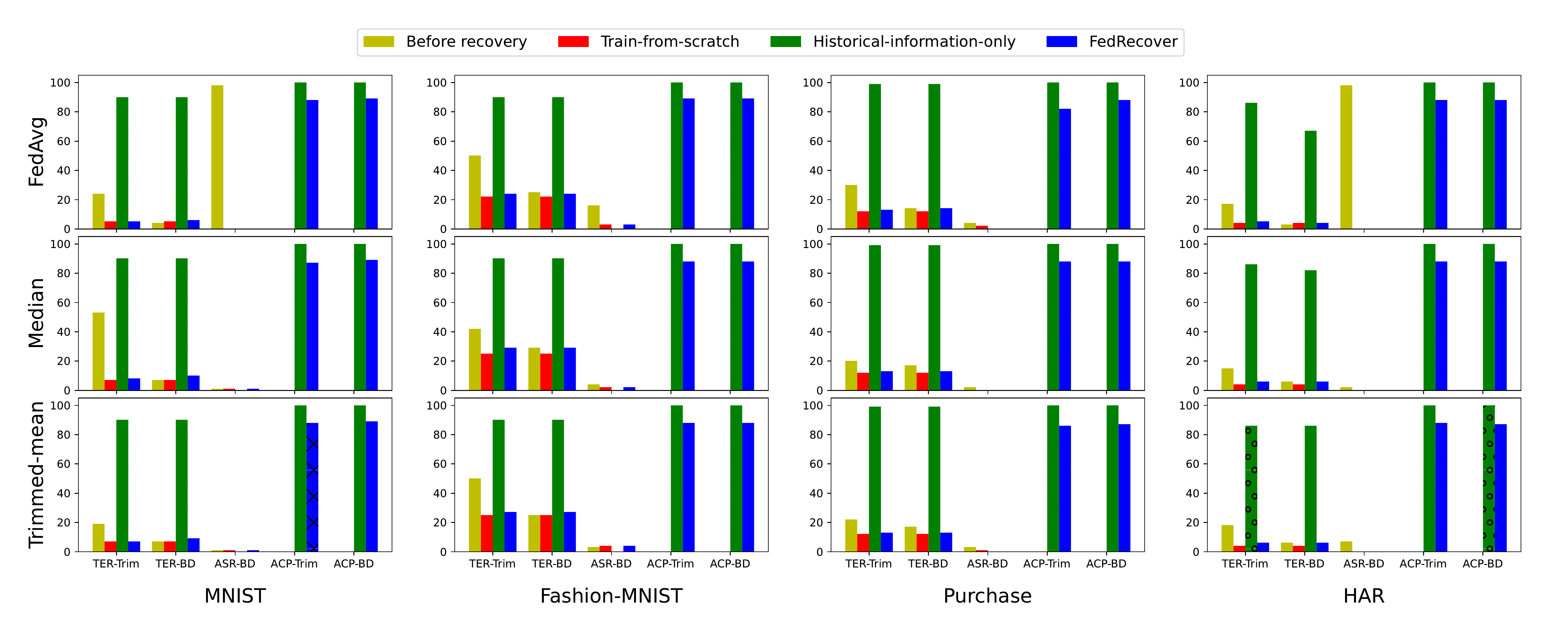}
    \caption{\xc{The test error rate (TER), attack success rate (ASR), and average cost-saving percentage (ACP) of train-from-scratch, historical-information-only, and FedRecover for the four datasets, three FL methods, and two attacks. ``-Trim" and ``-BD" represent the results for recovery from Trim attack and backdoor attack, respectively. Smaller TER and ASR imply better accuracy and larger ACP implies better efficiency.}}
    \label{fig:main_results}
\end{figure*}

\subsubsection{Attack Settings}
By default, we randomly sample 20\% of the clients as malicious ones. 
Specifically, the number of malicious clients is 20, 20, 20, and 6 for MNIST, Fashion-MNIST, Purchase, and HAR datasets, respectively. Moreover, we assume an attacker performs full-knowledge attacks.
We consider Trim attack (an untargeted poisoning attack) \cite{fang2019local} and backdoor attack (a targeted poisoning attack) \cite{bagdasaryan2020backdoor}. 
We adopt the default parameter setting for the Trim attack in \cite{fang2019local}. We design the trigger in the backdoor attack by following \cite{cao2020fltrust}. In particular, for MNIST and Fashion-MNIST, we adopt the same white pixels located at the bottom right corner as the trigger. For Purchase and HAR, we set every 20th feature value to 0 as the trigger. We select 0 as the target label for all datasets. 
In the backdoor attack, each malicious client scales its malicious model update. We set the scaling factor to $10$ for MNIST and $5$ for Fashion-MNIST and HAR since the backdoor attack achieves high attack success rates with these settings. We notice that the attack success rates for Purchase are similar when the scaling factor varies from 1 to 100. Therefore, we set the scaling factor to $1$ for Purchase to be more stealthy. The malicious clients perform the Trim attack or backdoor attack in every round of the original FL training.  Moreover, when some malicious clients are not detected, they perform attacks in every warm-up, periodic correction,  abnormality fixing, and final tuning round during the recovery process.

\subsubsection{Recovery Settings}
We adopt the same settings as the original FL training when recovering the global models, including the total number of rounds, the learning rate, the batch size, and the aggregation rule.  FedRecover has the following parameters: the number of warm-up rounds $T_w$, the correction period $T_c$,  the abnormality threshold $\tau$, and the number of final tuning rounds $T_f$. By default, we set $T_w=20$, $T_c=10$,  the tolerance rate $\alpha=1\times 10^{-6}$ to select the threshold $\tau$, and $T_f=5$.  
We use the L-BFGS algorithm with buffer size 2 (i.e., $s=2$) and adopt the public implementation in \cite{wu2020deltagrad} for it. Unless otherwise mentioned, we assume all malicious clients are detected. However, we will explore the effect of the false negative rate (FNR) and the false positive rate (FPR) in malicious clients detection on model recovery.

\subsubsection{Compared Methods}
We compare FedRecover with two baseline methods:

\myparatight{Train-from-scratch} Train-from-scratch removes the detected malicious clients and then follows the standard FL  to retrain a global model from scratch using the remaining clients. By default, we assume a client updates its local model using one mini-batch in a global round. However, we will also explore the impact of the number of local mini-batches. 

\myparatight{Historical-information-only} 
Another baseline is to recover a global model using only the historical information the server has stored. Specifically, the server first initializes a recovered global model.  Then, it uses the remaining clients' original model updates that it has stored to update the recovered global model in each round of the recovery process.  

The two baseline methods represent two extreme cases of model recovery, i.e., train-from-scratch involves the remaining clients in each round of the recovery process while historical-information-only does not involve the clients at all. In other words, train-from-scratch introduces the largest computation/communication cost to the clients while historical-information-only introduces no cost to the clients at all.

\subsubsection{Evaluation Metrics} 
We adopt \emph{test error rate (TER)}, \emph{attack success rate (ASR)}, and \emph{average cost-saving percentage (ACP)} as evaluation metrics. We define them as follows: 

\myparatight{Test error rate (TER)} Given a test dataset and a (recovered or original) global model, TER is the fraction of the test inputs that are incorrectly predicted by the global model.

\myparatight{Attack success rate (ASR)} For backdoor attack, we also use ASR to evaluate a global model. Given a test dataset, we first exclude test inputs whose ground truth labels are the target label. Then, ASR is defined as the fraction of the remaining inputs that are predicted to have the target label  when embedded with the backdoor trigger. We say a recovery method is more accurate if the recovered global model has a smaller TER (and ASR for backdoor attack).

\myparatight{Average cost-saving percentage (ACP)} 
We use ACP to measure the computation/communication cost saving of a recovery method, compared to train-from-scratch. 
Specifically, the total number of rounds in the recovery process is $T$, i.e., each client computes its exact model updates in $T$ rounds in train-from-scratch.  For a given client, we denote by $T_r$ the number of rounds that the client is asked to compute and communicate its exact model updates in a recovery method. Then, we define the cost-saving percentage (CP) for the client as $(T-T_r)/T \times 100\%$. Our ACP is defined as the average cost-saving percentage for the clients. A recovery method is more efficient if its ACP is larger.

\subsection{Experimental Results}
\label{sec:exp_results}
\myparatight{FedRecover is accurate and efficient}
\xc{Figure \ref{fig:main_results}}
shows the TER, ASR, and ACP of  train-from-scratch,  historical-information-only, and FedRecover for the four datasets, three aggregation rules, and two attacks. We observe that FedRecover is both accurate and efficient at recovering the global models from the poisoned ones. In particular, FedRecover can achieve similar TERs and ASRs  with train-from-scratch. Moreover, FedRecover can achieve large ACPs, i.e., FedRecover can significantly reduce the computation/communication cost for the clients.  Historical-information-only does not introduce  cost to the clients (i.e., ACPs are 100) but its recovered global models have large TERs (nearly random guessing).

\begin{figure}[!t]
%  \vspace{-2mm}
    \centering
    {\includegraphics[width=.24\textwidth]{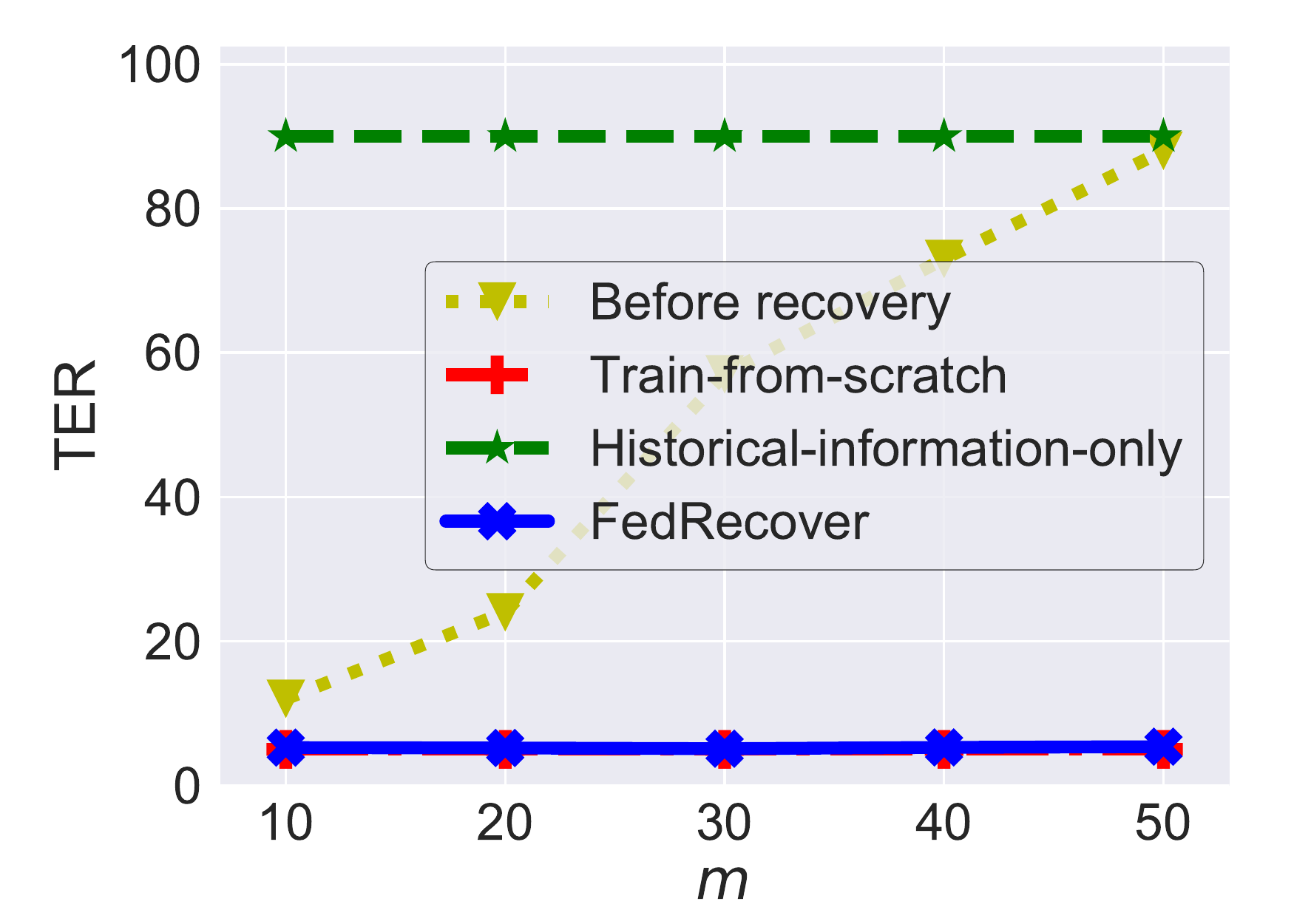}}
      {\includegraphics[width=.24\textwidth]{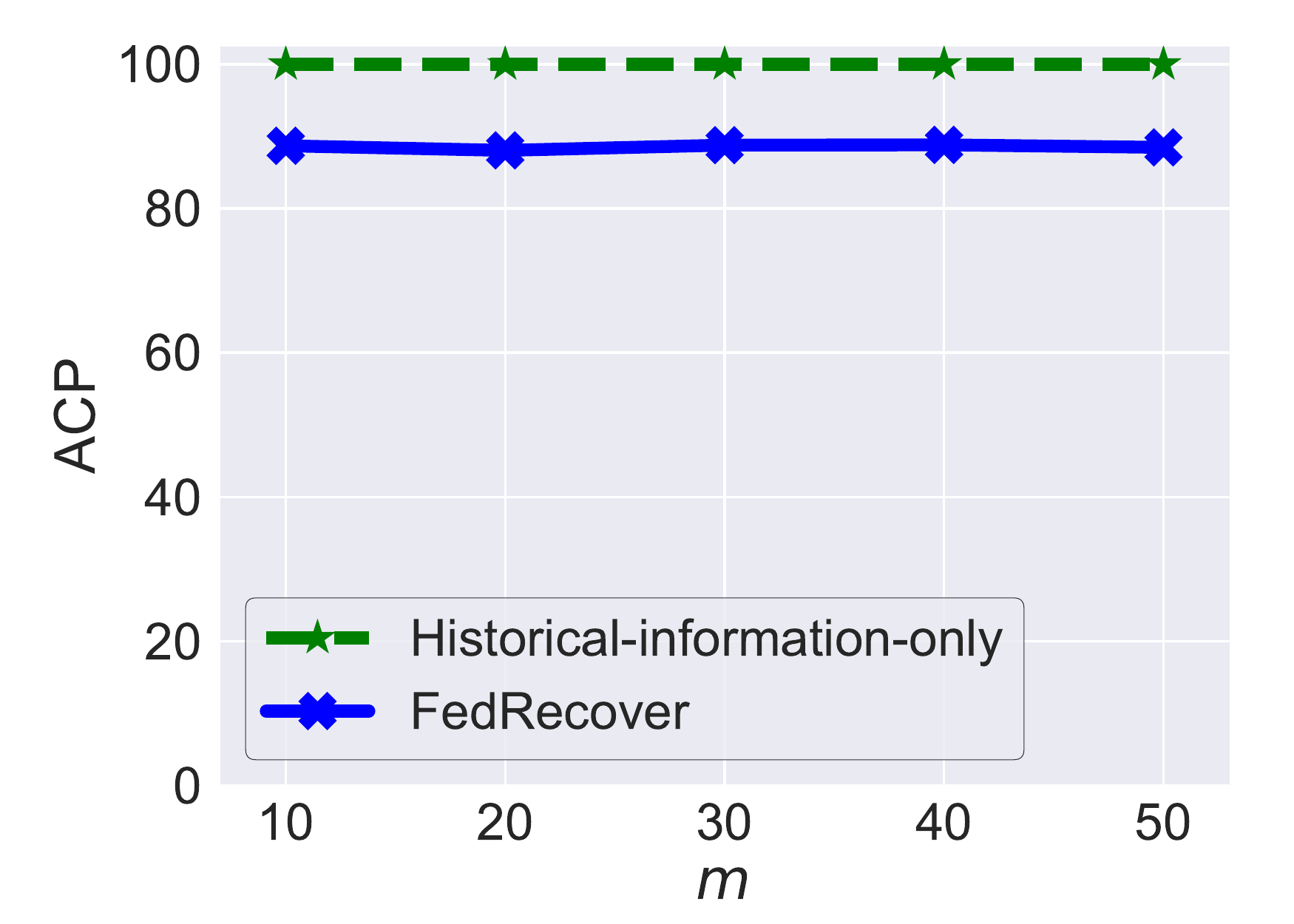}}   \\
  \vspace{-2mm}
    \caption{Effect of the number of malicious clients $m$ on recovery from Trim attack. 
    The aggregation rule is Trimmed-mean. Figure \ref{fig:m_fedavg_median} in Appendix shows the results for FedAvg and Median.}
    \label{fig:m}
      \vspace{-2mm}
\end{figure}

\begin{figure}[!t]
    \centering
    \subfloat{\includegraphics[width=.24\textwidth]{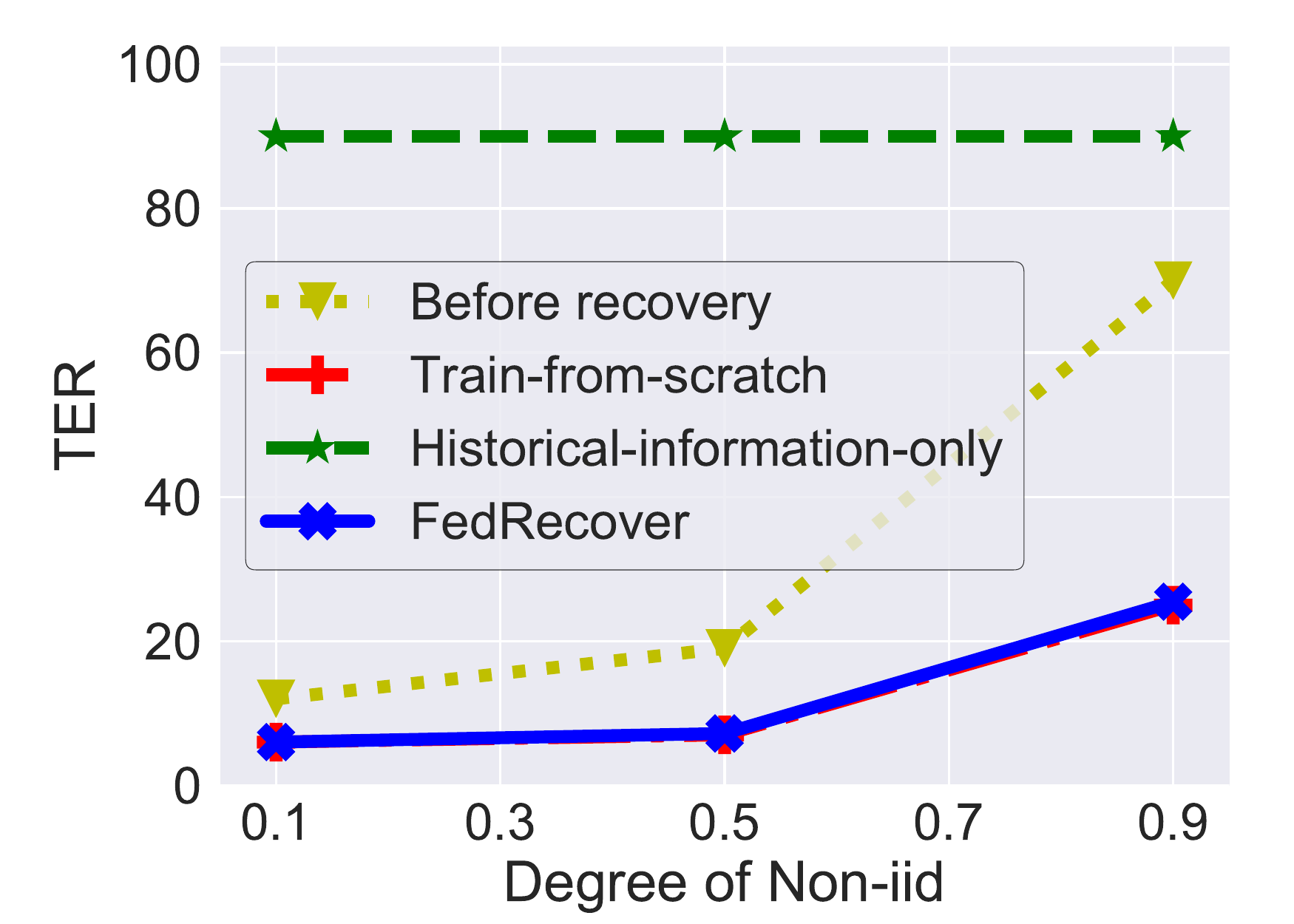}
     \label{fig:noniid_trimmed_mean_er}}
      \subfloat{\includegraphics[width=.24\textwidth]{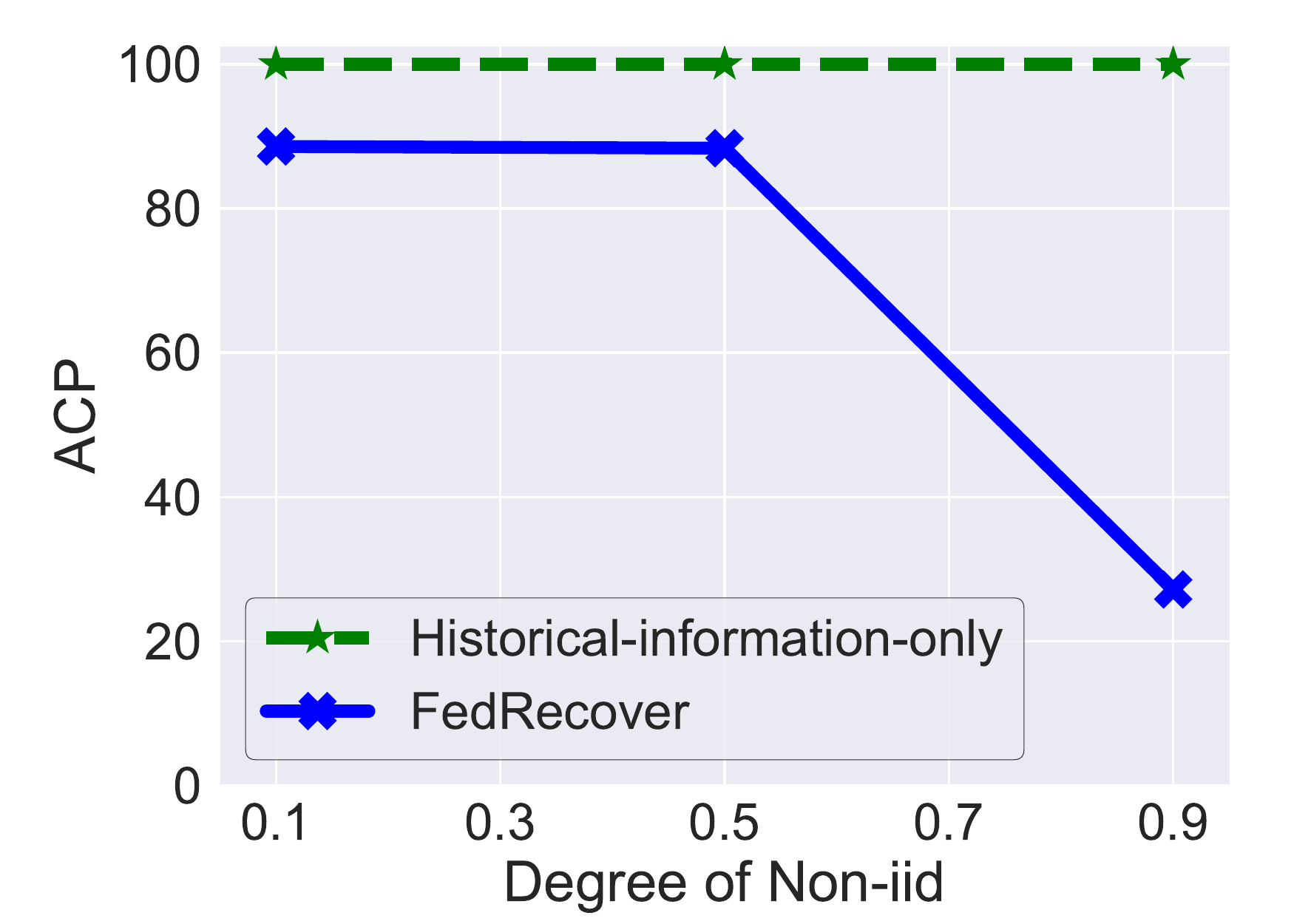}
      \label{fig:noniid_trimmed_mean_cost}}  
        % \vspace{-2mm}
    \caption{Effect of degree of non-iid on recovery from Trim attack.  The aggregation rule is Trimmed-mean.  Figure~\ref{fig:noniid_fedavg_median} in Appendix shows the results for FedAvg and Median.}
    \label{fig:noniid}
    %   \vspace{-3mm}
\end{figure}

\myparatight{Effect of the number of malicious clients}
Figure \ref{fig:m} shows the effect of the number of malicious clients on recovering from Trim attack. 
Results for recovering from backdoor attacks are shown in Figure \ref{fig:bd_m} in Appendix. 
We observe that FedRecover can recover as accurate global models as train-from-scratch when different numbers of  clients are malicious, i.e., the TERs (and ASRs) of FedRecover are close to those of train-from-scratch.
Moreover, FedRecover can save most of the cost for the clients, compared to train-from-scratch. For instance,  FedRecover saves 88\% of cost on average for the clients when the aggregation rule is Trimmed-mean and the number of malicious clients is 40.

\begin{figure}[!t]
    \centering
       \subfloat[Trim attack]{\includegraphics[width=.24\textwidth]{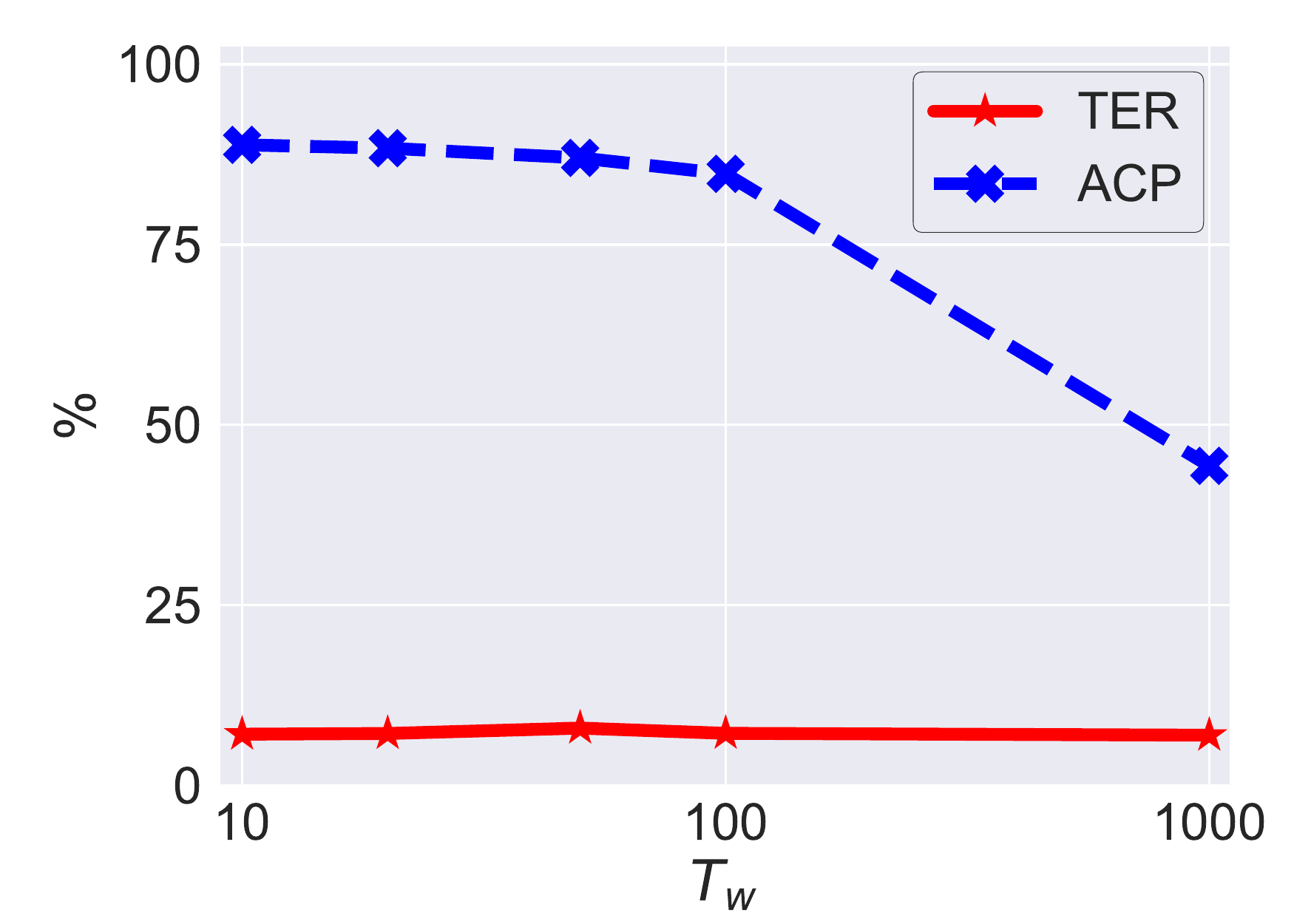}
       \label{fig:bd_tw_median}}
       \subfloat[Backdoor attack]{\includegraphics[width=.24\textwidth]{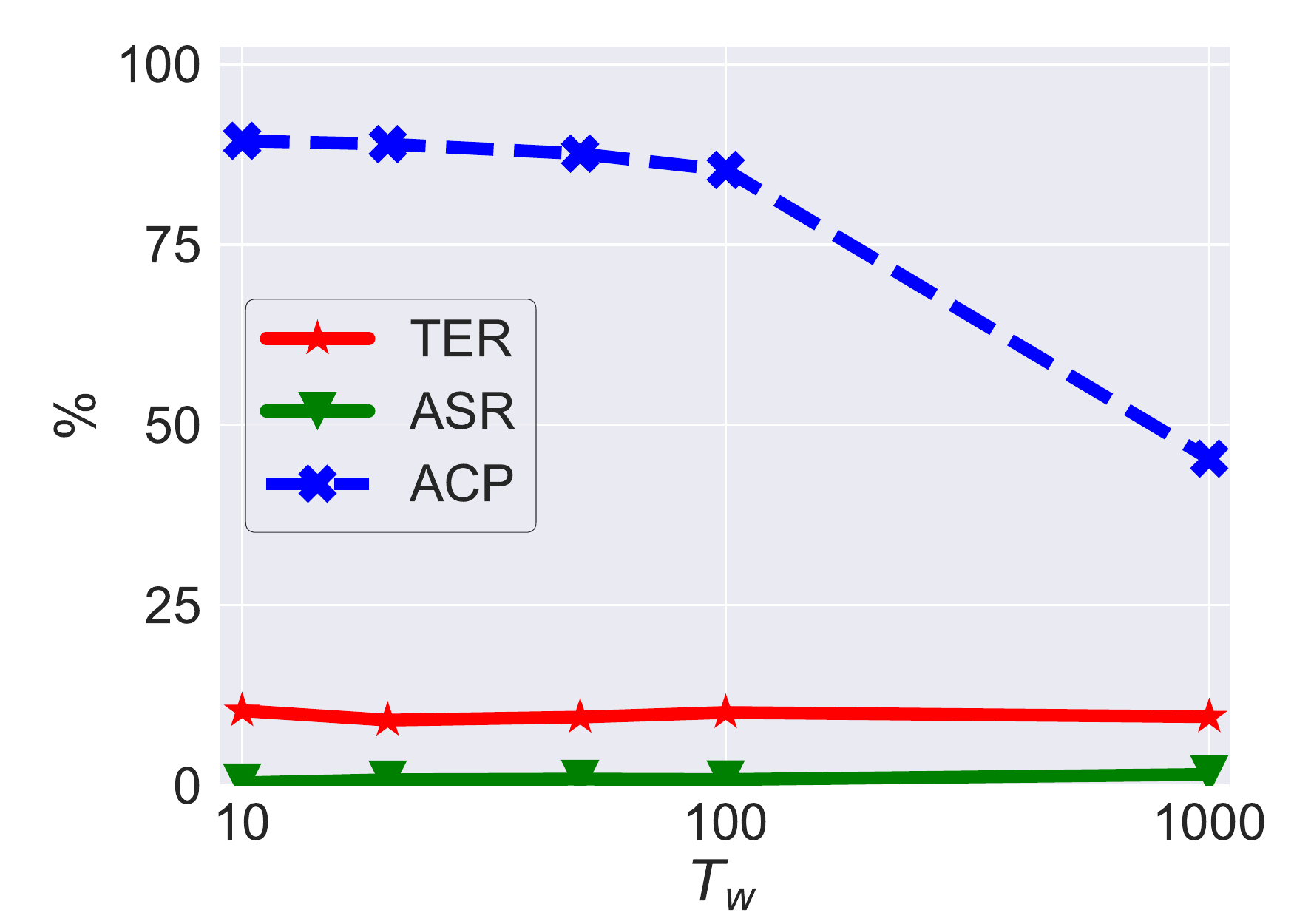}
       \label{fig:tw_median}}
         \vspace{-1mm}
    \caption{Effect of the number of warm-up rounds $T_w$ on FedRecover for recovery from (a) Trim attack and (b) backdoor attack. The aggregation rule is Trimmed-mean. Figure~\ref{fig:bd_tw_fedavg_median} in Appendix shows the results for FedAvg and Median.}
    \label{fig:bd_tw}
    \vspace{-2mm}
\end{figure}

\myparatight{Effect of the degree of non-iid}
Figure  \ref{fig:noniid} shows the impact of the degree of non-iid of the clients' local training data  on recovering from  Trim attack. Results for recovering from backdoor attack are shown in Figure \ref{fig:bd_noniid_fedavg_median} in Appendix. 
We observe that FedRecover can recover as accurate global models as train-from-scratch for a wide range of degree of non-iid. The TERs of both FedRecover and train-from-scratch are relatively large when the degree of non-iid increases to 0.9. This is because {FedRecover and train-from-scratch do not change the aggregation rule and their performance depends on the aggregation rule}. When the degree of non-iid is very large, the  aggregation rules themselves are not accurate even without poisoning attacks. The ACP of FedRecover drops as the degree of non-iid increases when recovering from Trim attack. This is because the estimated model updates are more likely to be abnormal when the degree of non-iid is larger, leading to more frequent abnormality fixing and thus lower ACP.

\begin{figure}[!t]
    \centering
       \subfloat[Trim attack]{\includegraphics[width=.24\textwidth]{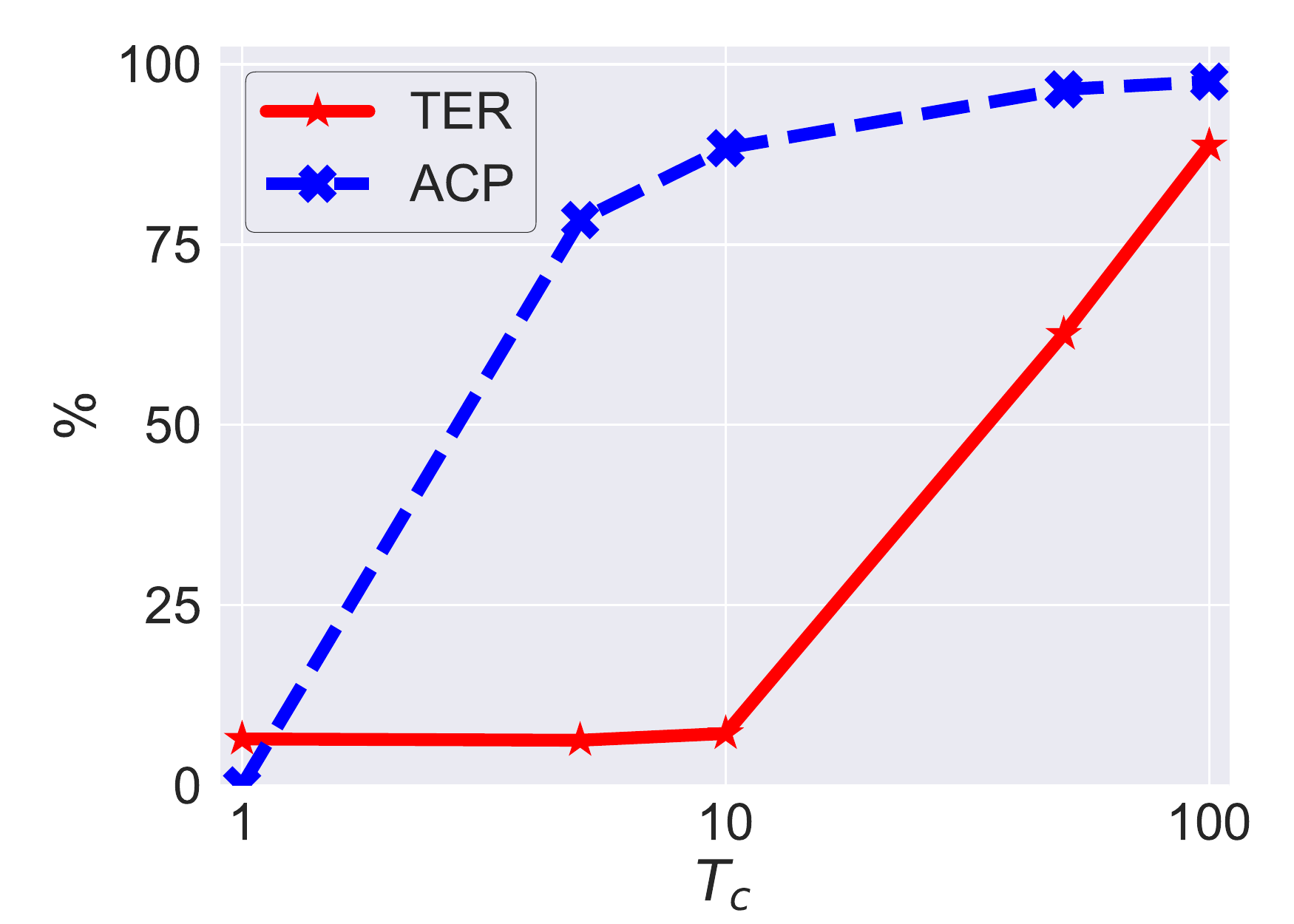}
       \label{fig:bd_tc_trimmed_mean}}
       \subfloat[Backdoor attack]{\includegraphics[width=.24\textwidth]{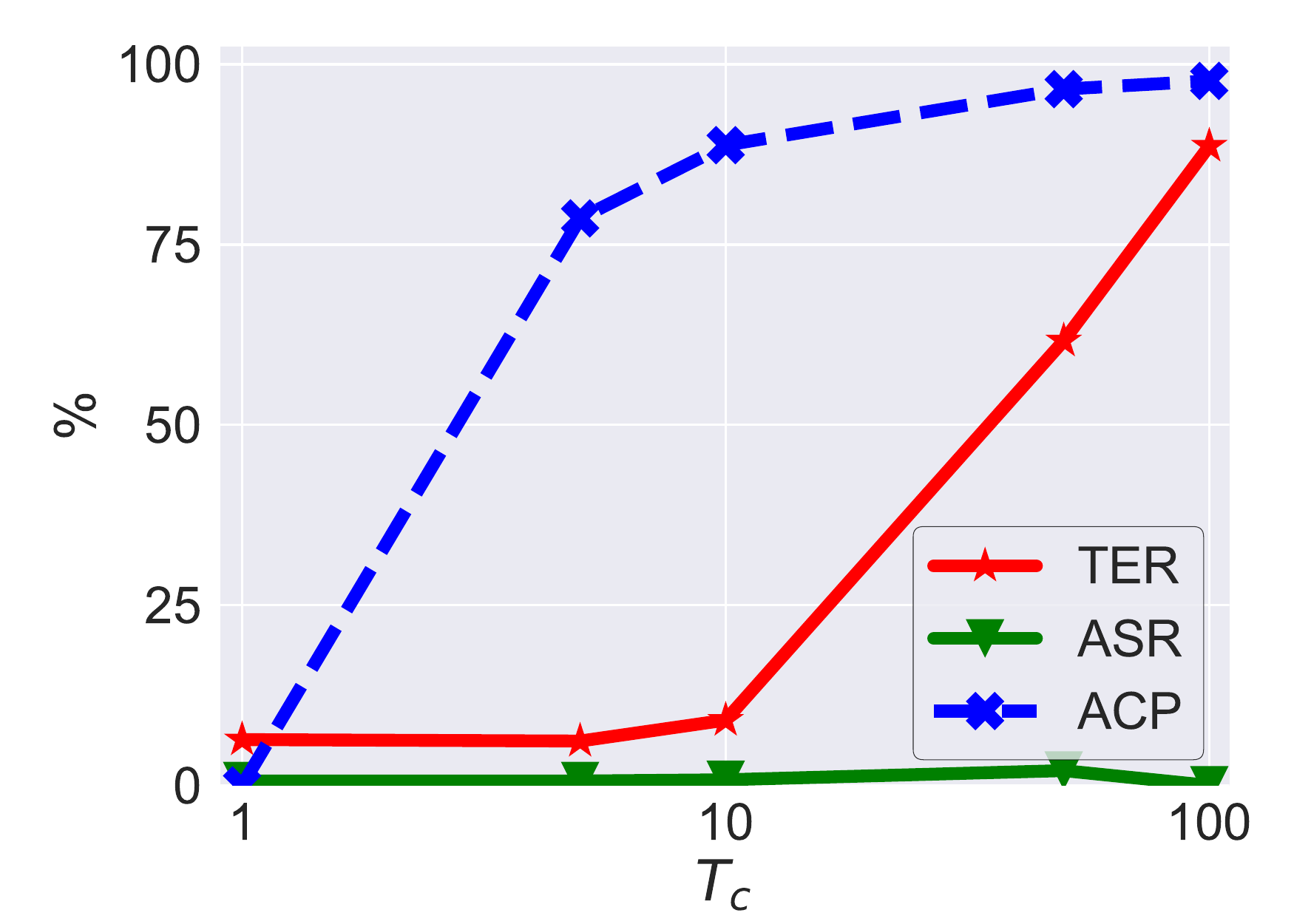}
       \label{fig:tc_trimmed_mean}}
         \vspace{-1mm}
    \caption{Effect of the correction period $T_c$ on FedRecover for recovery from (a) Trim attack and (b) backdoor attack. The aggregation rule is Trimmed-mean. Figure~\ref{fig:bd_tc_fedavg_median} in Appendix shows the results for FedAvg and Median.}
    \label{fig:bd_tc}
    \vspace{-3mm}
\end{figure}

\begin{figure}[!t]
      \centering
        \subfloat[Trim attack]{\includegraphics[width=.24\textwidth]{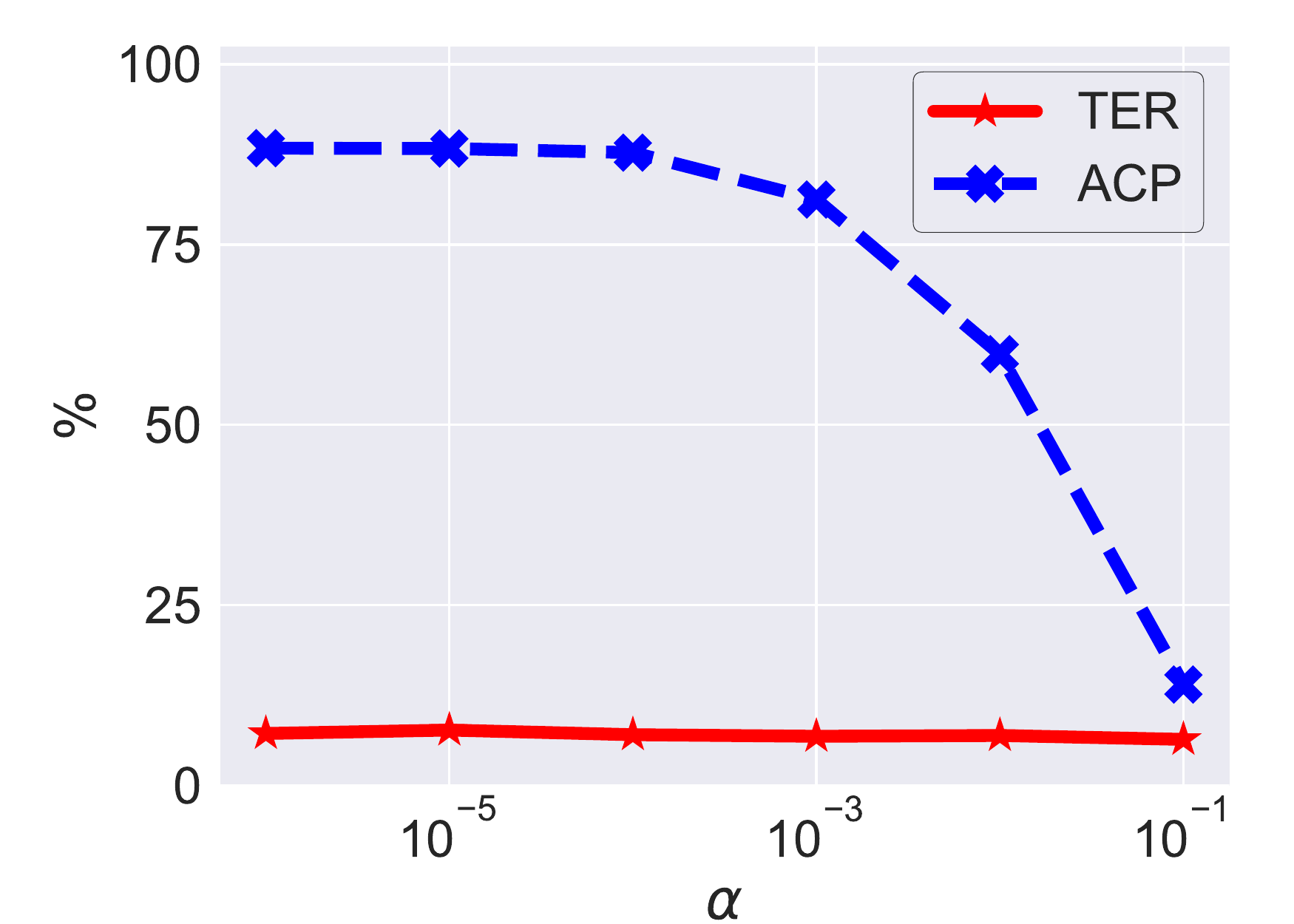}
       \label{fig:g_trimmed_mean_er}}
      \subfloat[Backdoor attack]{\includegraphics[width=.24\textwidth]{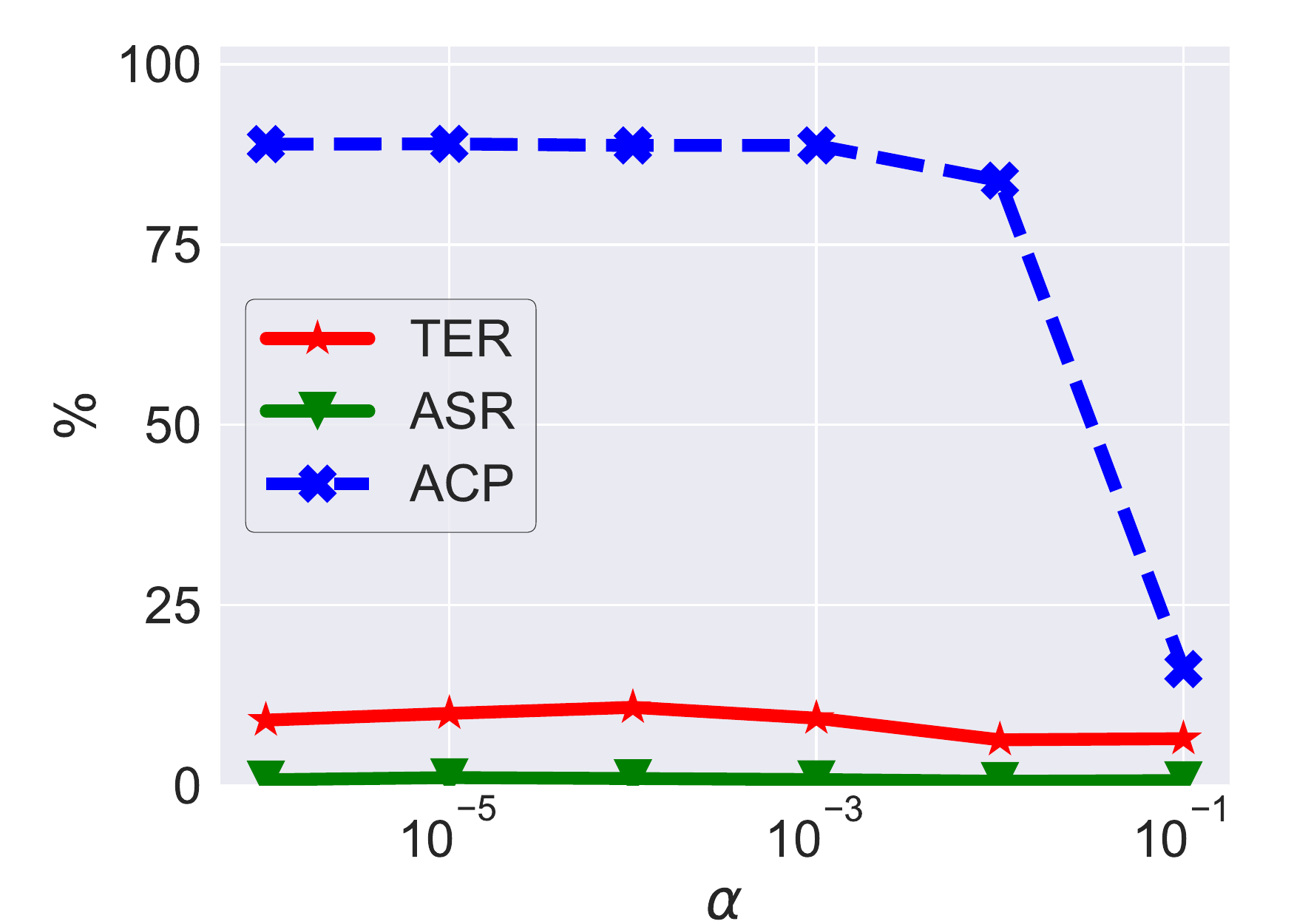}
       \label{fig:g_trimmed_mean_cost}}   
         \vspace{-1mm}
    \caption{Effect of the tolerance rate $\alpha$ on FedRecover for recovery from (a) Trim attack and (b) backdoor attack. The aggregation rule is Trimmed-mean. Figure~\ref{fig:bd_g_fedavg_median} in Appendix shows the results for FedAvg and Median.}
    \label{fig:bd_g}
    \vspace{-3mm}
\end{figure}

\myparatight{Effect of the number of warm-up rounds $T_w$}
Figure \ref{fig:bd_tw} shows the effect of $T_w$ on FedRecover  when recovering  from the two attacks. We observe that TER and ASR remain stable while ACP decreases as the number of warm-up rounds increases. Our results demonstrate that a small number of warm-up rounds are enough for FedRecover to accurately and efficiently recover a global model.

\begin{figure}[!t]
    \centering
       \subfloat[Trim attack]{\includegraphics[width=.24\textwidth]{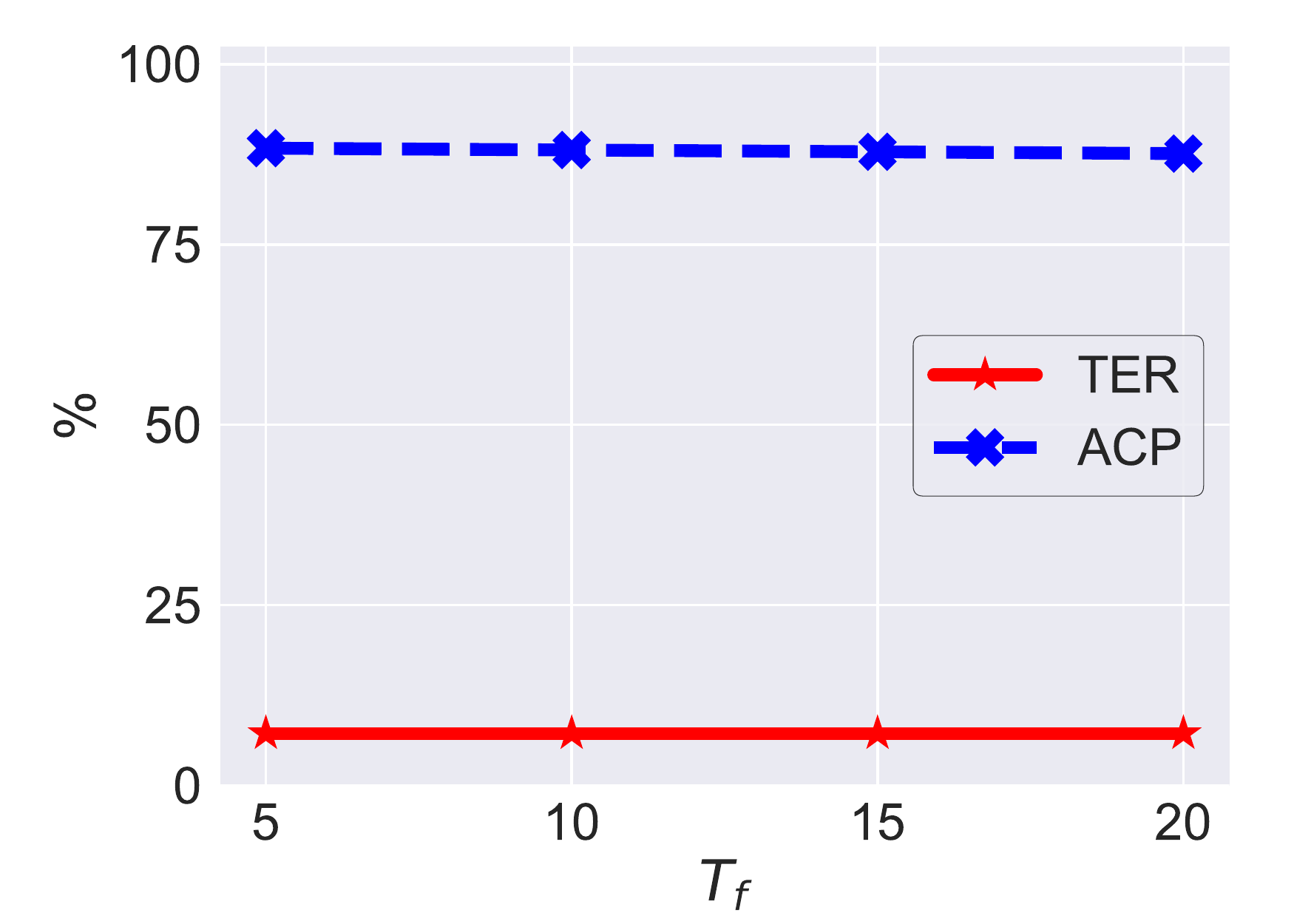}
       \label{fig:tf_trimmed_mean}}
       \subfloat[Backdoor attack]{\includegraphics[width=.24\textwidth]{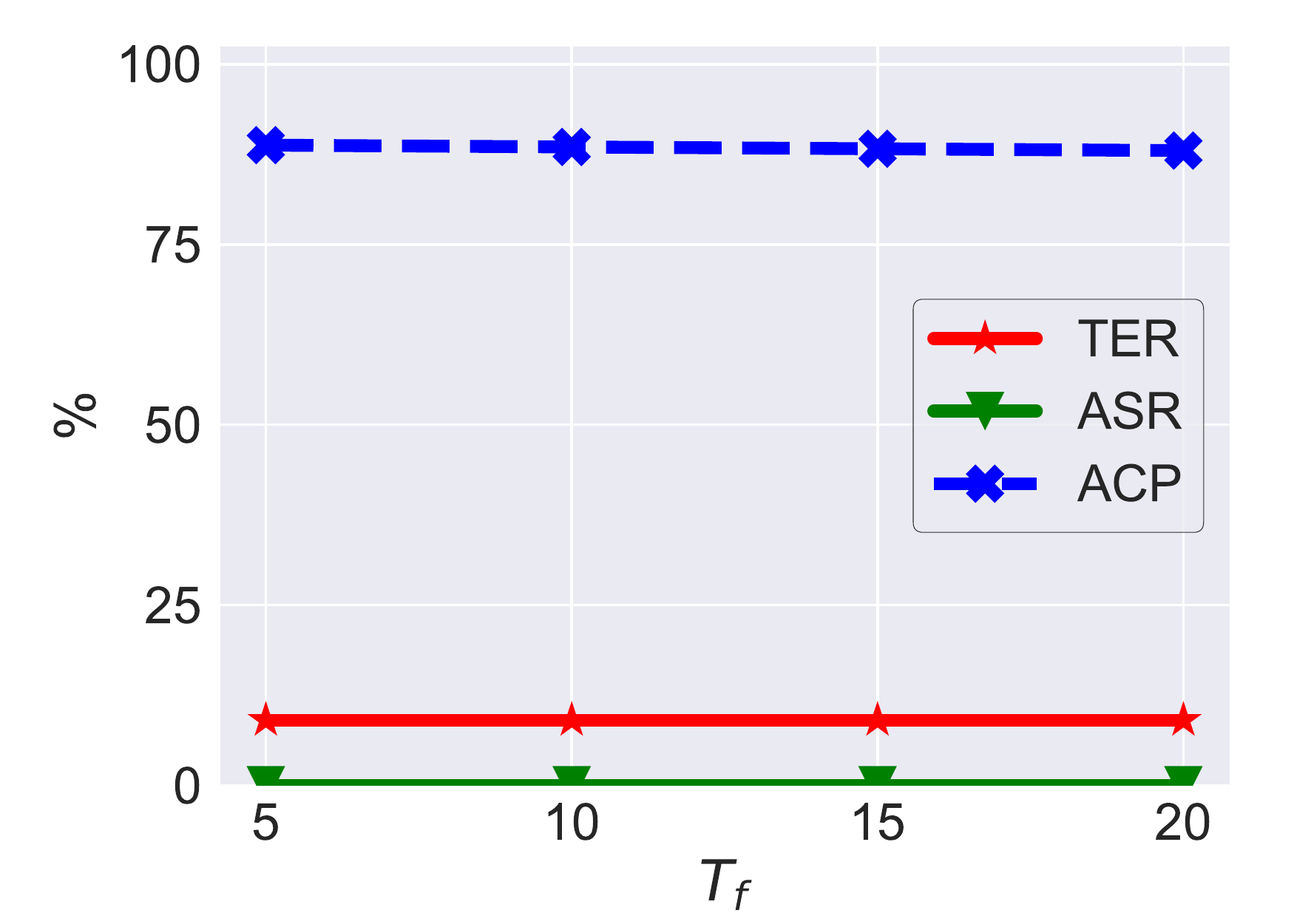}
       \label{fig:bd_tf_trimmed_mean}}
         \vspace{-1mm}
    \caption{Effect of the number of final tuning rounds $T_f$ on FedRecover for recovery from (a) Trim attack and (b) backdoor attack. The aggregation rule is Trimmed-mean. Figure~\ref{fig:bd_tf_fedavg_median} in Appendix shows the results for FedAvg and Median.}
    \label{fig:bd_tf}
    %  \vspace{-3mm}
\end{figure}

\myparatight{Effect of the correction period $T_c$}
Figure \ref{fig:bd_tc} shows the effect of $T_c$ on FedRecover when recovering from the two attacks. We observe that $T_c$ controls a trade-off between  accuracy and efficiency. Specifically,  ACP increases as $T_c$ increases, though the growth rate of ACP becomes smaller as $T_c$ increases. When the correction period $T_c$ is small, e.g., $T_c\le 10$, both TER and ASR remain almost unchanged. However,  TER starts to increase after $T_c$ is larger than a certain threshold. Our results demonstrate that a $T_c\approx 10$ is sufficient for FedRecover to achieve a good trade-off between  accuracy and efficiency.

\myparatight{Effect of the tolerance rate $\alpha$}
Figure \ref{fig:bd_g} shows the effect of $\alpha$ on FedRecover. Recall that $\alpha$ determines the abnormality threshold $\tau$. A smaller $\alpha$ leads to a larger threshold $\tau$. 
We observe that $\alpha$ controls a trade-off between the accuracy and the efficiency of FedRecover. In other words, FedRecover saves less cost for the clients but also incurs lower TER when $\alpha$ is larger. 
Specifically, ACP decreases while TER slightly decreases as $\alpha$ increases.

\begin{figure}[!t]
    \centering
    
     \subfloat{\includegraphics[width=.24\textwidth]{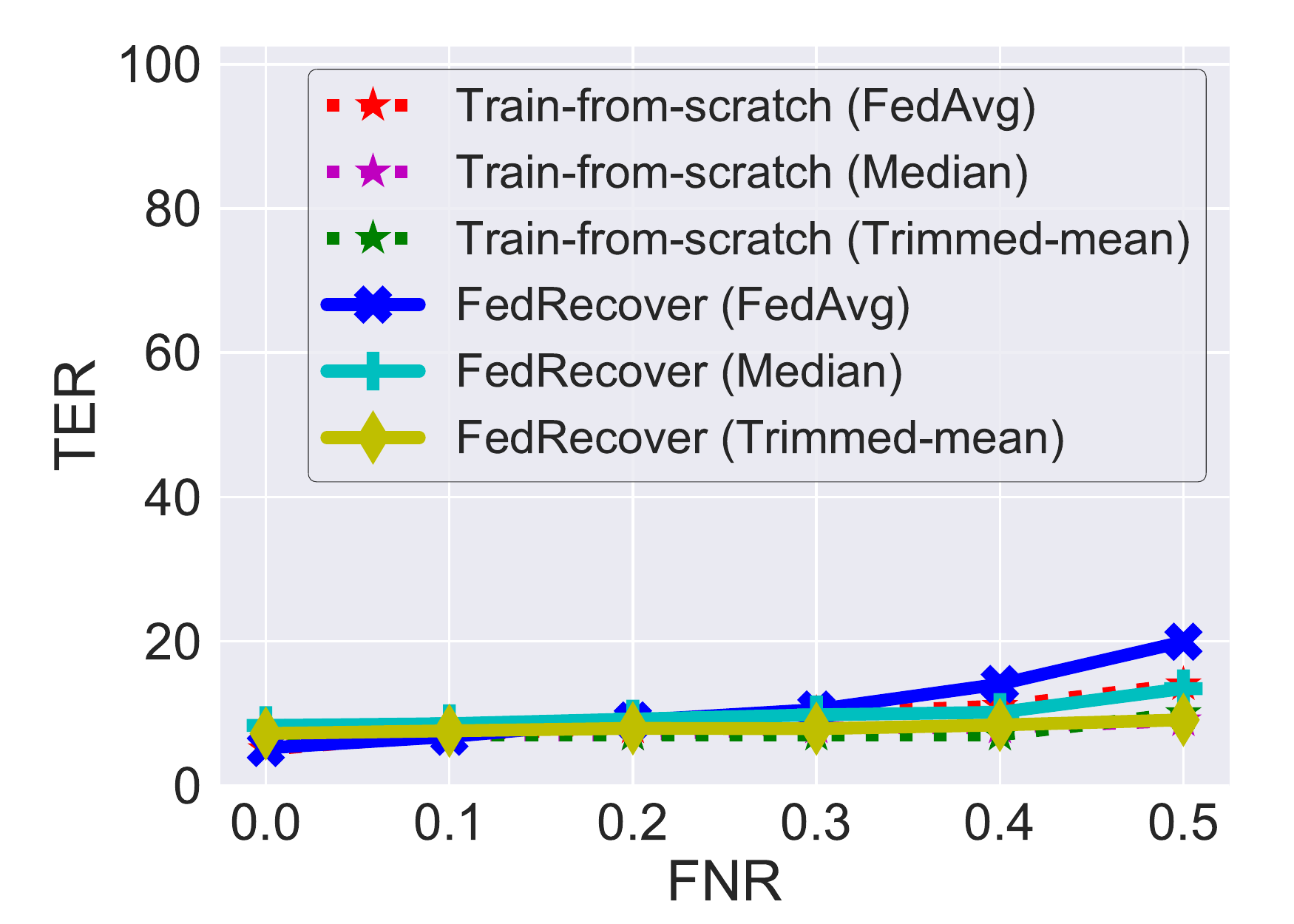}\addtocounter{subfigure}{-1}
      \label{fig:fnr_trimmed_mean_sr}}     
      \subfloat{\includegraphics[width=.24\textwidth]{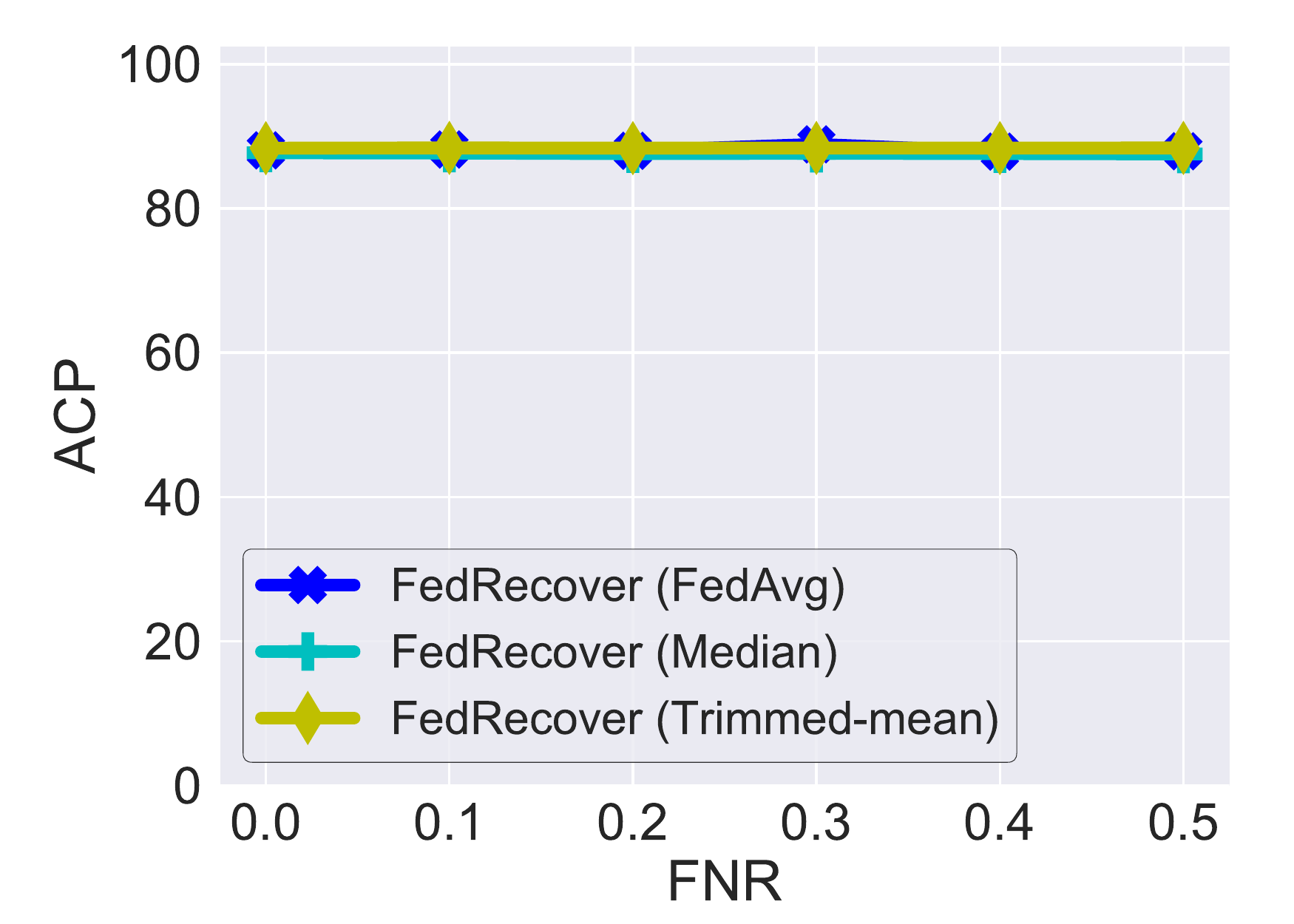}\addtocounter{subfigure}{-1}
      \label{fig:fnr_trimmed_mean_cost}}  \\
      \subfloat{\includegraphics[width=.24\textwidth]{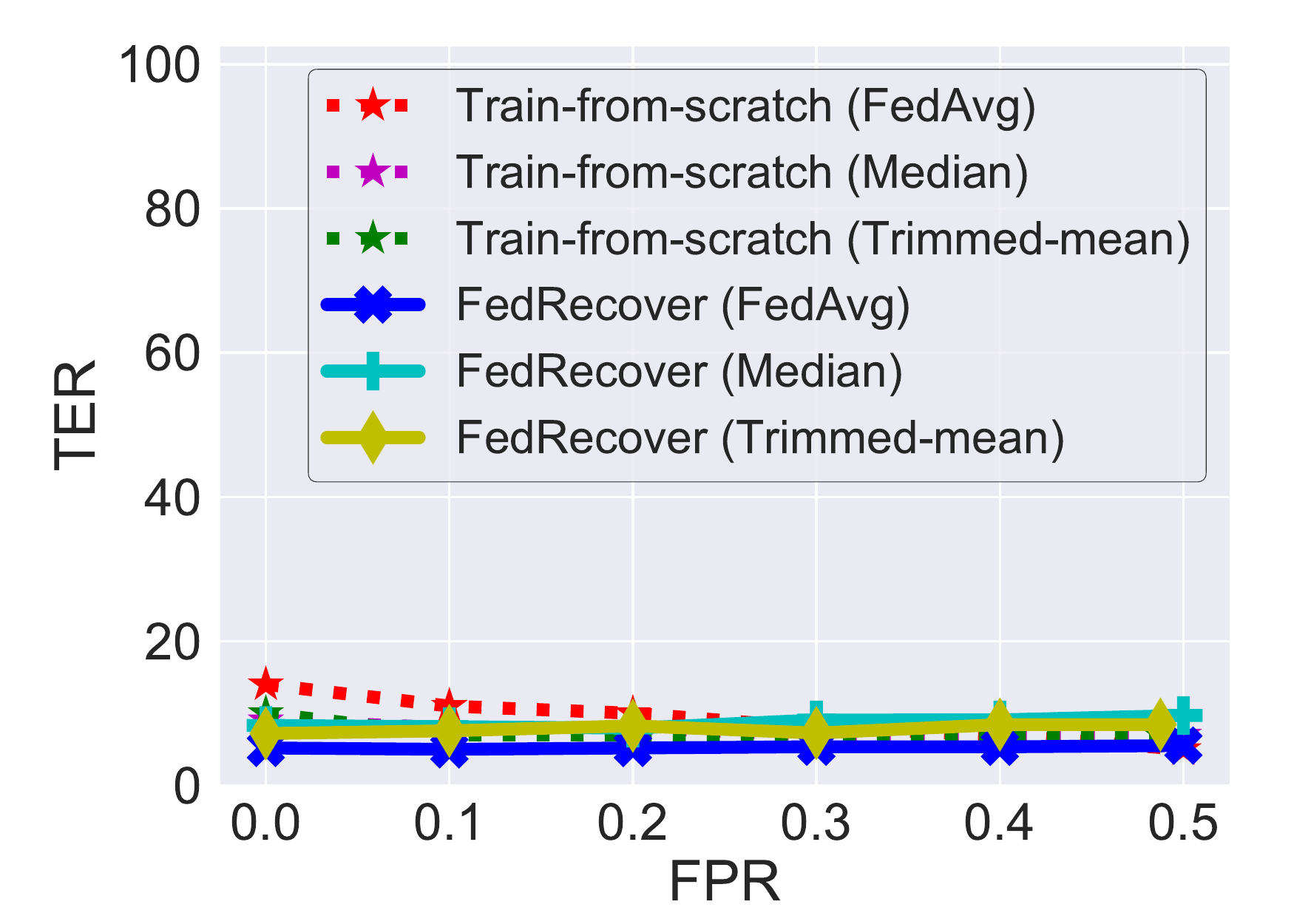}
      \label{fig:fpr_trimmed_mean_sr}}     
     \subfloat{\includegraphics[width=.24\textwidth]{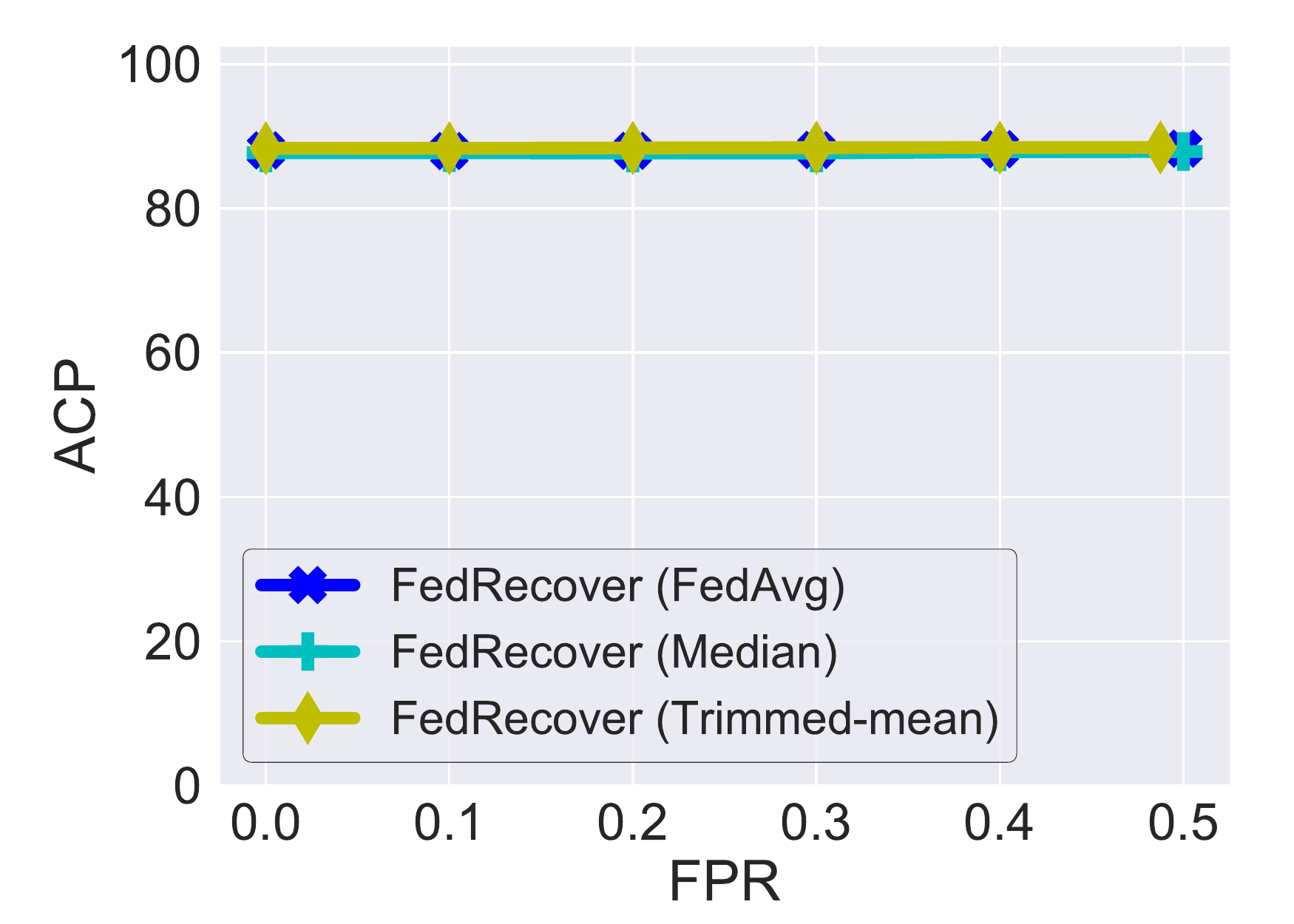}
      \label{fig:fpr_trimmed_mean_cost}} 
    \caption{Effect of  FNR (first row) and FPR (second row) on recovery from Trim attack. The TERs for historical-information-only are very large (nearly random guessing) and thus are omitted for simplicity.}
    \label{fig:fnrfpr}
      \vspace{-3mm}
\end{figure}    

\myparatight{Effect of the number of final tuning rounds $T_f$}
Figure \ref{fig:bd_tf} shows the effect of $T_f$ on FedRecover  when recovering from the two attacks. We observe that TER and ASR remain stable while ACP slightly decreases as the number of final tuning rounds increases. We note that although $T_f$ does not show much impact in Figure \ref{fig:bd_tf}, it is necessary to achieve good accuracy in some other scenarios. For instance, when the dataset is Purchase and the aggregation rule is Trimmed-mean, the TER without final tuning is 18\%, while the TER with $T_f=5$ is 13\%. \xc{  Figure~\ref{fig:ft_nft} in Appendix shows more details.} 
Our results demonstrate that a small number of final tuning rounds are sufficient for FedRecover to recover a global model accurately and efficiently.

\myparatight{Effect of false negative rate (FNR) and false positive rate (FPR) in detecting malicious clients}
In practice, the malicious client detectors are not always perfect. For instance, some malicious clients may escape from  detection and some benign clients may be detected incorrectly as malicious. 
We define FNR as the fraction of malicious clients that are not detected and FPR as the fraction of benign clients that are falsely detected as malicious. We explore the effect of FNR and FPR on model recovery. 
Figure 
\ref{fig:fnrfpr} shows sthe results when recovering global models from the Trim attack. 
Note that the malicious clients missing detection still perform the attacks 
when they are asked to compute their exact model updates during the recovery process. 

We observe that FedRecover can still recover  as accurate global models as train-from-scratch even if FNR or FPR is non-zero. In particular, the TER 
curves for FedRecover almost overlap with those for train-from-scratch, except when FNR is large (e.g., FNR$\ge 0.4$) for FedAvg. Moreover,  the ACPs of FedRecover are stable when the FNR or FPR ranges from 0 to 0.5. Our results imply that FedRecover can save lots of cost for the clients even if the malicious client detector has non-zero FNR or FPR.

\begin{figure*}[!t]
    \centering
    \subfloat[TER vs. global round ]{\includegraphics[width=.24\textwidth]{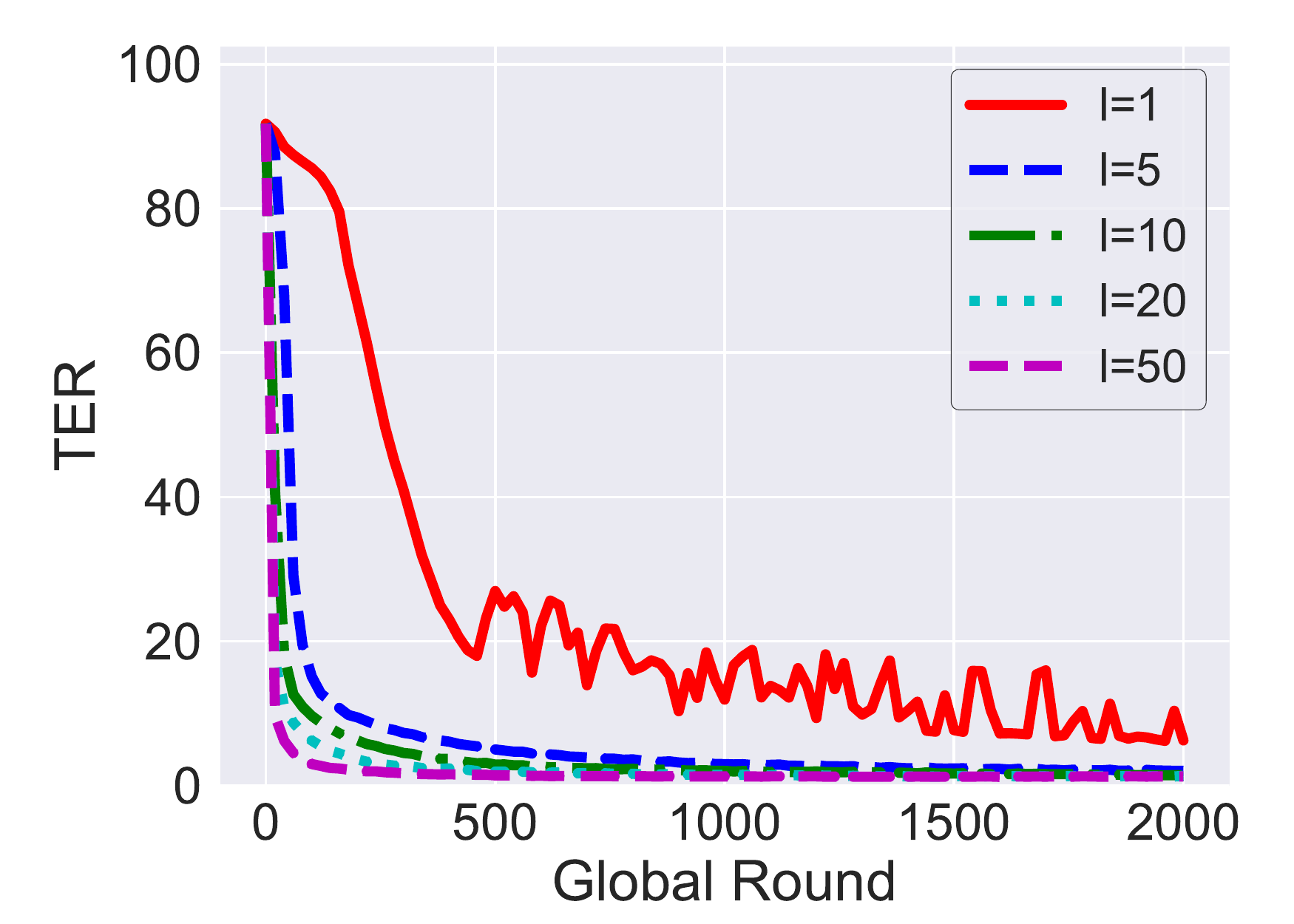}
      \label{fig:convergence_trim}}
       \hfill
      \subfloat[\#global rounds]{\includegraphics[width=.24\textwidth]{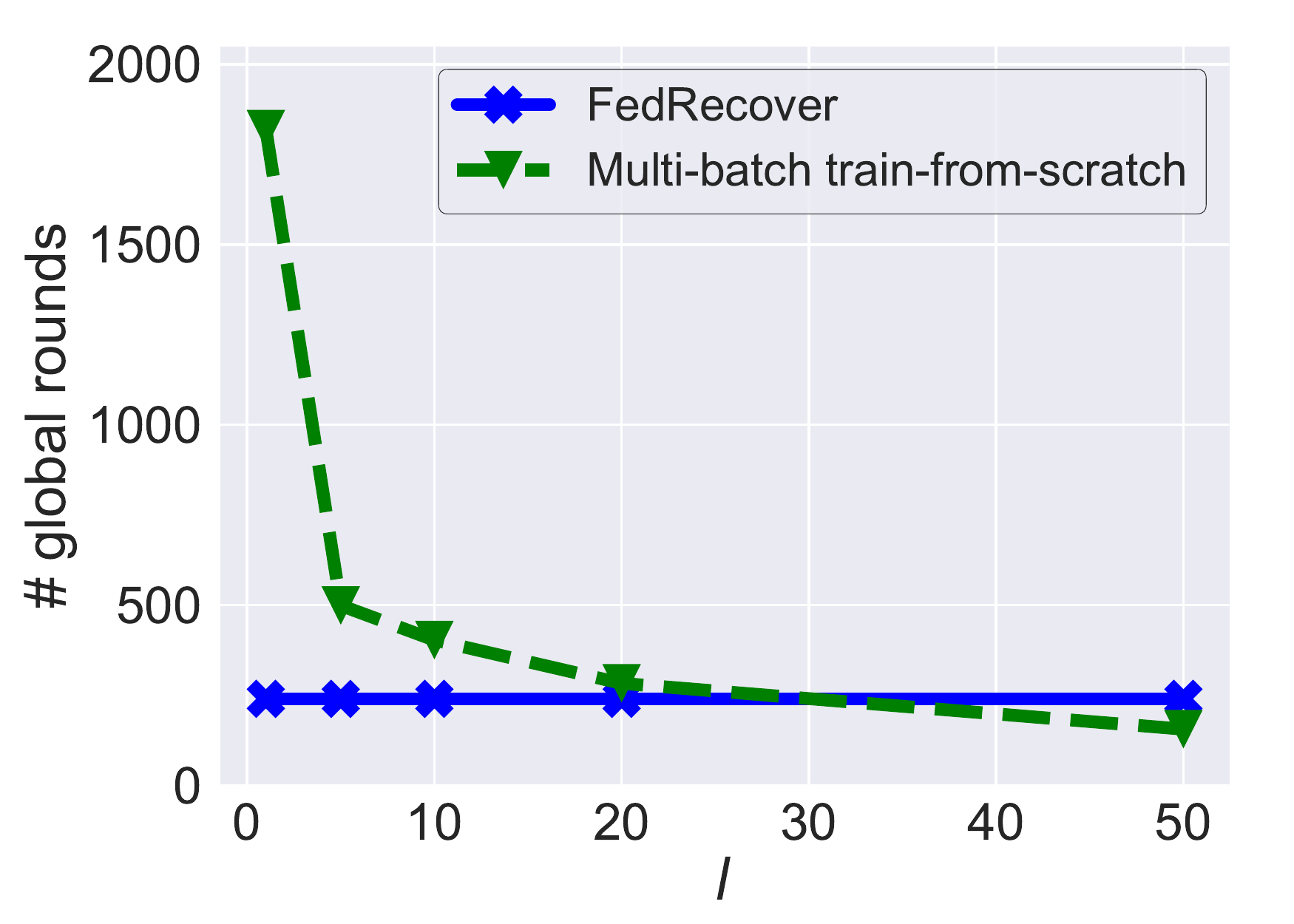}
      \label{fig:niter_trim}}   \hfill
      \subfloat[Average\# local batches]{\includegraphics[width=.24\textwidth]{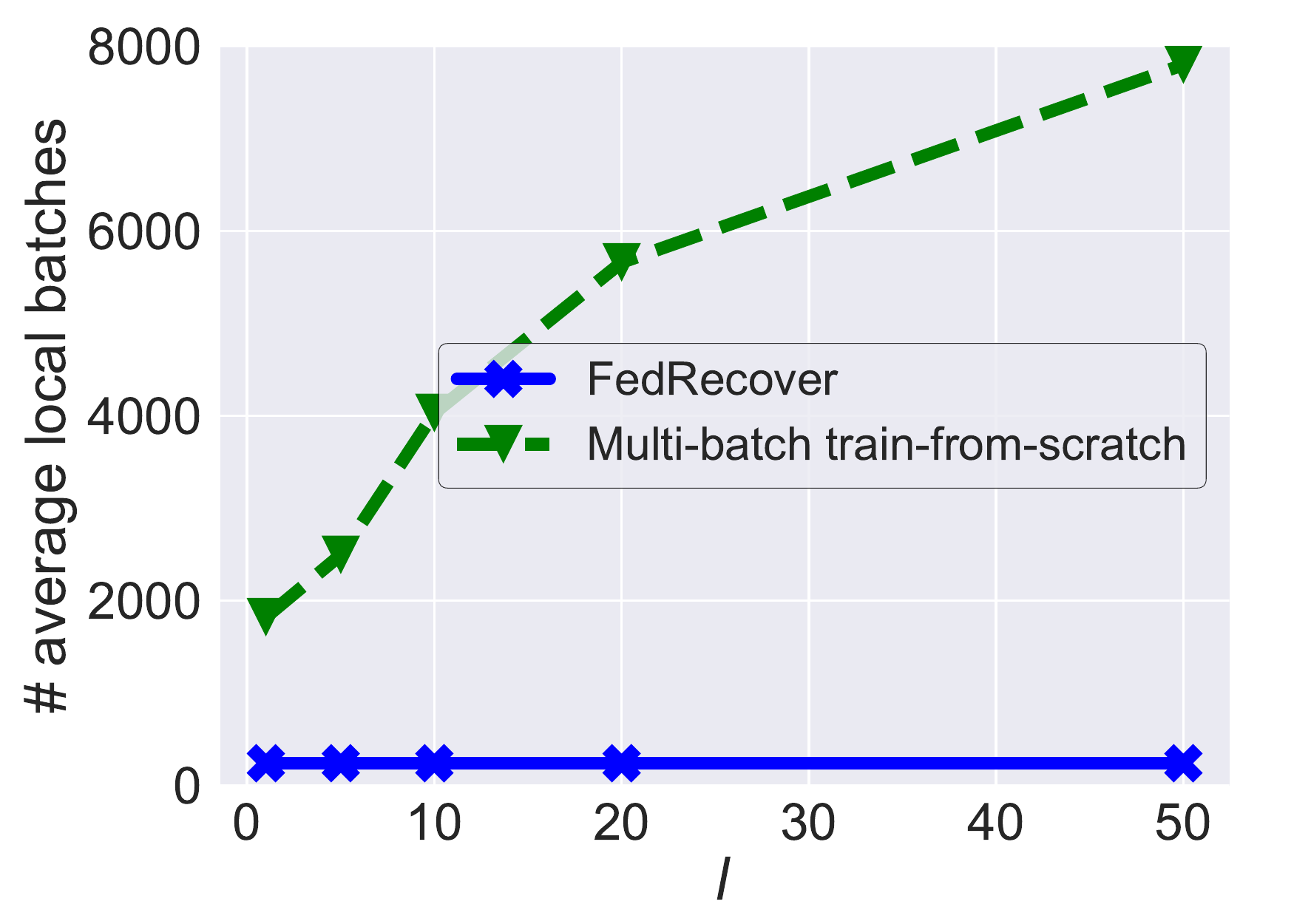}
      \label{fig:computation_trim}}  
      \hfill
        % \vspace{-1mm}
    \caption{
    (a) TER of train-from-scratch as a function of global round when a client trains its local model using $l$  mini-batches per global round. (b) The number of global rounds needed until convergence for  train-from-scratch with different $l$ and FedRecover. (c) The average number of local mini-batches that each client computes until convergence for train-from-scratch with different $l$ and FedRecover. The results are for recovering from Trim attack and the aggregation rule is Trimmed-mean.}
    \label{fig:multiple_batch}
     \vspace{-5mm}
\end{figure*}

\myparatight{Train-from-scratch with multiple local mini-batches per global round} An intuitive way of reducing the communication cost of train-from-scratch is to ask the clients to train their local models for $l>1$ mini-batches in each global round. In our default setting, we set $l=1$. Figure~\ref{fig:convergence_trim} shows the convergence rate of train-from-scratch with different $l$, which shows that train-from-scratch indeed requires less global rounds (i.e., less communication cost) to converge when $l$ is larger.  We say  a global model converges in a global round  when the TER does not decrease for more than 0.1\% in the past 20 global rounds. Figure~\ref{fig:niter_trim} shows the number of global rounds per client on average needed to converge for train-from-scratch with different $l$ and FedRecover.  

 We observe that when $l$ is smaller than some threshold (e.g., $l\le 30$), train-from-scratch  needs more global rounds (i.e., more communication cost) than FedRecover. When $l$ further increases,  train-from-scratch requires less  global rounds to converge than FedRecover. However, as shown in Figure \ref{fig:computation_trim}, when $l$ increases, train-from-scratch incurs substantially more computation cost for the clients. Specifically, the average number of local training mini-batches per client increases substantially as $l$ grows. For instance, when $l=50$,  train-from-scratch reduces the communication cost by $35\%$ but incurs more than $30\times$ computation cost for the clients, compared to FedRecover. Our results show that FedRecover incurs less communication and computation cost than  train-from-scratch when $l$ is small, and incurs much less computation cost at the expense of slightly larger communication cost when $l$ is large.

\begin{figure}[!t]
    \centering
    \subfloat[]{\includegraphics[width=.24\textwidth]{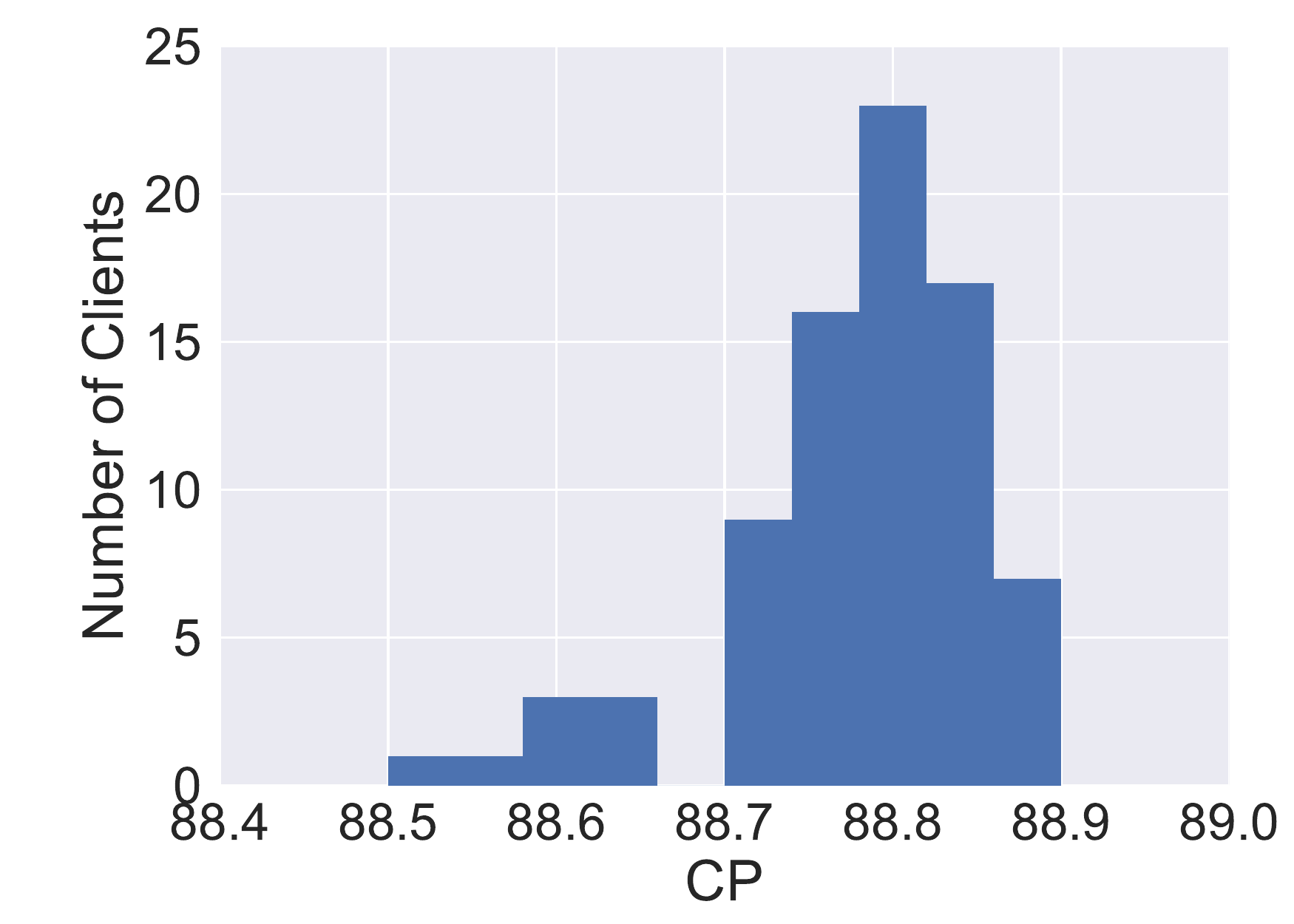}\label{fig:fairness}}
    \subfloat[]{\includegraphics[width=.24\textwidth]{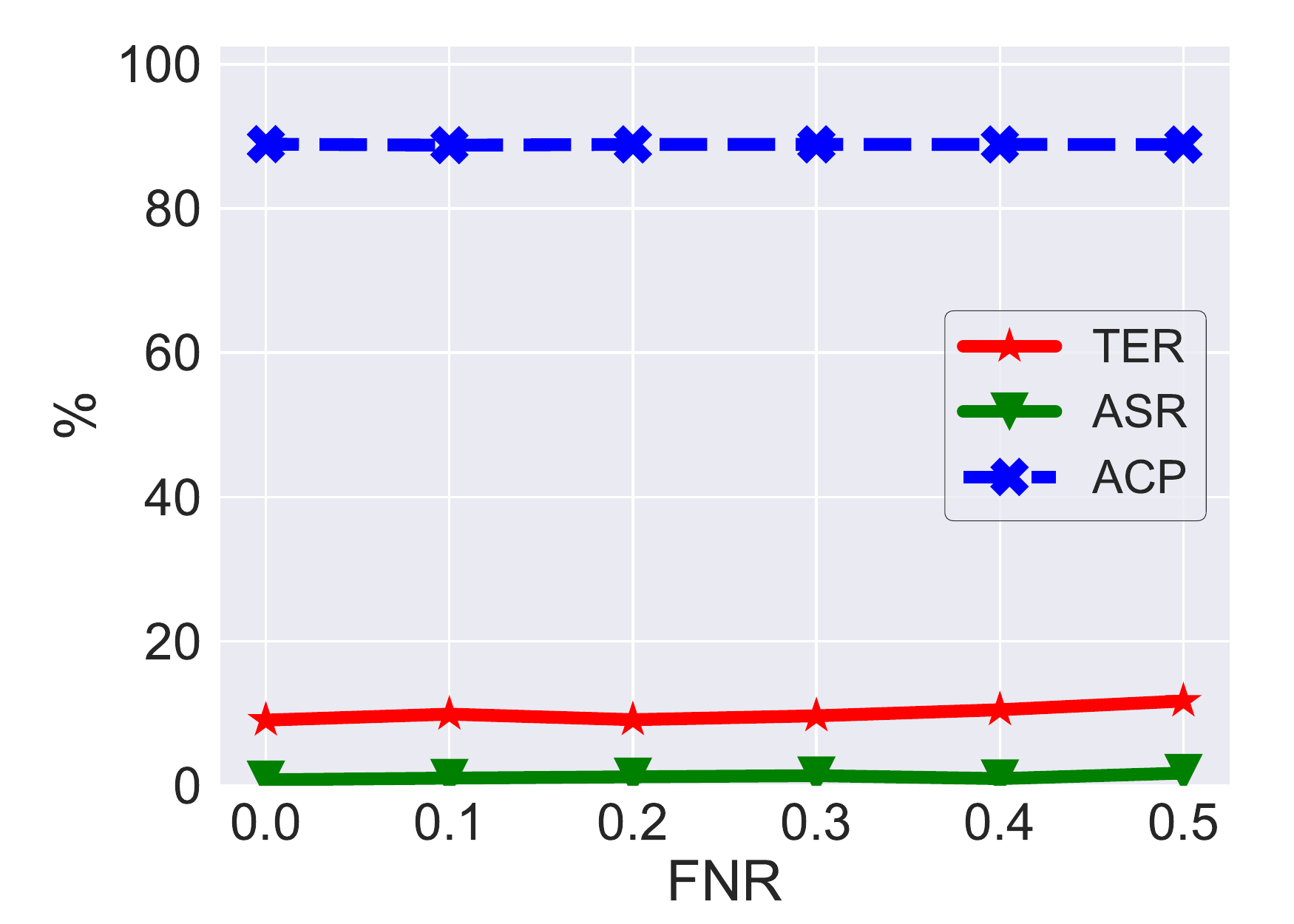}\label{fig:adaptive}}
    \vspace{-2mm}
    \caption{(a) The distribution of CP among clients when FedRecover recovers from backdoor attack.  \xc{(b) Results for FedRecover when the attacker performs  adaptive backdoor attack during recovery. The aggregation rule is Trimmed-mean. }  
    }
    \vspace{-4mm}
\end{figure}

\myparatight{Distribution of clients' cost-saving percentage (CP)}
We showed that FedRecover can save the average cost among the clients in the previous experiments. However, 
it is not desired if the cost-saving percentage for some clients is significantly lower than the others. Therefore, we further study the distribution of the cost-saving percentage (CP) among the clients. Figure \ref{fig:fairness} shows the results for recovering from the backdoor attack when Trimmed-mean is used as the aggregation rule. 
We observe that the difference between the individual clients' CPs is small. Specifically, all CPs fall in a small range between 88.5\% and 88.9\%.

\xc{\myparatight{Adaptive attack}
An attacker can adapt its attack if it knows FedRecover is used to recover the global model. For instance, the attacker can perform adaptive attack during recovery using the malicious clients that are not detected. We notice that Trim attack solves the same optimization problem regardless of the number of malicious clients. Therefore, the attack strategy for untargeted attack is already optimal during recovery. However, the attacker can adjust the scaling factor for backdoor attack to perform adaptive backdoor attack. Specifically, assuming $m'$ malicious clients are not detected and the original scaling factor is $\lambda$, then the attacker can increase the scaling factor to $\lambda\cdot\frac{m}{m'}$ such that the sum of the scaling factors on malicious clients remain the same. Figure \ref{fig:adaptive} shows the results on MNIST dataset when the FNR of detecting malicious clients varies and Trimmed-mean is the aggregation rule. We observe that the adaptive backdoor attack can slightly increase the TER of FedRecover when FNR increases. However, the ASR remains low and the ACP remains high. 
}

\begin{figure}[!t]
    \centering
       \subfloat[Trim attack]{\includegraphics[width=0.24\textwidth]{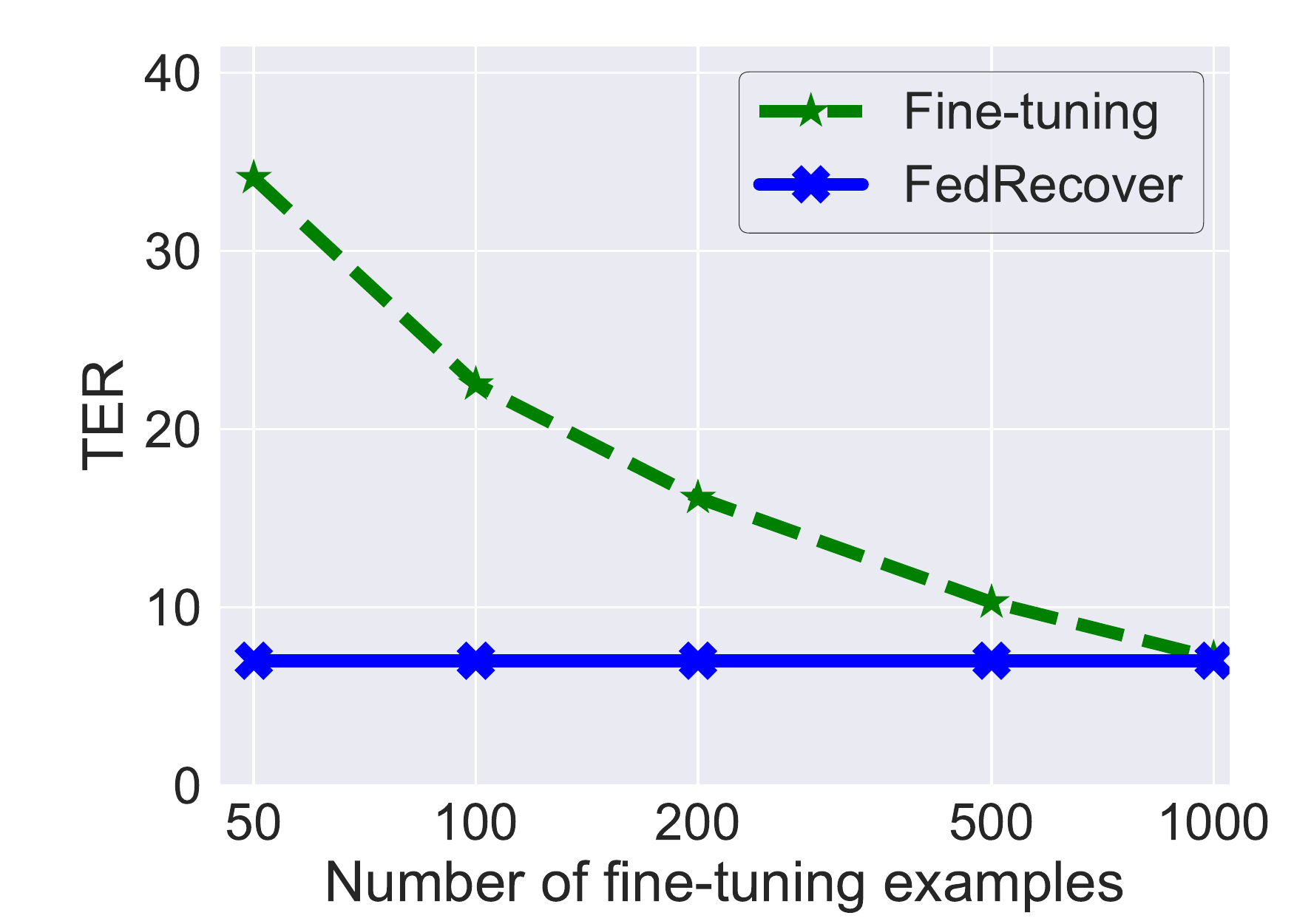}}
       \subfloat[Backdoor attack]{\includegraphics[width=0.24\textwidth]{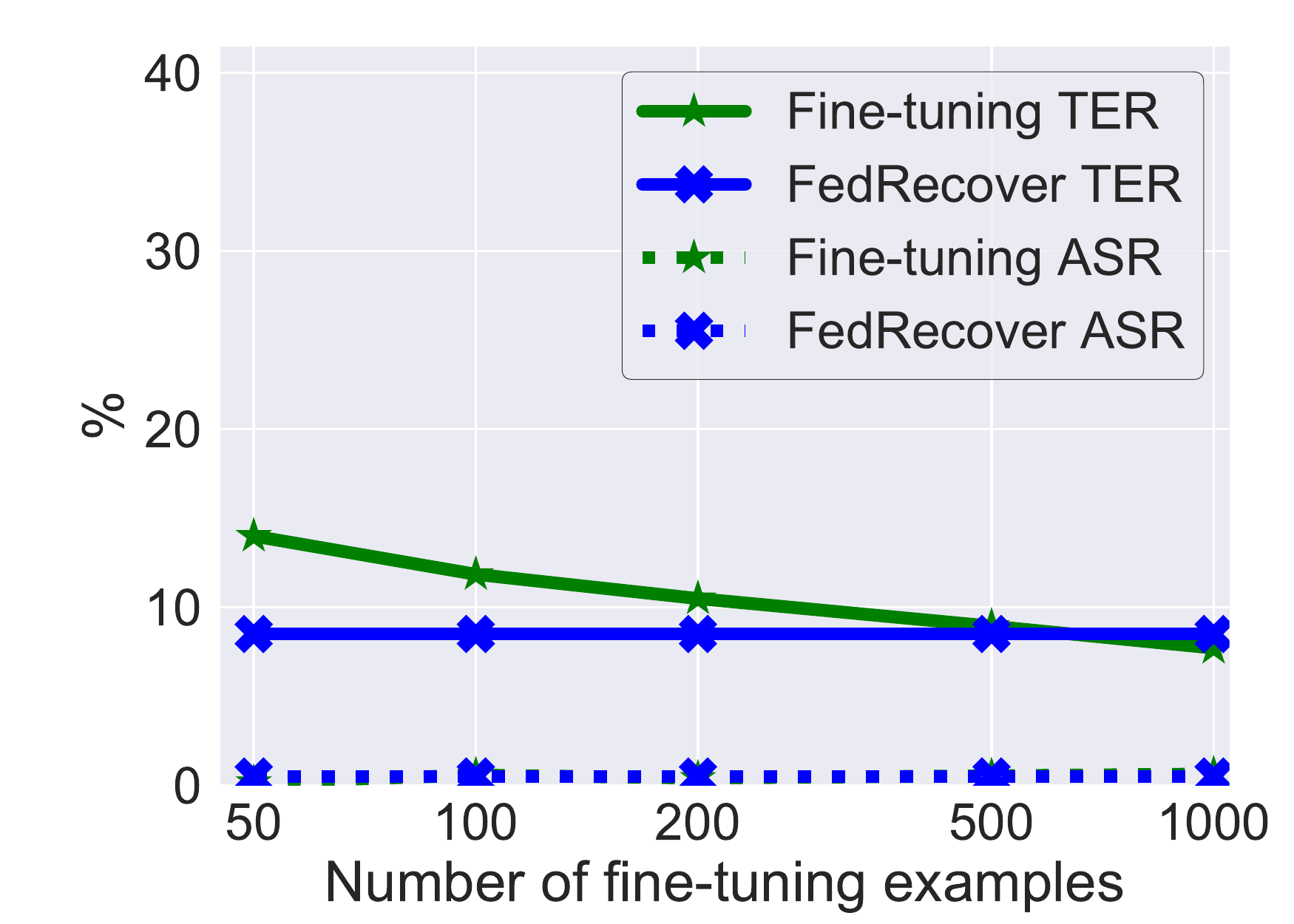}} \\
    \vspace{-2mm}
    \caption{\xc{Comparing FedRecover with fine-tuning for recovery from (a) Trim attack and (b) backdoor attack. The aggregation rule is Trimmed-mean. Figure~\ref{fig:tuning_fedavg_median} in Appendix shows the results for FedAvg and Median.}}
    \label{fig:tuning_trim}
    %  \vspace{-1mm}
\end{figure}

\myparatight{Comparing with fine-tuning} \xc{Fine-tuning  assumes that the server has access to a clean dataset and uses it to fine-tune the poisoned global model. Figure \ref{fig:tuning_trim} shows the impact of the number of fine-tuning examples on MNIST dataset, where the fine-tuning examples are sampled from the MNIST training set uniformly at random and we fine-tune a poisoned global model for 100 epochs with the same learning rate to train the global model. We observe that fine-tuning requires a large number of clean examples, e.g., 1,000 examples, to achieve TER and ASR comparable to FedRecover. In Figure \ref{fig:tuning_trim}, we assume the fine-tuning dataset has the same distribution as the overall training dataset. Figure \ref{fig:tuning} in Appendix shows the results when the fine-tuning dataset has a different distribution from the overall training dataset. In particular, we assume the fine-tuning dataset includes 1,000 examples and the 10 classes follow a Dirichlet distribution, which is characterized by a parameter $\beta$. $\beta\rightarrow\infty$ indicates a uniform distribution among the 10 classes, i.e., the same distribution as the overall training dataset. A smaller $\beta$ means that the fine-tuning dataset distribution deviates more from the overall training data distribution. We observe that fine-tuning has much larger TER (i.e., less accurate global model) when the fine-tuning dataset deviates from the overall training data distribution. Our results show that, even if the server can collect a clean dataset, fine-tuning is insufficient when the clean dataset is small or deviates from the overall training data distribution. }

\xc{\myparatight{More experiments} We also evaluate FedRecover without approximate local model updates, which shows that the approximate local model updates are necessary for FedRecover. The details are shown in Appendix~\ref{withoutapproximate}.  Table~\ref{tab:variants} in Appendix shows that all the four optimization strategies are necessary for FedRecover.  }
\section{Discussion and Limitations}

\subsection{Security/Privacy Concern of Storing Historical Information}
\label{sec:concern}
In FedRecover, the server  stores the historical information of the clients, including their model updates in each round. Therefore, one natural question is whether the stored historical information introduces extra security/privacy concerns for the clients. In our threat model, we assume the server is not compromised by an attacker, in which the stored historical information does not introduce extra security/privacy concerns. Moreover, even if the server could be compromised by an attacker, whether FedRecover introduces extra security/privacy concerns for the clients depends on when the server is compromised.  If the server is compromised before training, then storing the historical information does not introduce extra security/privacy concerns for the clients because the attacker can access the historical information no matter the server stores them or not.  However, we acknowledge that if the server is compromised after training, storing historical information may introduce extra security/privacy concerns for the clients. We believe it is an interesting future work to study the extra security/privacy risks in such scenarios. 

\subsection{Clients Dropout}
In this work, we focus on recovering a global model when some malicious clients are removed by the server after being detected. In practice, benign clients  may also  drop out of the FL system after the global model has been trained for various reasons such as privacy concerns. In particular, the dropout clients may desire the global model to forget  the knowledge learnt from their private local training data or even their existence. We can use FedRecover  to recover a global model after benign-clients dropout via treating the dropout benign clients as detected ``malicious'' clients. Our Corollary~\ref{coro11} shows that the recovered global model would be the same as the train-from-scratch global model in some scenarios, which means that the recovered global model forgets the existence of the dropout benign clients and protects their privacy. We believe it is an interesting future work to study the privacy guarantee of the recovered global model for the dropout benign clients in other scenarios.

\subsection{Storage and Computation Cost for the Server}
\label{sec:server_cost}
 FedRecover incurs extra storage and computation cost for the server. Assuming  a local/global model  has $M$ parameters. The server needs $O(nMT)$ extra storage to save the original model updates and global models, where $n$ is the number of clients and $T$ is the number of global rounds.  For instance, when there are one million clients, each of which participates in 100 global rounds on average, and the global model is ResNet-20,  the server  needs roughly 100 TB extra storage. In our experiments, FedRecover needed at most 200 GB extra storage on our server.  \xc{We note that this storage can be hard disk drive instead of main memory.} %\xc{We store the buffers for historical information on the hard drive of the server. In production FL, the historical information may be stored on the cloud. }
 Moreover, the server needs to estimate roughly $O((n-m)T)$ model updates, where $m$ is the number of malicious clients. %when $T_c\gg 1$ and $T_w,T_f\ll T$. 
 The complexity of estimating a model update is $O(M^2 s)$, where $s<M$ is the buffer size. Therefore, the total extra computation cost for the server is $O((n-m)TM^2 s)$. The storage and computation cost is acceptable for a powerful server, e.g., a modern data center.  %Also, we believe it is an interesting future work to combine FedRecover with model compression approaches. 

\section{Conclusion and Future Work}
In this work, we propose a 
model recovery method called FedRecover to eliminate the impact of  poisoning attacks on the global model in FL. 
Our theoretical and empirical results show that the historical information, which the server collected during the training of the poisoned global model before the malicious clients are detected, is valuable for recovering an accurate global model efficiently after detecting the malicious clients. 
An interesting future work is to explore the accuracy and efficiency of FedRecover under adaptive poisoning attacks.  Specifically, an adaptive poisoning attack may be designed for the end-to-end FL pipeline that consists of training a global model, detecting malicious clients, and recovering the global model. 
Another interesting direction for future work is to extend FedRecover to FL in other domains (e.g., graphs). 

\section*{Acknowledgements}
We thank the anonymous reviewers for constructive comments. This work was supported by NSF under grant No. 2112562, 2131859, 2125977, and 1937786 as well as ARO grant No. W911NF2110182.

\bibliographystyle{IEEEtranS}
% argument is your BibTeX string definitions and bibliography database(s)
\bibliography{refs}
%
% <OR> manually copy in the resultant .bbl file
% set second argument of \begin to the number of references
% (used to reserve space for the reference number labels box)

\appendices
\allowdisplaybreaks

\begin{table}[!t]
    \centering
        \caption{Notations}
    \scalebox{1}{\begin{tabular}{|c|c|}
        \hline
        $n$ & number of clients\\\hline
        $m$ & number of malicious clients\\\hline
        $t$ & round index\\\hline
        $i$ & client index\\\hline
        $T$ & total number of rounds\\\hline
        $T_w$ & number of warm-up rounds\\\hline
        $T_c$ & periodic correction parameter\\\hline
        $T_f$ & number of final tuning rounds\\\hline
        $s$ & buffer size of the L-BFGS algorithm\\\hline
        $\tau$ & abnormality threshold\\\hline
        $\alpha$ & tolerance rate to choose $\tau$\\\hline
        % $m$ & period number\\\hline
       $\bm{\bar{w}}_t$ & original global model in round $t$ \\\hline
        $\bm{w}_t$ & train-from-scratch global model in round $t$\\\hline
        $\bm{\hat{w}}_t$ & recovered global model in round $t$\\\hline
        $\bm{\bar{g}}^i_t$ & original model update for client $i$ in round $t$\\\hline
        %$\bm{h}^i_t$ & train-from-scratch model update for client $i$ in round $t$\\\hline
        $\bm{g}^i_t$ & exact model update for client $i$ in round $t$\\\hline
        $\bm{\hat{g}}^i_t$ & estimated model update for client $i$ in round $t$\\\hline
        $\bm{H}_t^i$ & integrated Hessian matrix for client $i$ in round $t$\\\hline 
        $\bm{\Tilde{H}}_t^i$ & estimated Hessian matrix for client $i$ in round $t$\\\hline 
    \end{tabular}}
    \vspace{-3mm}

    \label{tab:notation}
\end{table}

\begin{algorithm}[t]
	\caption{FedRecover}\label{alg:fedrecover}
	\begin{algorithmic}[1]
		\renewcommand{\algorithmicrequire}{\textbf{Input:}}
		\renewcommand{\algorithmicensure}{\textbf{Output:}}
		\Require $n-m$ remaining clients $\mathbf{C}_r=\{C_i|m+1\le i\le n\}$; 
		original global models $\bm{\bar{w}}_0,\bm{\bar{w}}_1,\cdots,\bm{\bar{w}}_T$ and original model updates $\bm{\bar{g}}_0^i,\bm{\bar{g}}_1^i,\cdots,\bm{\bar{g}}_{T-1}^i (m+1\le i\le n)$; learning rate $\eta$; number of warm-up rounds $T_w$; periodic correction parameter $T_c$; number of final tuning rounds $T_f$; buffer size $s$ of the L-BFGS algorithm;   abnormality threshold $\tau$; and aggregation rule $\mathcal{A}$. 
		\Ensure Recovered global model $\bm{\hat{w}}_{T}$. 
		\State $\bm{\hat{w}}_0 \leftarrow \bm{\bar{w}}_0$ \hfill  // initialize the recovered global model
		\For{$t=0,1,\cdots,T_w-1$} \hfill // warm-up 
            \State $\bm{\hat{w}}_{t+1} \gets$ ExactTraining($\mathbf{C}_r, \bm{\hat{w}}_{t}, \eta, \mathcal{A}$)
		\EndFor
		\For{$t=T_w,T_w+1,\cdots,T-T_f-1$} \hfill 
		\State update the buffers $\Delta\bm{W}_t$ and $\Delta\bm{G}_t^i$ if needed
		    \If{$(t - T_w + 1)$ mod $T_c==0$} // periodic correction
		      \State $\bm{\hat{w}}_{t+1} \gets$ ExactTraining($\mathbf{C}_r, \bm{\hat{w}}_{t}, \eta, \mathcal{A}$)
		    \Else         \For{$i=m+1,m+2,\cdots,n$}
                    \State \xc{\small $\bm{\Tilde{H}}_t^i(\bm{\hat{w}}_t - \bm{\bar{w}}_t)\leftarrow $ L-BFGS($\Delta\bm{W}_t$, $\Delta\bm{G}_t^i,\bm{\hat{w}}_t - \bm{\bar{w}}_t$)}
                    \State $\bm{\hat{g}}_t^i=\bm{\bar{g}}_t^i + \bm{\Tilde{H}}_t^i (\bm{\hat{w}}_t - \bm{\bar{w}}_t)$ 
                    \If{$\Vert\bm{\hat{g}}_t^i\Vert_\infty> \tau$} \hfill// abnormality fixing
                        \State server sends $\bm{\hat{w}}_t$ to the $i$th client
                        \State $i$th client computes $\bm{g}_t^i=\frac{\partial \mathcal{L}_i(\bm{\hat{w}}_t)}{\partial \bm{\hat{w}}_t}$
                        \State $i$th client reports $\bm{g}_t^i$ to the server
                        \State  $\bm{\hat{g}}_t^i \leftarrow \bm{g}_t^i$
                    \EndIf
                \EndFor
                \State $\bm{\hat{w}}_{t+1} \leftarrow \bm{\hat{w}}_{t} - \eta\cdot \mathcal{A}(\bm{\hat{g}}_t^{m+1}, \bm{\hat{g}}_t^{m+2}, \cdots, \bm{\hat{g}}_t^n)$
	        \EndIf
		\EndFor
		\For{$t=T-T_f,T-T_f+1,\cdots,T-1$} \hfill // final tuning
            \State $\bm{\hat{w}}_{t+1} \gets$ ExactTraining($\mathbf{C}_r, \bm{\hat{w}}_{t}, \eta, \mathcal{A}$)
		\EndFor
		\\\Return $\bm{\hat{w}}_{T}$
	\end{algorithmic} 
\end{algorithm}

\begin{algorithm}[!t]
	\caption{L-BFGS}\label{alg:lbfgs}
	%\scalebox{0.85}{
	\begin{algorithmic}[1]
		\renewcommand{\algorithmicrequire}{\textbf{Input:}}
		\renewcommand{\algorithmicensure}{\textbf{Output:}}
		\Require  A global-model difference buffer $\Delta\bm{W}=[\Delta\bm{w}_{b_1}, \Delta\bm{w}_{b_2}, \cdots, \Delta\bm{w}_{b_s}]$, a model-update difference buffer $\Delta\bm{G}=[\Delta\bm{g}_{b_1}, \Delta\bm{g}_{b_2}, \cdots, \Delta\bm{g}_{b_s}]$, \xc{and a vector $\bm{v}$}.
		\Ensure  Approximated %Hessian matrix $\bm{\Tilde{H}}$. 
		\xc{Hessian-vector product $\bm{\Tilde{H}}\bm{v}$.}
		\State $\bm{A}=\Delta\bm{W}^T\Delta\bm{G}$
		\State $\bm{D}=\text{diag}(\bm{A})$ // diagonal matrix of $\bm{A}$
		\State $\bm{L}=\text{tril}(\bm{A})$ // lower triangular matrix of $\bm{A}$
		\State $\sigma=(\Delta \bm{g}_{b_{s-1}}^T\Delta \bm{w}_{b_{s-1}}) / (\Delta \bm{w}_{b_{s-1}}^T\Delta \bm{w}_{b_{s-1}})$
		\State \xc{$\bm{p}=\begin{bmatrix}
            -\bm{D} & \bm{L}^T \\
            \bm{L} & \sigma\Delta\bm{W}^T\Delta\bm{W}
        \end{bmatrix}^{-1}
        \begin{bmatrix}
            \Delta\bm{G}^T\bm{v}\\
            \sigma\Delta\bm{W}^T\bm{v} 
        \end{bmatrix}$}
        \State \xc{$\bm{\Tilde{H}}\bm{v}=\sigma \bm{v} - \begin{bmatrix}
            \Delta\bm{G} &
            \sigma\Delta\bm{W}
        \end{bmatrix}\bm{p}$}
		\\\Return \xc{$\bm{\Tilde{H}}\bm{v}$}%$\bm{\Tilde{H}}$
	\end{algorithmic} 
	%}
\end{algorithm}

\begin{algorithm}[t]
	\caption{ExactTraining}\label{alg:exact_train}
	\begin{algorithmic}[1]
		\renewcommand{\algorithmicrequire}{\textbf{Input:}}
		\renewcommand{\algorithmicensure}{\textbf{Output:}}
		\Require  Clients $\mathbf{C}$; current global model $\bm{\hat{w}}_{t}$; learning rate $\eta$; and aggregation rule $\mathcal{A}$. %; and batch size $b$.
		\Ensure Updated global model $\bm{\hat{w}}_{t+1}$. 
	    \State server broadcasts $\bm{\hat{w}}_t$ to the clients 
	    \For{$i=1, 2, \cdots, |\mathbf{C}|$} 
	        \State  $i$th client computes exact model update $\bm{g}_{t}^i=\frac{\partial \mathcal{L}_i(\bm{\hat{w}}_t)}{\partial \bm{\hat{w}}_t}$
	        \State  $i$th client reports $\bm{g}_t^i$ to the server
	    \EndFor 
        \State $\bm{\hat{w}}_{t+1} \leftarrow \bm{\hat{w}}_{t} - \eta\cdot\mathcal{A}(\bm{g}_t^{1}, \bm{g}_t^{2}, \cdots, \bm{g}_t^{|\mathbf{C}|})$ 
		\\\Return $\bm{\hat{w}}_{t+1}$
	\end{algorithmic} 
\end{algorithm}

\section{Proof of Theorem~\ref{thm}}
\label{ap:proof}

We aim to show that the the difference between the global model recovered by FedRecover and that recovered by train-from-scratch can be bounded, i.e.,  $\Vert\hat{\bm{w}}_t-\bm{w}_t\Vert$ is bounded. 
Recall that the global model recovered by FedRecover is updated as follows:
\begin{itemize}
    \item {\bf Case I:} If $t< T_w$, or $(t - T_w + 1)$ mod $T_c=0$, or $t\ge T-T_f$,
    \begin{align}
    \label{hat_w_update_case_1}
        \hat{\bm{w}}_{t+1} = \hat{\bm{w}}_{t} - \eta\sum_{i=m+1}^n \frac{|D_i|}{|D'|}\bm{g}^i_{t}.
    \end{align}
    \item {\bf Case II:} Otherwise,
    \begin{align}
    %\label{hat_w_update_case_2}
    \hat{\bm{w}}_{t+1} = \hat{\bm{w}}_{t} - \eta\sum_{i=m+1}^n \frac{|D_i|}{|D'|}[\tilde{\bm{H}}^i_t(\hat{\bm{w}}_t-\bar{\bm{w}}_t)+\bar{\bm{g}}^i_t],
    \label{eq:w_update}
    \end{align}
\end{itemize}
where $D'=\bigcup_{i=m}^n D_i$ is the joint training dataset of the remaining $n-m$ clients. Moreover, let $\bm{h}_t^i$ denote the model update for client $i$ in round $t$ of train-from-scratch. We know that  the global model recovered by train-from-scratch is updated as follows:
\begin{align}
    \bm{w}_{t+1} = \bm{w}_{t} - \eta\sum_{i=m+1}^n \frac{|D_i|}{|D'|}\bm{h}^i_{t}.
    \label{eq:v_update}
\end{align}

Given the updates of $\hat{\bm{w}}_t$ and $\bm{w}_t$ in round $t$, we can bound their difference in round $t+1$ by respectively considering the two cases in $\hat{\bm{w}}_t$'s update.

\textbf{Case I:} We consider $t< T_w$ or $(t - T_w + 1)$ mod $T_c=0$ in this case, i.e., $\hat{\bm{w}}_t$ is updated based on Equation~(\ref{hat_w_update_case_1}).
Specifically, we have the following equation for the difference between $\hat{\bm{w}}_{t+1}$ and $\bm{w}_{t+1}$:
\begin{align}
\label{appendix_case_1_difference_1}
    &\Vert\hat{\bm{w}}_{t+1}-\bm{w}_{t+1}\Vert \\
   \label{appendix_case_1_difference_2} 
    =& \left\Vert(\hat{\bm{w}}_{t} - \eta\sum_{i=m+1}^n \frac{|D_i|}{|D'|}\bm{g}^i_{t}) - (\bm{w}_{t}-\eta\sum_{i=m+1}^n \frac{|D_i|}{|D'|}\bm{h}^i_t)\right\Vert\\
    =& \left\Vert\hat{\bm{w}}_t-\bm{w}_t-\eta\sum_{i=m+1}^n \frac{|D_i|}{|D'|}(\bm{g}^i_t-\bm{h}^i_t)\right\Vert.
\end{align}
We have Equation~(\ref{appendix_case_1_difference_2}) from~(\ref{appendix_case_1_difference_1}) based on Equation~(\ref{hat_w_update_case_1}) and~(\ref{eq:v_update}).
For simplicity, we denote $A_1 = \Vert\hat{\bm{w}}_t-\bm{w}_t-\eta\sum\limits_{i=m}^n \frac{|D_i|}{|D'|}(\bm{g}^i_t-\bm{h}^i_t)\Vert$. Then, we have:
\begin{align}
    A_1^2 =& \Vert\hat{\bm{w}}_t-\bm{w}_t\Vert^2 - 2\eta\langle\hat{\bm{w}}_t-\bm{w}_t,\sum_{i=m+1}^n \frac{|D_i|}{|D'|}(\bm{g}^i_t-\bm{h}^i_t)\rangle\nonumber\\ 
    \label{appendix_case_1_inequality_1}
    &+ \eta^2\left\Vert\sum_{i=m+1}^n \frac{|D_i|}{|D'|}(\bm{g}^i_t-\bm{h}^i_t)\right\Vert^2\\
   \le & \Vert\hat{\bm{w}}_t-\bm{w}_t\Vert^2 - 2\eta\sum_{i=m+1}^n \frac{|D_i|}{|D'|}\langle\hat{\bm{w}}_t-\bm{w}_t,\bm{g}^i_t-\bm{h}^i_t\rangle\nonumber\\ 
   \label{appendix_case_1_inequality_2}
    &+ \eta^2\sum_{i=m+1}^n \frac{|D_i|^2}{|D'|^2}\Vert\bm{g}^i_t-\bm{h}^i_t\Vert^2\\
    =& (\Vert\hat{\bm{w}}_t-\bm{w}_t\Vert^2 - \eta\sum_{i=m+1}^n \frac{|D_i|}{|D'|}\langle\hat{\bm{w}}_t-\bm{w}_t,\bm{g}^i_t-\bm{h}^i_t\rangle)\nonumber\\ 
    &- \left(\eta\sum_{i=m+1}^n \frac{|D_i|}{|D'|}\langle\hat{\bm{w}}_t-\bm{w}_t,\bm{g}^i_t-\bm{h}^i_t\rangle \right. \nonumber\\
    &\quad- \left.\eta^2\sum_{i=m+1}^n \frac{|D_i|^2}{|D'|^2}\Vert\bm{g}^i_t-\bm{h}^i_t\Vert^2\right),
    \label{eq:A1_part1}
\end{align}
where $\langle \cdot \text{ } , \cdot \rangle$ represents the inner product of two vectors. We have Equation~(\ref{appendix_case_1_inequality_2}) from~(\ref{appendix_case_1_inequality_1}) based on triangle inequality. Recall that in Assumption \ref{as:convex}, we assume the loss function $\mathcal{L}_i$ is $\mu$-strongly convex and $L$-smooth for any $i$. Thus, we have the following two inequalities:
\begin{align}
    \langle\hat{\bm{w}}_t-\bm{w}_t,\bm{g}^i_t-\bm{h}^i_t\rangle &\ge \mu\Vert\hat{\bm{w}}_t-\bm{w}_t\Vert^2,\\
    \langle\hat{\bm{w}}_t-\bm{w}_t,\bm{g}^i_t-\bm{h}^i_t\rangle &\ge \frac{1}{L}\Vert\bm{g}^i_t-\bm{h}^i_t\Vert^2.
\end{align}
Given the above two inequalities, we can bound $A_1^2$ based on Equation~(\ref{appendix_case_1_inequality_1})~-~(\ref{eq:A1_part1}).  Specifically, we can obtain the following bound:
\begin{align}
     A_1^2 &
    \le (\Vert\hat{\bm{w}}_t-\bm{w}_t\Vert^2 - \eta\mu\Vert\hat{\bm{w}}_t-\bm{w}_t\Vert^2)\nonumber\\
    &- \eta\left(\frac{1}{L}\sum_{i=m+1}^n \frac{|D_i|}{|D'|}\Vert\bm{g}^i_t-\bm{h}^i_t\Vert^2 \right. \nonumber\\
    &\quad -  \left.\eta\sum_{i=m+1}^n \frac{|D_i|^2}{|D'|^2}\Vert\bm{g}^i_t-\bm{h}^i_t\Vert^2\right)\\
    \label{appendixx_case_1_bounding_A_1_1}
    &= (1-\eta\mu)\Vert\hat{\bm{w}}_t-\bm{w}_t\Vert^2 \nonumber\\
    &- \eta\sum_{i=m+1}^n(\frac{|D_i|}{L|D'|}-\frac{\eta|D_i|^2}{|D'|^2})\Vert\bm{g}^i_t-\bm{h}^i_t\Vert^2.
\end{align}
When the learning rate $\eta$ satisfies $\eta\le\frac{1}{L}\le \frac{|D'|}{L\cdot\max_{i=m}^{n}|D_i|}$, we have  $\frac{|D_i|}{L|D'|}-\frac{\eta|D_i|^2}{|D'|^2}\ge0$ for any $i=m+1, m+2,\cdots,n$. Therefore, we obtain the following inequality from Equation~(\ref{appendixx_case_1_bounding_A_1_1}):
\begin{align}
    A_1^2 \le (1-\eta\mu)\Vert\hat{\bm{w}}_t-\bm{w}_t\Vert^2.
\end{align}

And we can bound $A_1$ as follows:
\begin{align}
    A_1 \le \sqrt{1-\eta\mu}\Vert\hat{\bm{w}}_t-\bm{w}_t\Vert
    \label{eq:A1}.
\end{align}
Next, we consider the second case in $\hat{\bm{w}}_t$'s update.

\textbf{Case II:} In this case, we consider $t\ge T_w$ and $(t - T_w + 1)$ mod $T_c\neq0$, i.e., $\hat{\bm{w}}_t$ is updated based on Equation~(\ref{eq:w_update}). In particular, %we have the following for 
we can bound the difference between $\hat{\bm{w}}_{t+1}$ and $\bm{w}_{t+1}$ as follows:
\begin{align}
    &\Vert\hat{\bm{w}}_{t+1}-\bm{w}_{t+1}\Vert \nonumber\\
    =& \left\Vert\left(\hat{\bm{w}}_{t} - \eta\sum_{i=m+1}^n \frac{|D_i|}{|D'|}[\tilde{\bm{H}}^i_t(\hat{\bm{w}}_t-\bar{\bm{w}}_t)+\bar{\bm{g}}^i_t]\right) \right. \nonumber\\
    &- \left.[\bm{w}_{t}-\eta\sum_{i=m+1}^n \frac{|D_i|}{|D'|}\bm{h}^i_t]\right\Vert\\
    =& \left\Vert\hat{\bm{w}}_t-\bm{w}_t-\eta\sum_{i=m+1}^n \frac{|D_i|}{|D'|}(\bm{g}^i_t-\bm{h}^i_t)\right.\nonumber\\
    &+ \left.\eta\sum_{i=m+1}^n \frac{|D_i|}{|D'|}[\bm{g}^i_t - \tilde{\bm{H}}^i_t(\hat{\bm{w}}_t-\bar{\bm{w}}_t) - \bar{\bm{g}}^i_t]\right\Vert\\
    \le & \left\Vert\hat{\bm{w}}_t-\bm{w}_t-\eta\sum_{i=m+1}^n \frac{|D_i|}{|D'|}(\bm{g}^i_t-\bm{h}^i_t)\right\Vert \nonumber\\
    \label{appendix_eq_case2}
    &+ \left\Vert\eta\sum_{i=m+1}^n \frac{|D_i|}{|D'|}[\bm{g}^i_t - \tilde{\bm{H}}^i_t(\hat{\bm{w}}_t-\bar{\bm{w}}_t) - \bar{\bm{g}}^i_t]\right\Vert,
\end{align}
where we have the last inequality based on triangle inequality.

We notice that the first term in the last inequality is $A_1$. For simplicity, let $A_2=\Vert\eta\sum_{i=m+1}^n \frac{|D_i|}{|D'|}[\bm{g}^i_t - \tilde{\bm{H}}^i_t(\hat{\bm{w}}_t-\bar{\bm{w}}_t) - \bar{\bm{g}}^i_t]\Vert$ be the second term. Based on Assumption \ref{as:approx}, 
we can bound $A_2$ as follows:
\begin{align}
    A_2 
    \le \eta\sum_{i=m+1}^n \frac{|D_i|}{|D'|}\Vert\bm{g}^i_t - \tilde{\bm{H}}^i_t(\hat{\bm{w}}_t-\bar{\bm{w}}_t) - \bar{\bm{g}}^i_t\Vert
    \le& \eta M.
    \label{eq:A2}
\end{align}
Substituting Equation (\ref{eq:A1}) and (\ref{eq:A2}) into Equation (\ref{appendix_eq_case2}), we obtain the following bound:
{\small
\begin{align}
    \Vert\hat{\bm{w}}_{t+1}-\bm{w}_{t+1}\Vert  \le A_1 + A_2 \le\sqrt{1-\eta\mu}\Vert\hat{\bm{w}}_t-\bm{w}_t\Vert+\eta M
\end{align}
}

Combining case I and case II, we can bound the difference between $\hat{\bm{w}}_{t+1}$ and $\bm{w}_{t+1}$ in round $t+1$ as follows:
\begin{align}
\label{recursive_inequality}
    \forall t\ge0, \Vert\hat{\bm{w}}_{t+1}-\bm{w}_{t+1}\Vert\le\sqrt{1-\eta\mu}\Vert\hat{\bm{w}}_t-\bm{w}_t\Vert+\eta M
\end{align}

By applying the inequality in Equation~(\ref{recursive_inequality}) recursively, we have the following bound for any $t\ge 0$:
{\small
\begin{align}
    \Vert\hat{\bm{w}}_{t}-\bm{w}_{t}\Vert &\le  (\sqrt{1-\eta\mu})^{t}\Vert\hat{\bm{w}}_{0}-\bm{w}_{0}\Vert%\nonumber\\
    + \frac{1-(\sqrt{1-\eta\mu})^{t}}{1-\sqrt{1-\eta\mu}}\eta M,
\end{align}}
where $\hat{\bm{w}}_{0}$ and $\bm{w}_{0}$ are the initializations of $\hat{\bm{w}}$ and $\bm{w}$, respectively.
When the learning rate $\eta$ satisfies $\eta\le\text{min}(\frac{1}{\mu}, \frac{1}{L})$, the upper bound converges to $\frac{1}{1-\sqrt{1-\eta\mu}}\eta M$ as $t$ goes to $\infty$.\qed

\begin{figure}[!t]
%  \vspace{-2mm}
    \centering
    \subfloat{\includegraphics[width=0.20\textwidth]{figs/mnist/mean/mnist_m_er.pdf}}
      \subfloat{\includegraphics[width=0.20\textwidth]{figs/mnist/mean/mnist_m_cost.pdf}}   \\
    \subfloat{\includegraphics[width=0.20\textwidth]{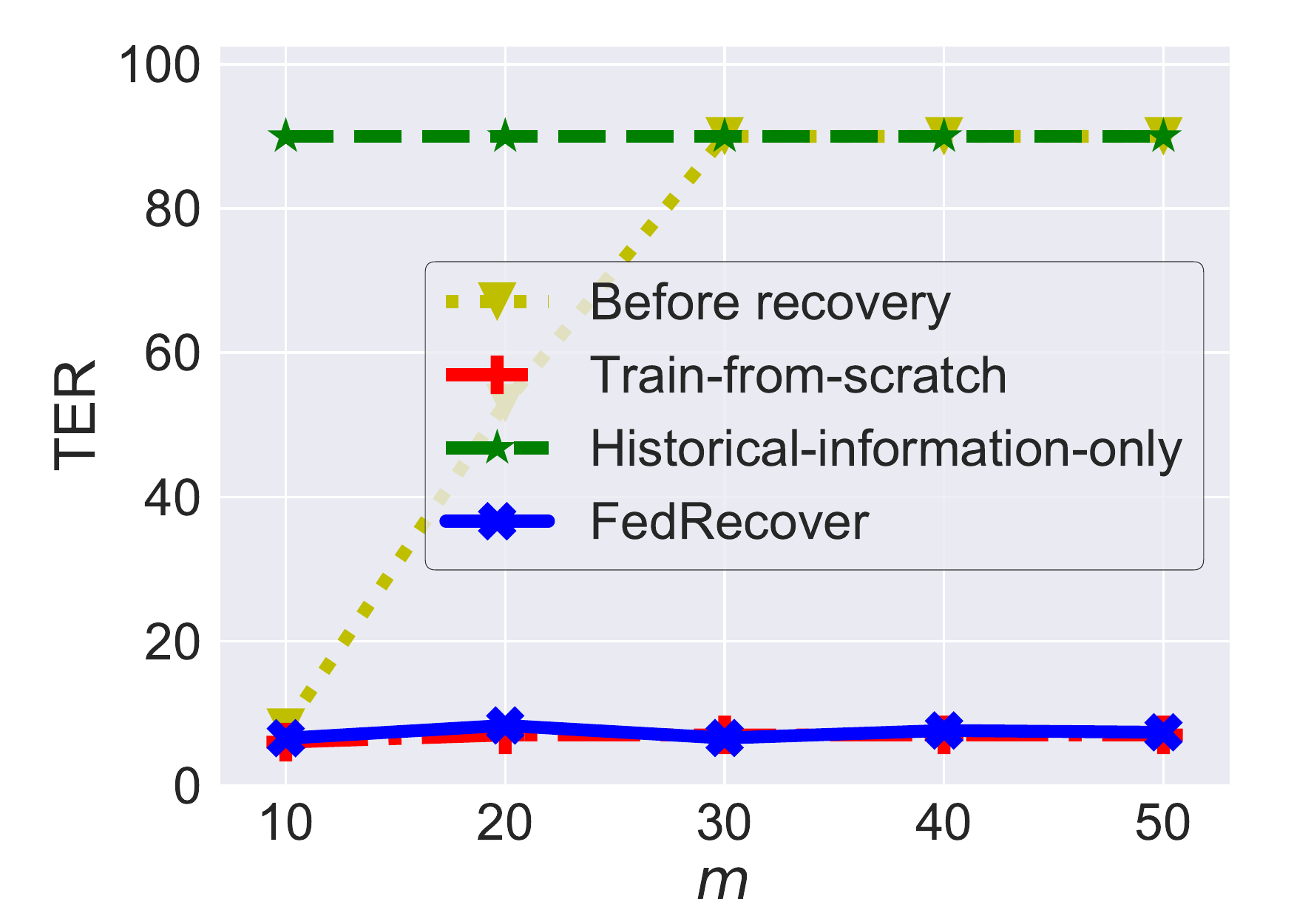}}
      \subfloat{\includegraphics[width=0.20\textwidth]{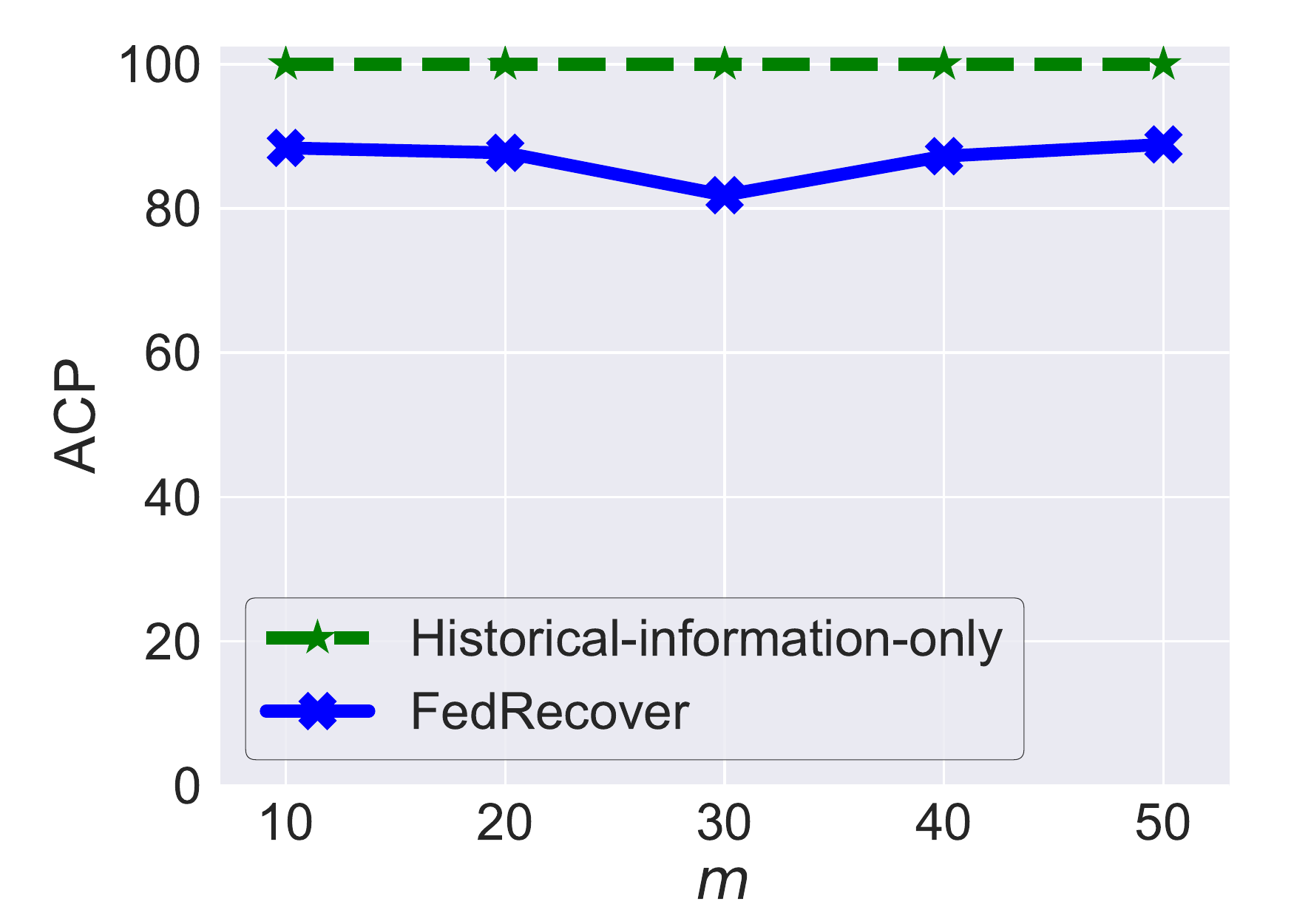}}
   \vspace{-2mm}
    \caption{Effect of the number of malicious clients $m$ on recovery from Trim attack. 
    The aggregation rules are FedAvg (first row) and  Median (second row).}
    \label{fig:m_fedavg_median}
\end{figure}

\begin{figure}[!t]
\vspace{-3mm}
    \centering
        \subfloat{\includegraphics[width=.15\textwidth]{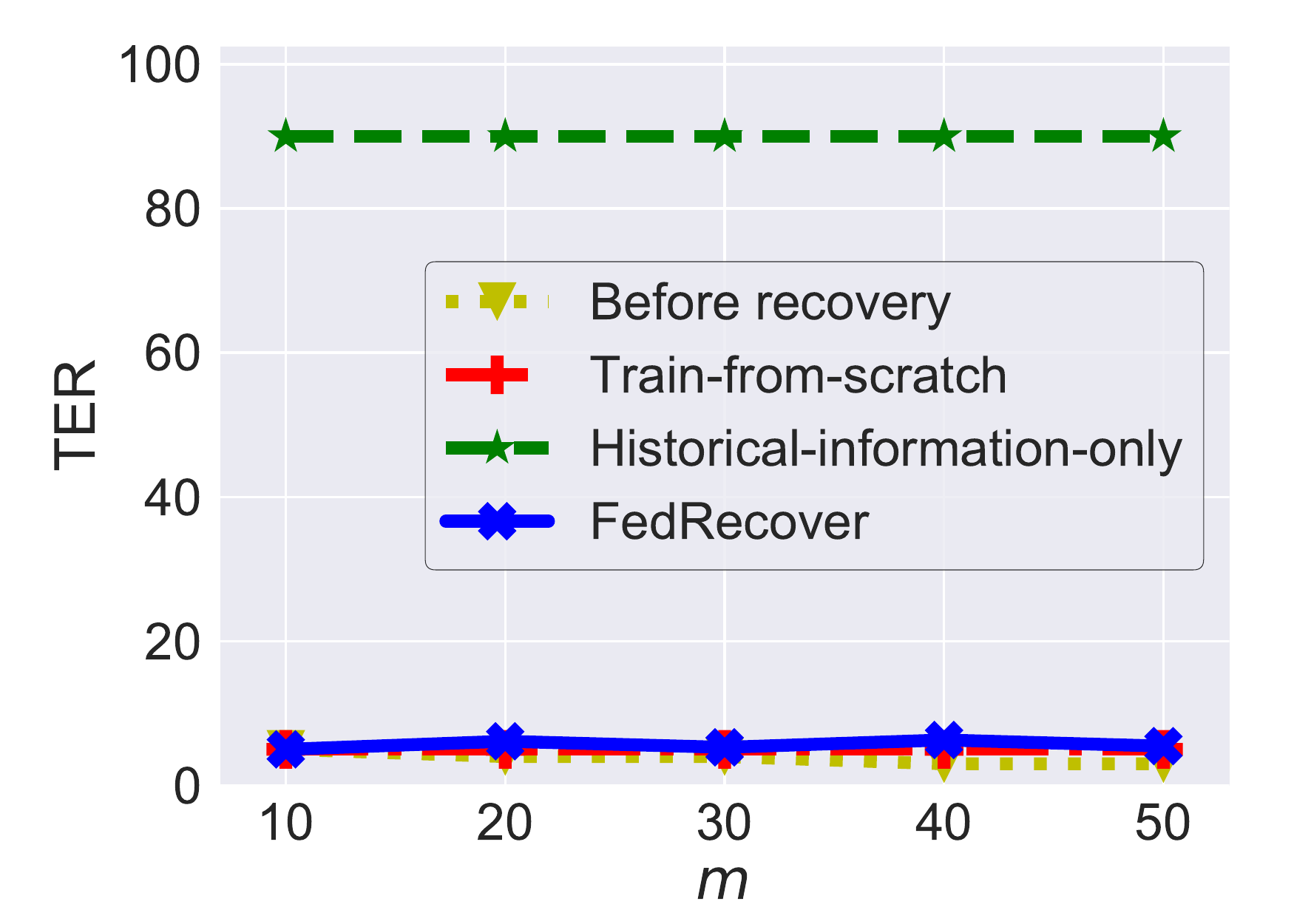}}
       \subfloat{\includegraphics[width=.15\textwidth]{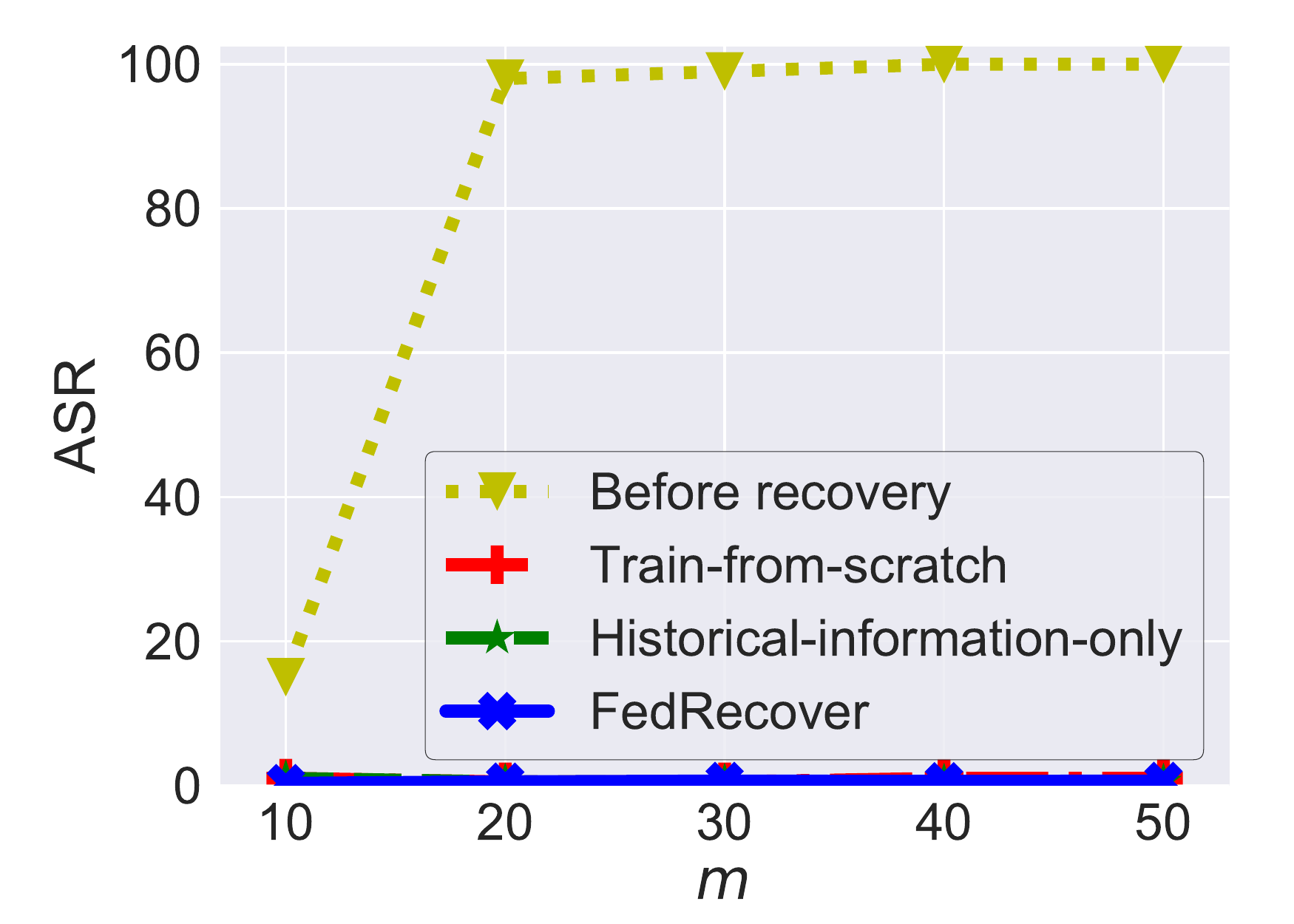}}
       \subfloat{\includegraphics[width=.15\textwidth]{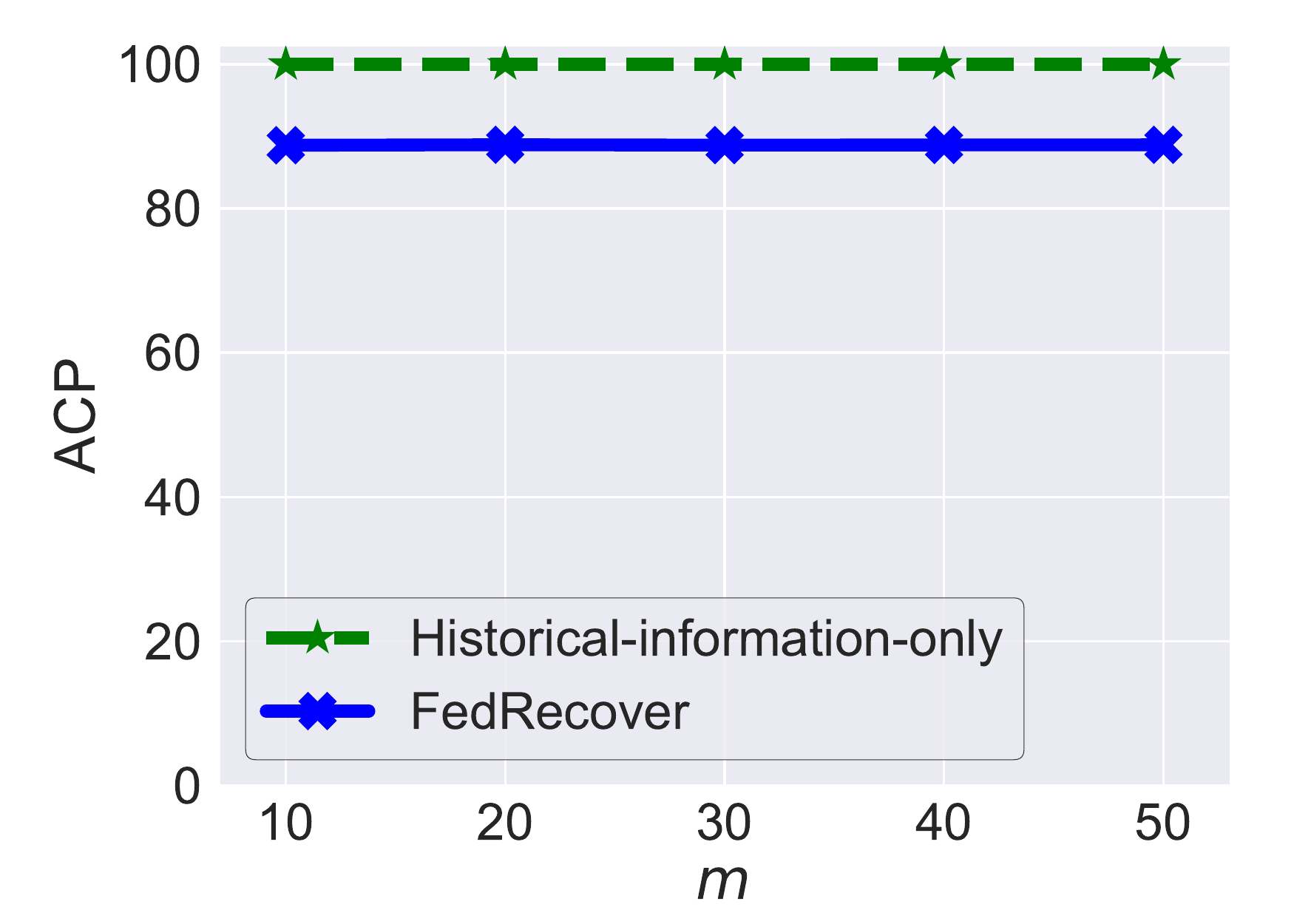}}\\
        \subfloat{\includegraphics[width=.15\textwidth]{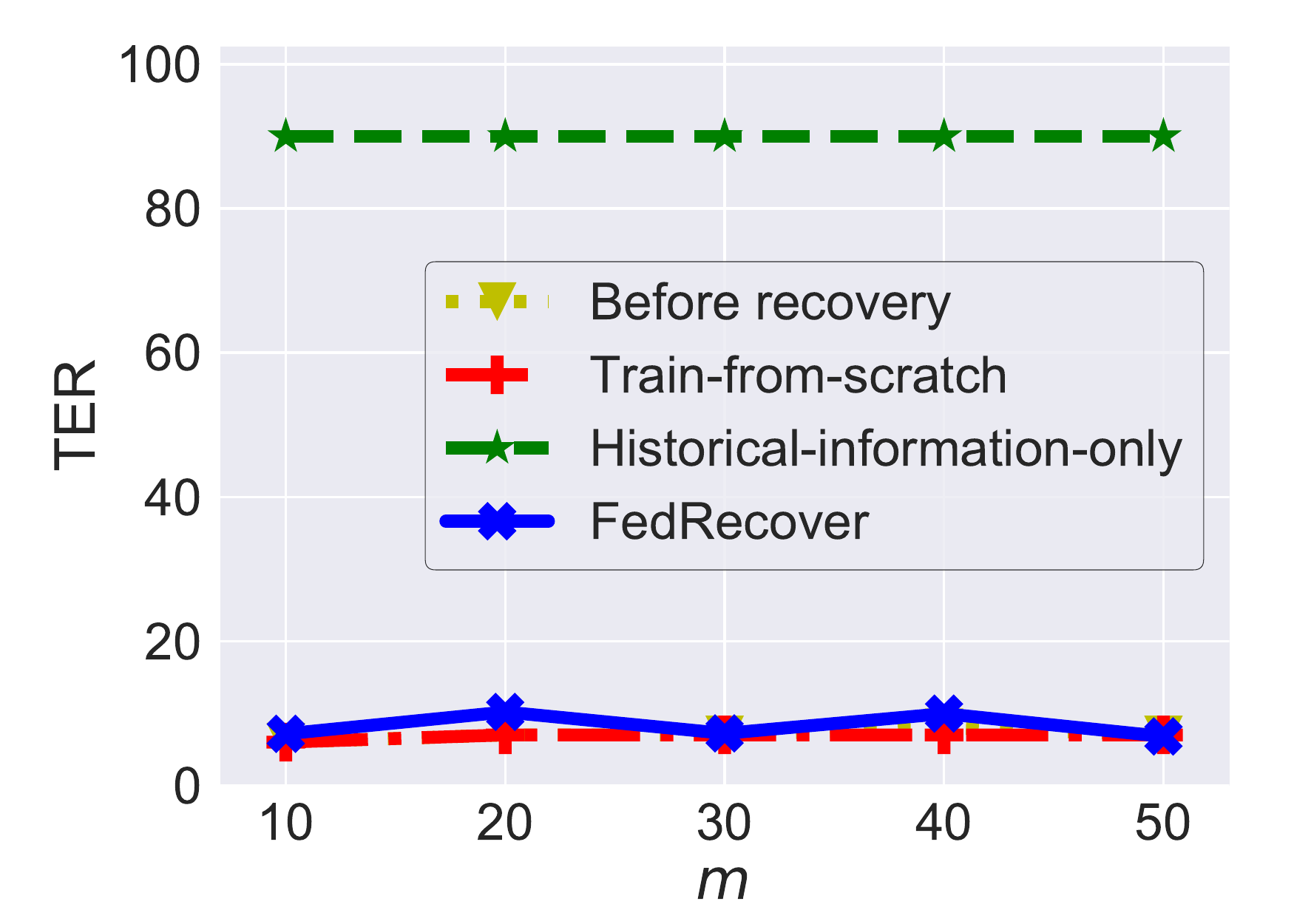}}
       \subfloat{\includegraphics[width=.15\textwidth]{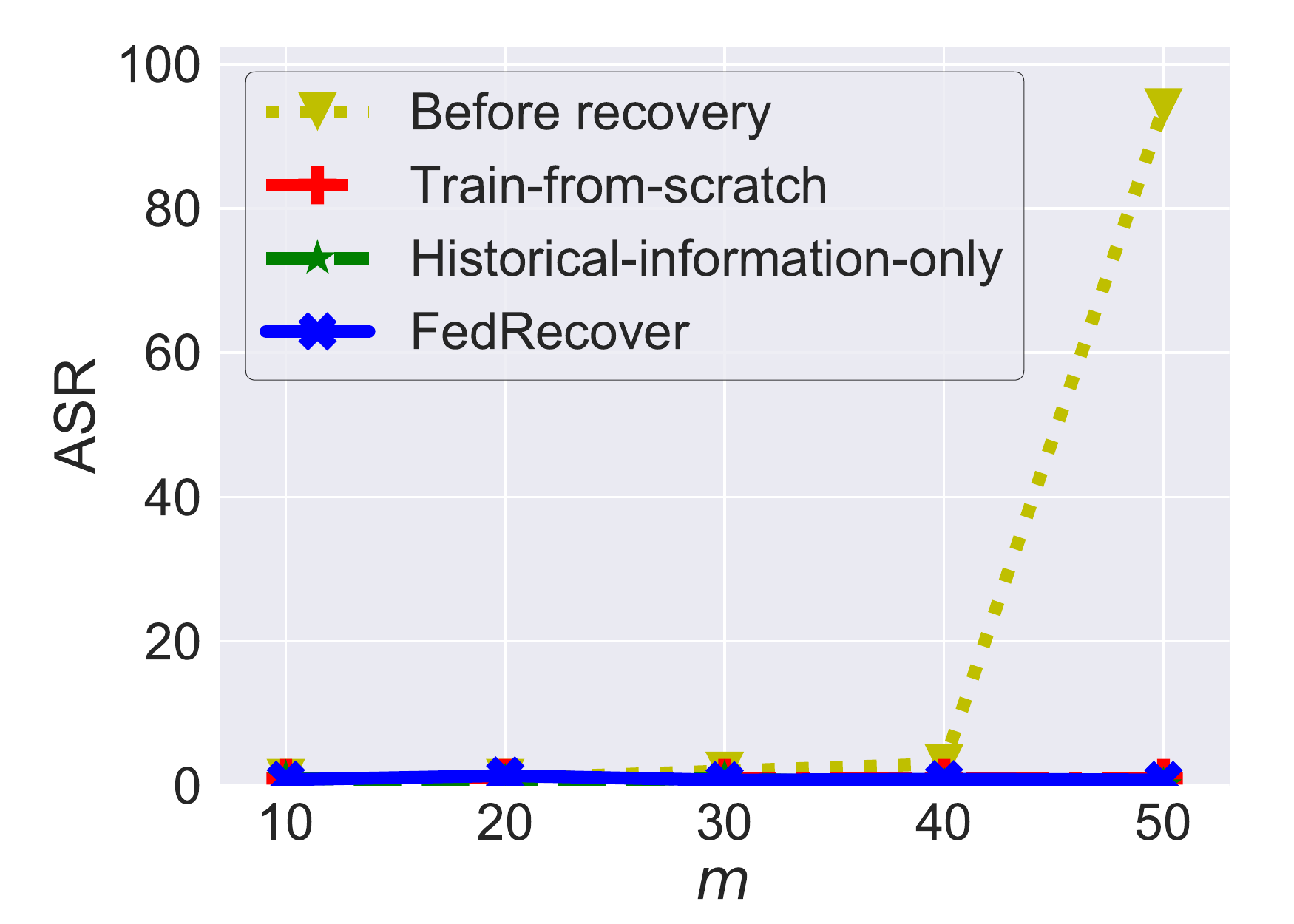}}
        \subfloat{\includegraphics[width=.15\textwidth]{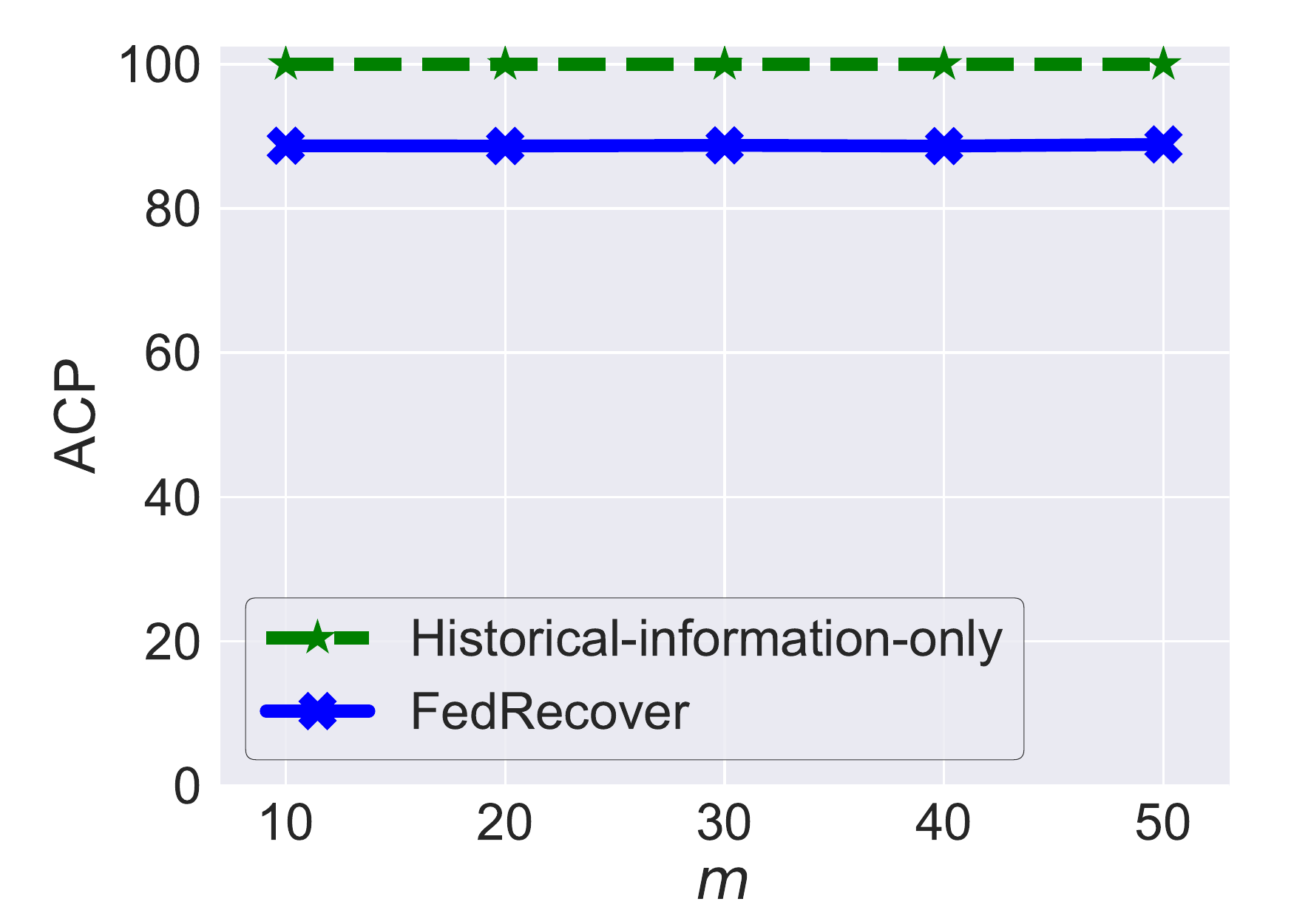}} \\
        \setcounter{subfigure}{0}
        \subfloat[TER]{\includegraphics[width=.15\textwidth]{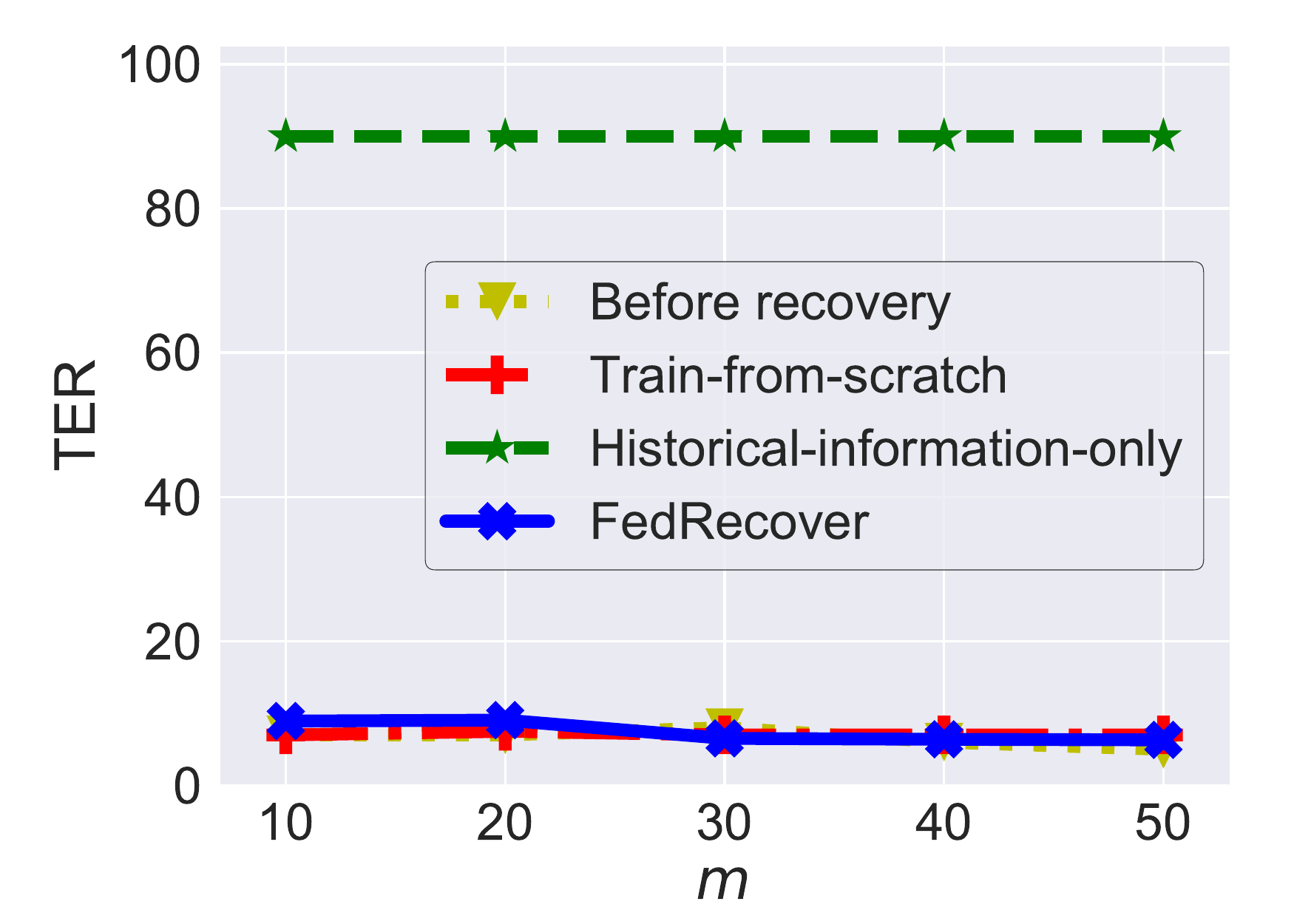}} 
       \subfloat[ASR]{\includegraphics[width=.15\textwidth]{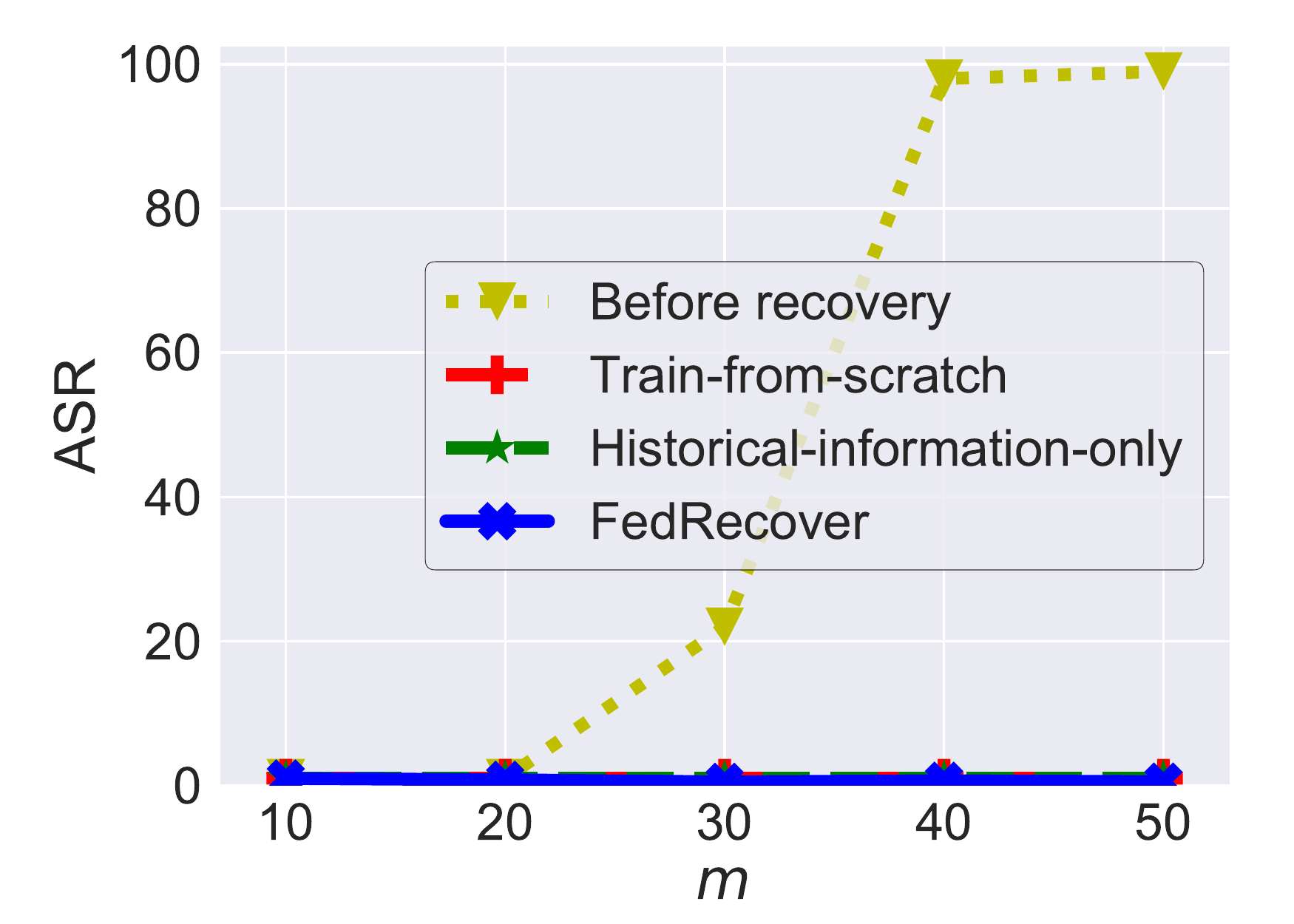}} 
        \subfloat[ACP]{\includegraphics[width=.15\textwidth]{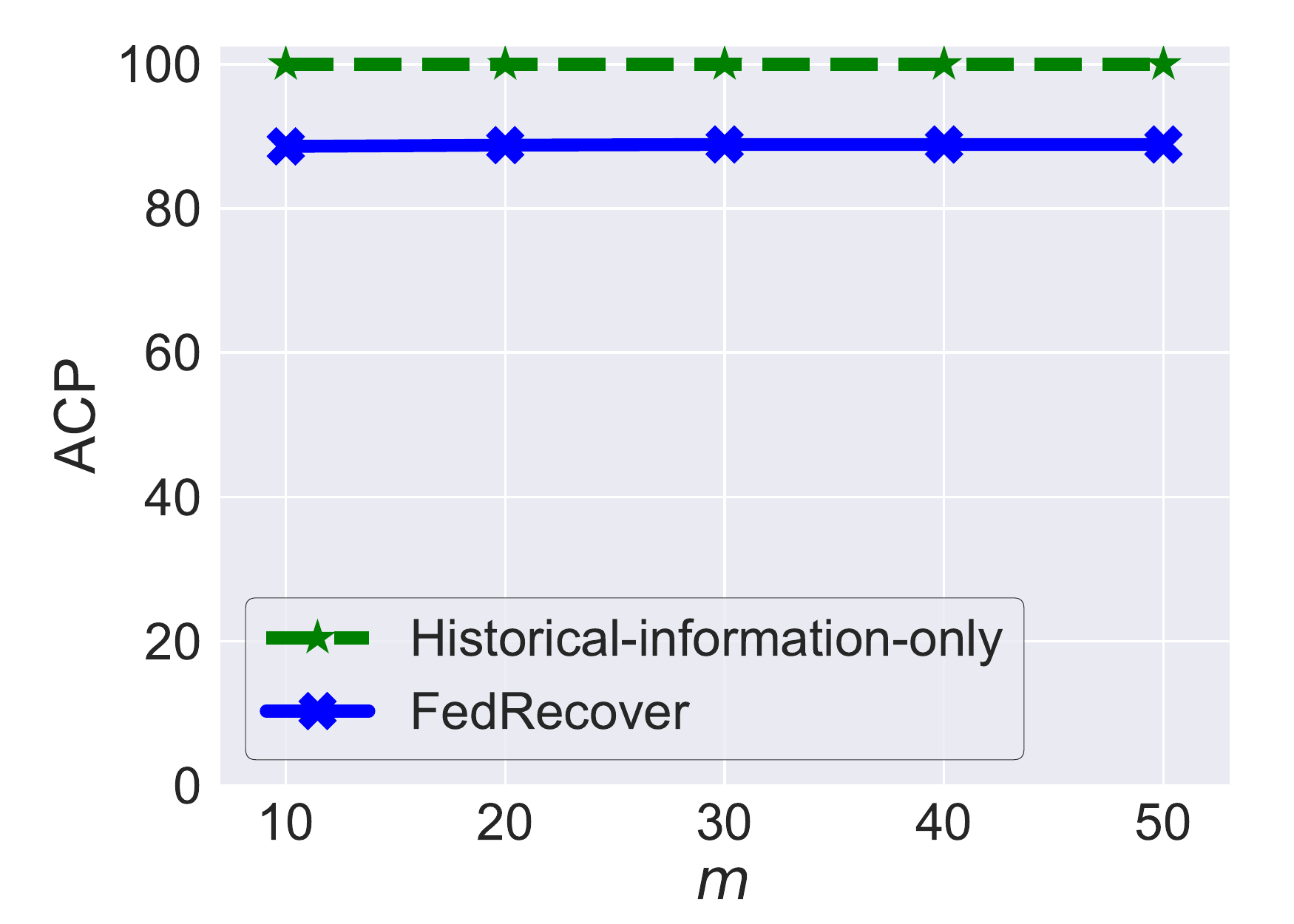}}
   \vspace{-2mm}
    \caption{Effect of the number of malicious clients $m$ on recovery from backdoor attack. 
    The aggregation rules are FedAvg (first row), Median (second row), and Trimmed-mean (third row).}
    \label{fig:bd_m}
    %   \vspace{-2mm}
\end{figure}

\begin{figure}[!hbpt]
    \centering
    {\includegraphics[width=0.20\textwidth]{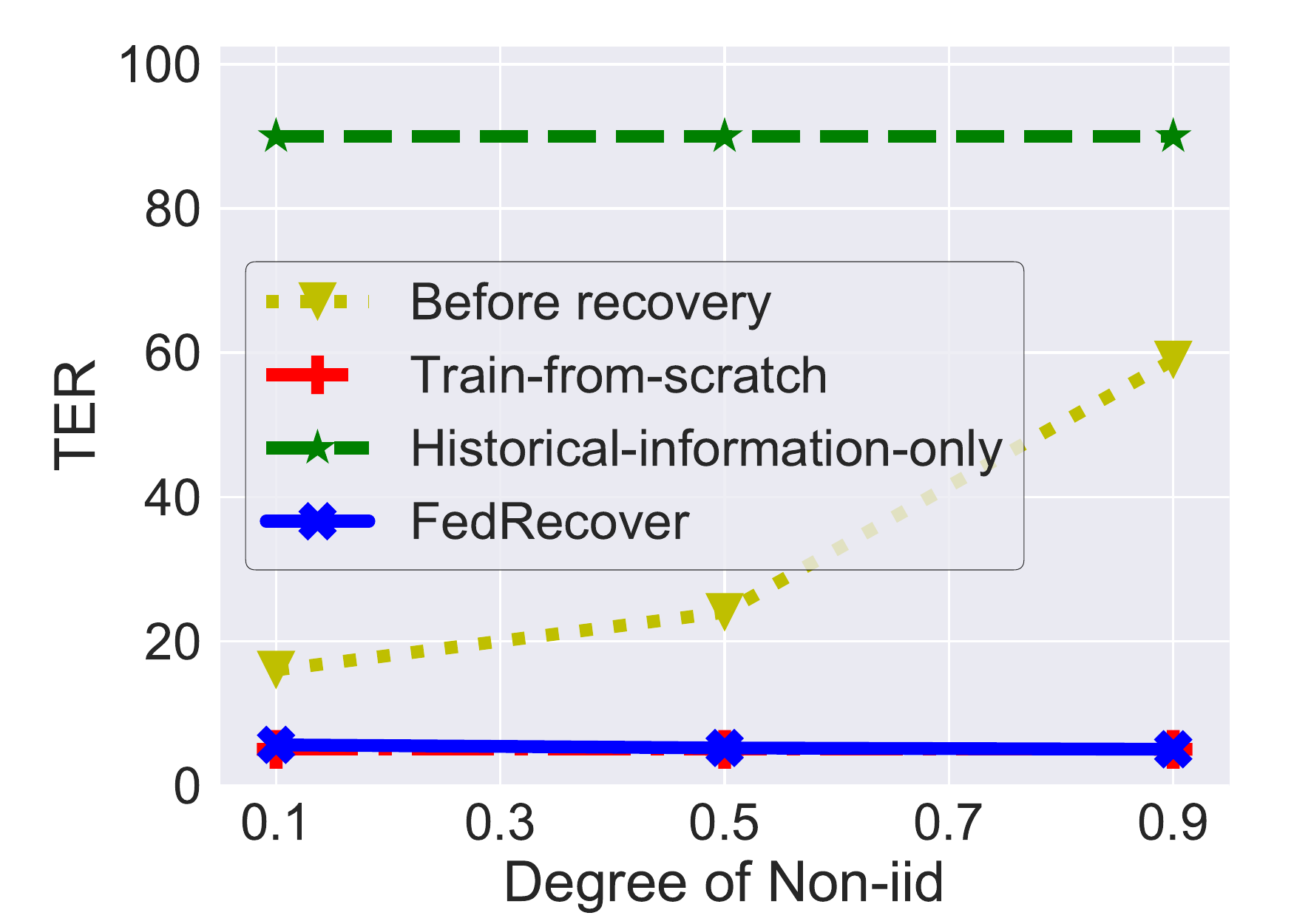}}
   {\includegraphics[width=0.20\textwidth]{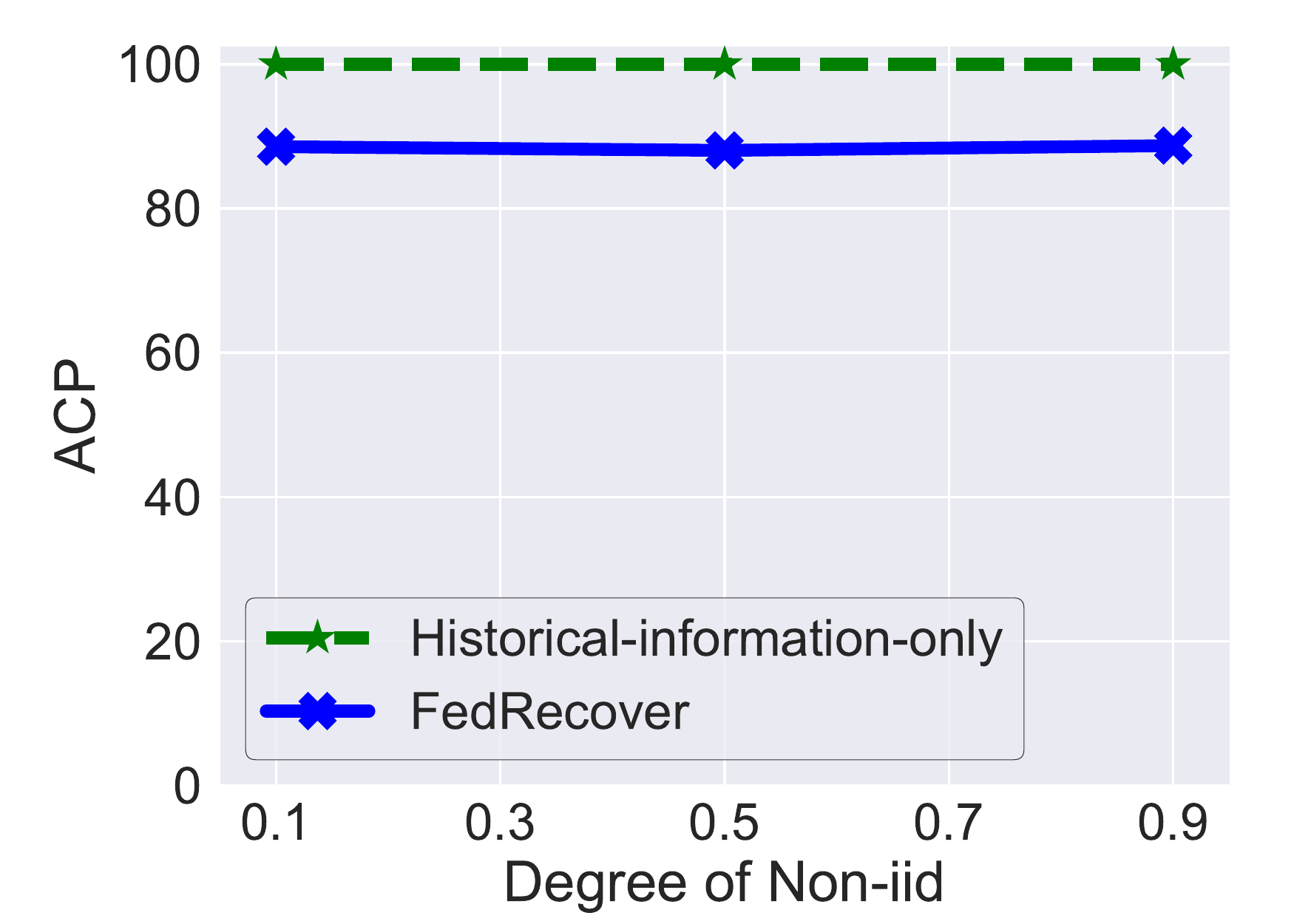}}  \\
    \subfloat[TER]{\includegraphics[width=0.20\textwidth]{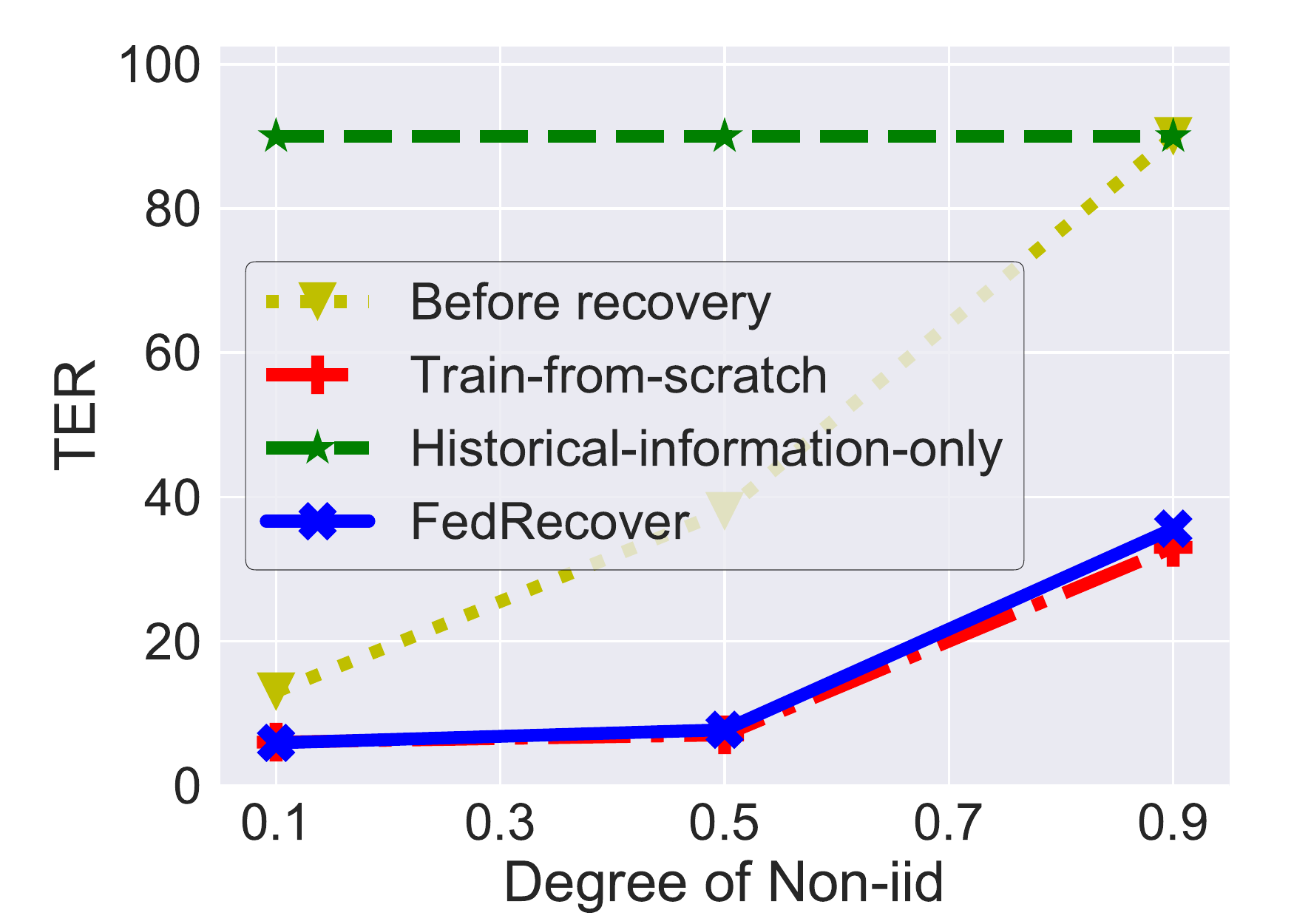}}
    \subfloat[ACP]{\includegraphics[width=0.20\textwidth]{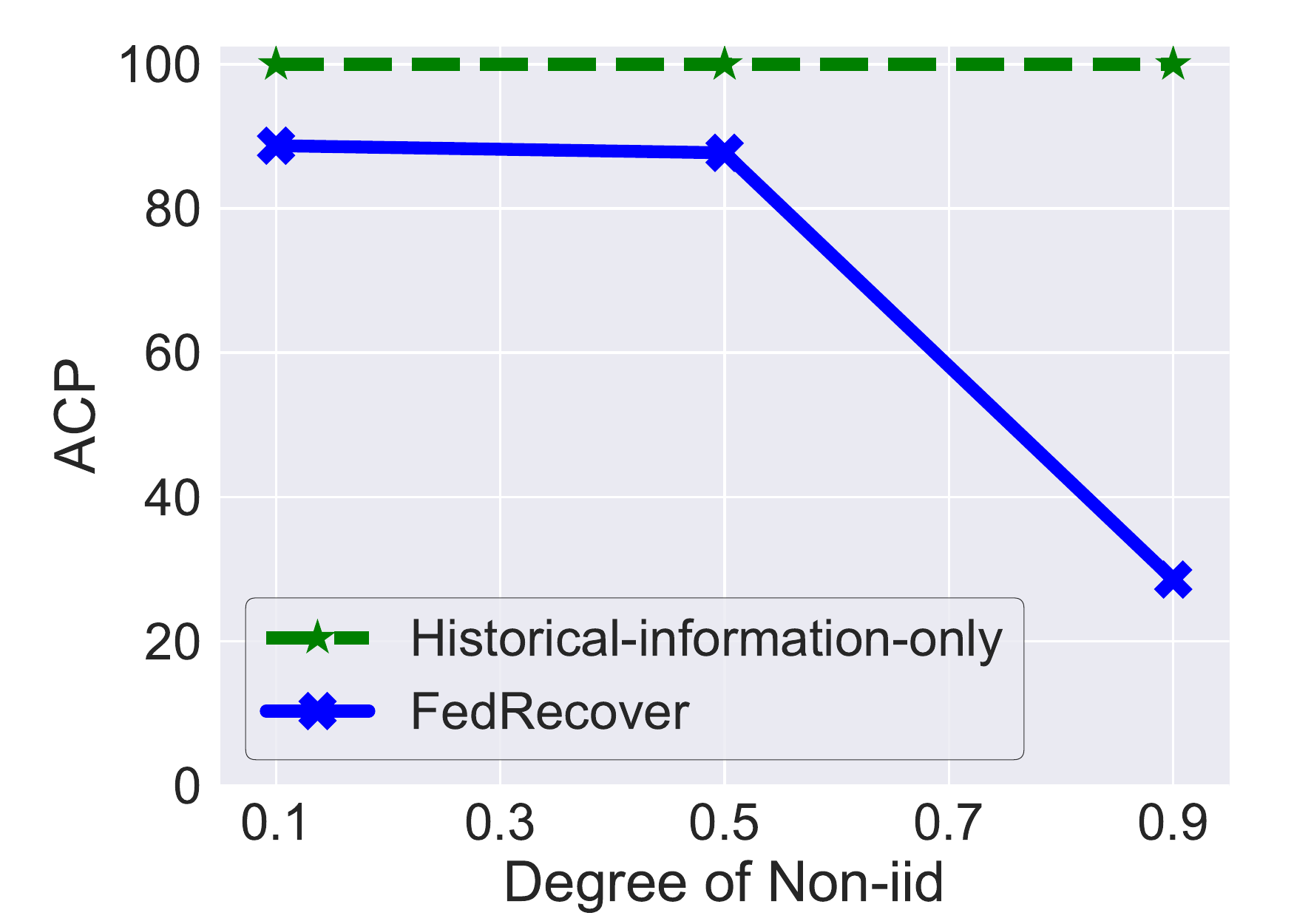}} 
    \vspace{-2mm}
    \caption{Effect of degree of non-iid on recovery from Trim attack.  The aggregation rules are FedAvg (first row) and Median (second row).}
    \label{fig:noniid_fedavg_median}
\end{figure}

\begin{figure}[!thp]
    \centering
    \subfloat{\includegraphics[width=.15\textwidth]{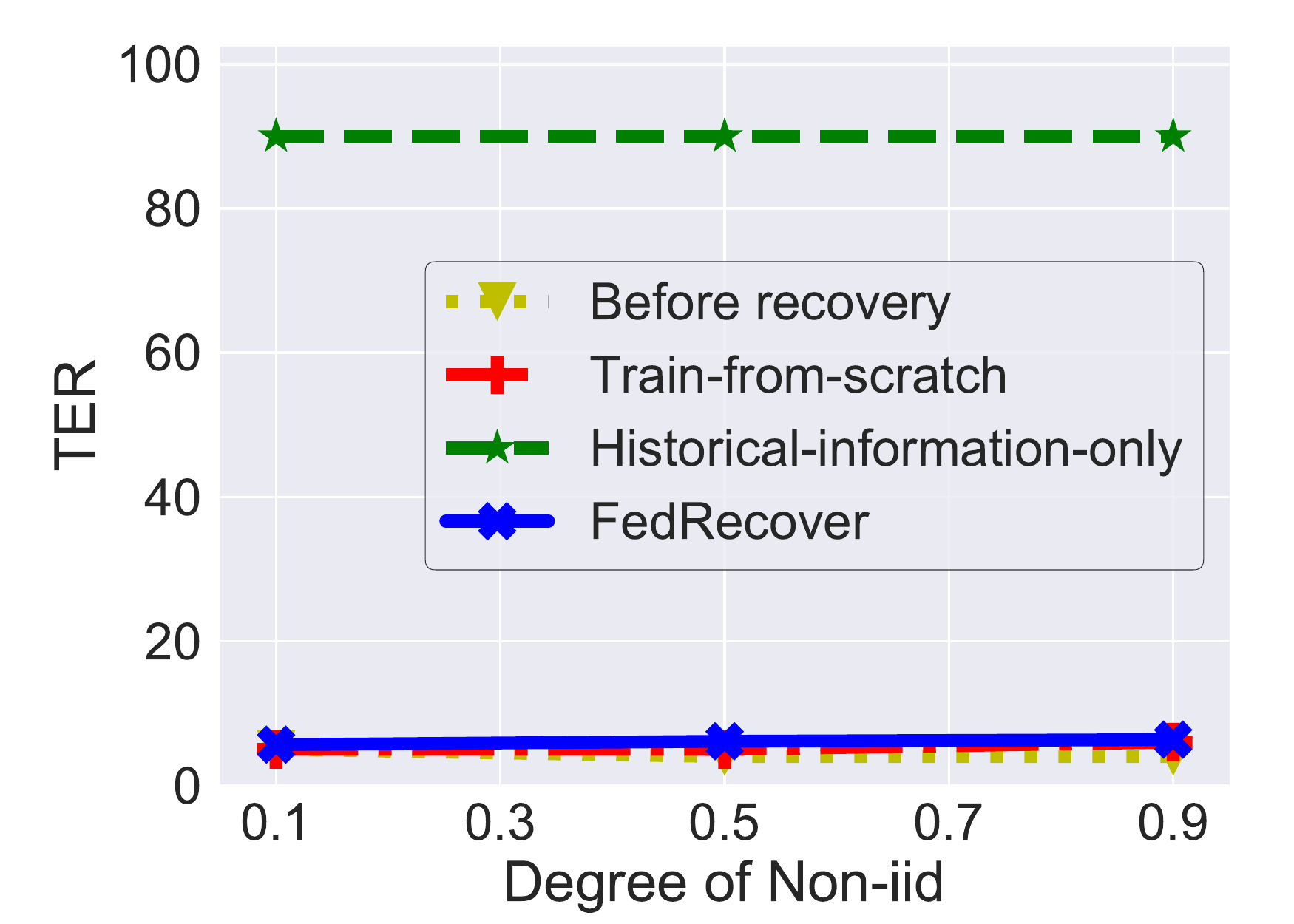}}
      \subfloat{\includegraphics[width=.15\textwidth]{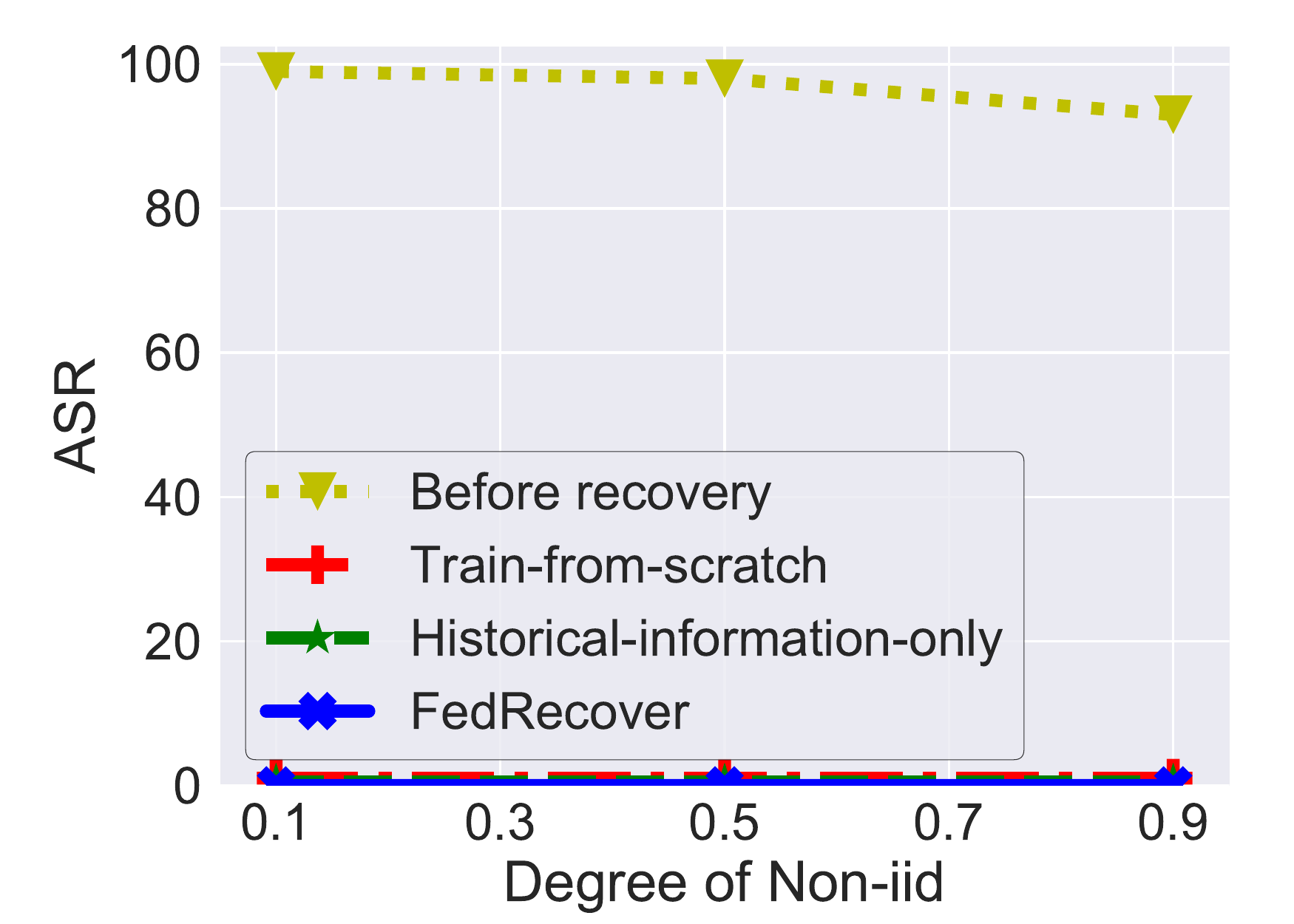}}     
      \subfloat{\includegraphics[width=.15\textwidth]{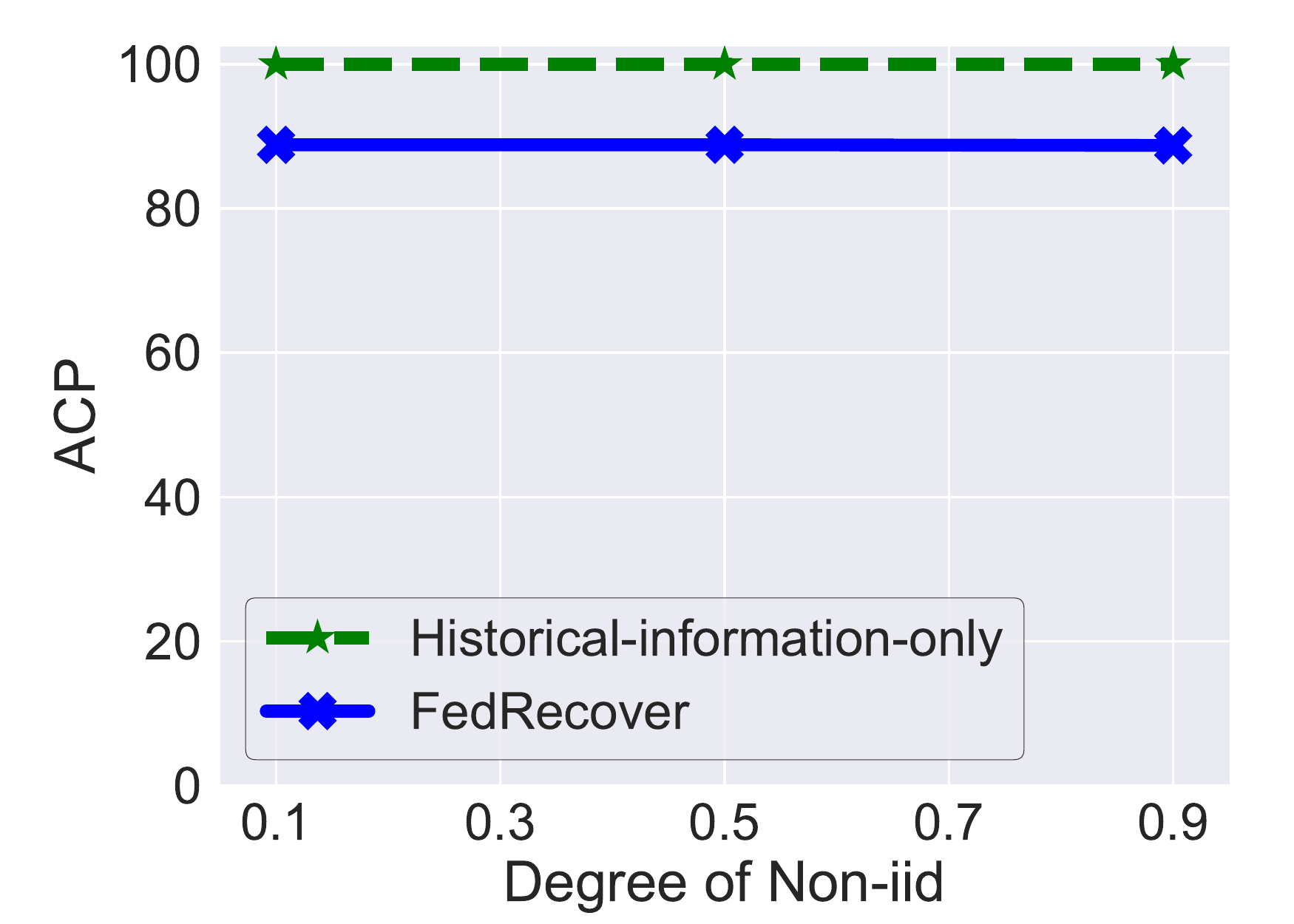}}  \\
    \subfloat{\includegraphics[width=.15\textwidth]{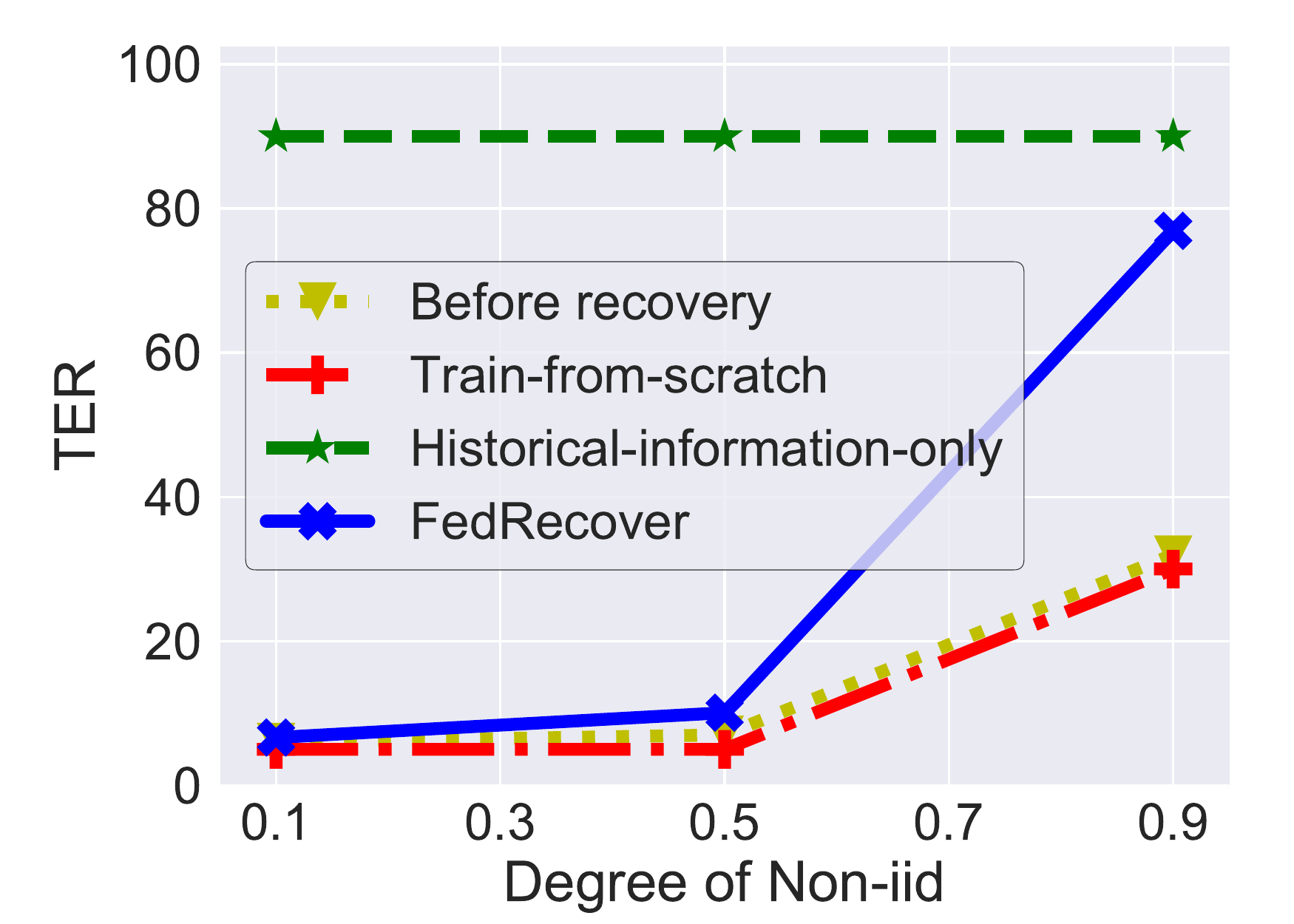}}
      \subfloat{\includegraphics[width=.15\textwidth]{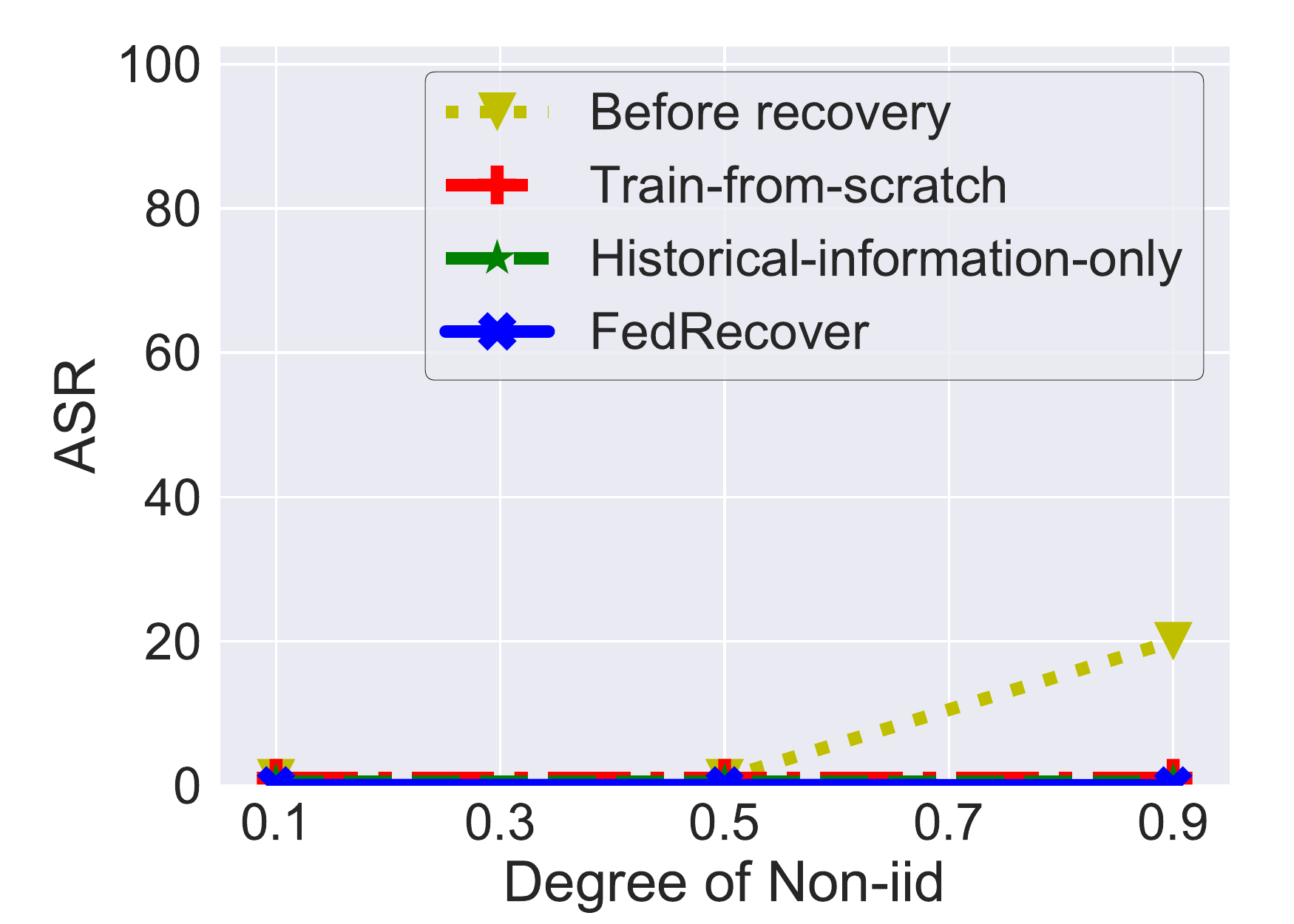}}     
      \subfloat{\includegraphics[width=.15\textwidth]{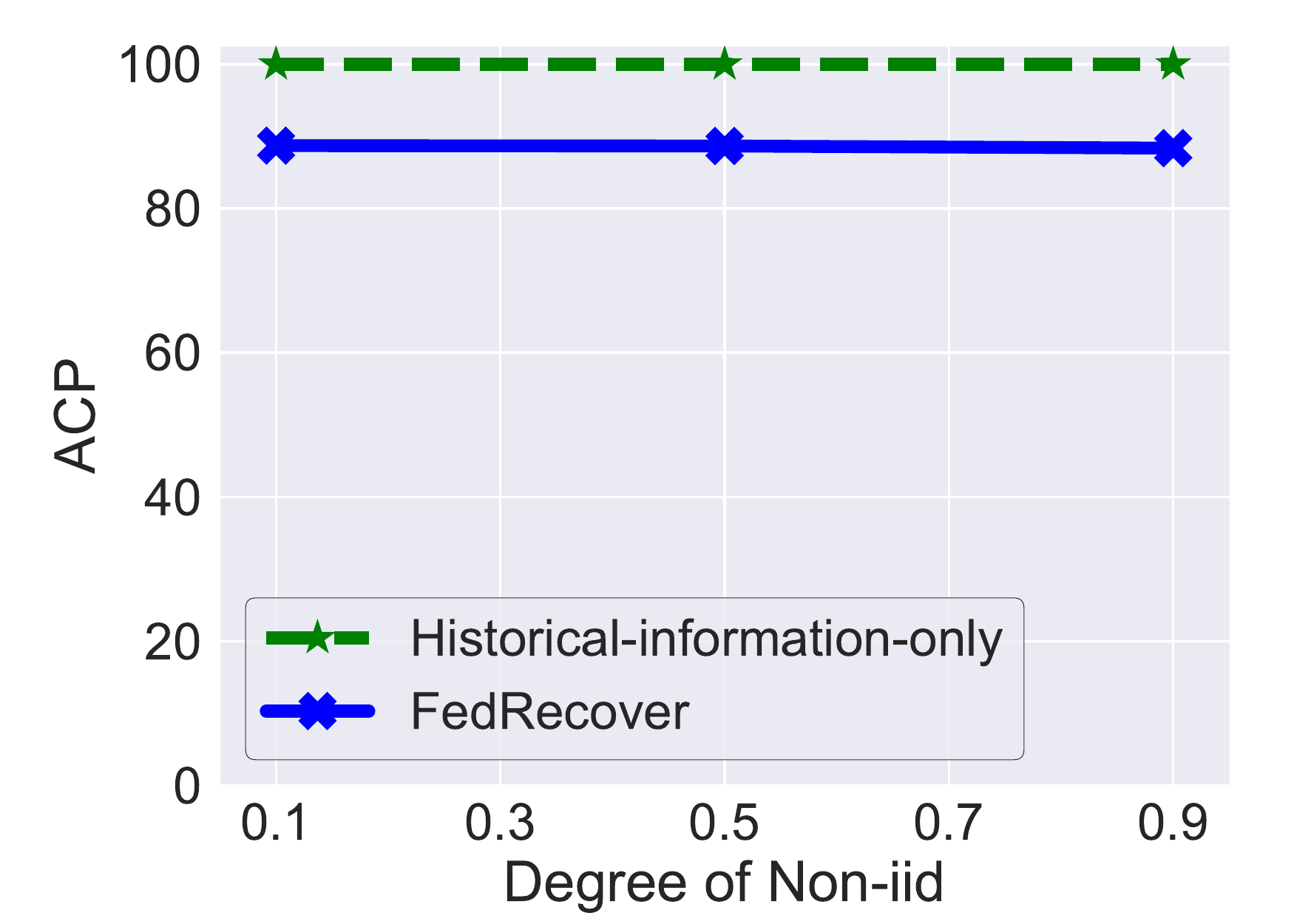}}  \\
            \setcounter{subfigure}{0}
      \subfloat[TER]{\includegraphics[width=.15\textwidth]{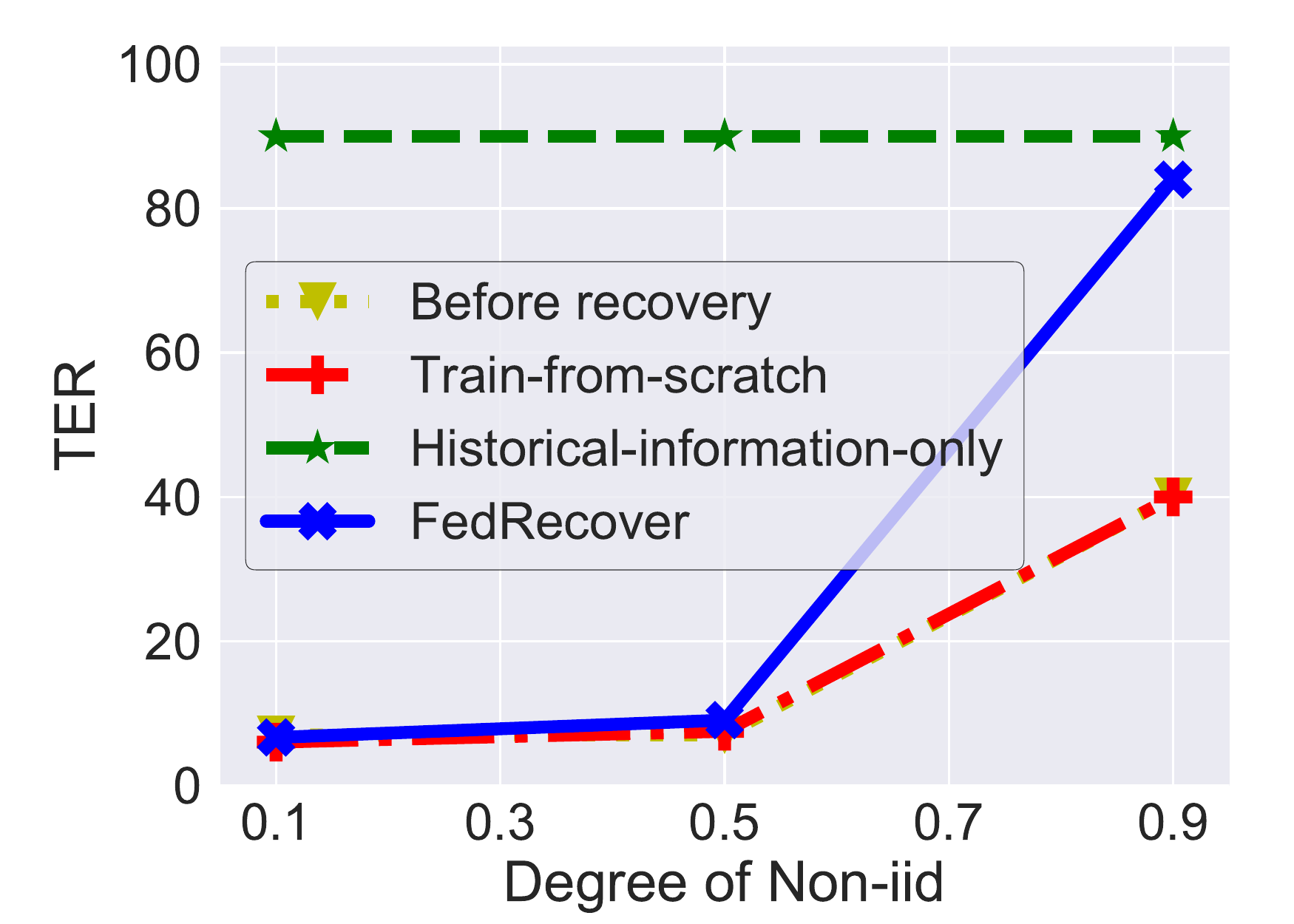}}
      %\label{fig:bd_noniid_trimmed_mean_er}}
      \subfloat[ASR]{\includegraphics[width=.15\textwidth]{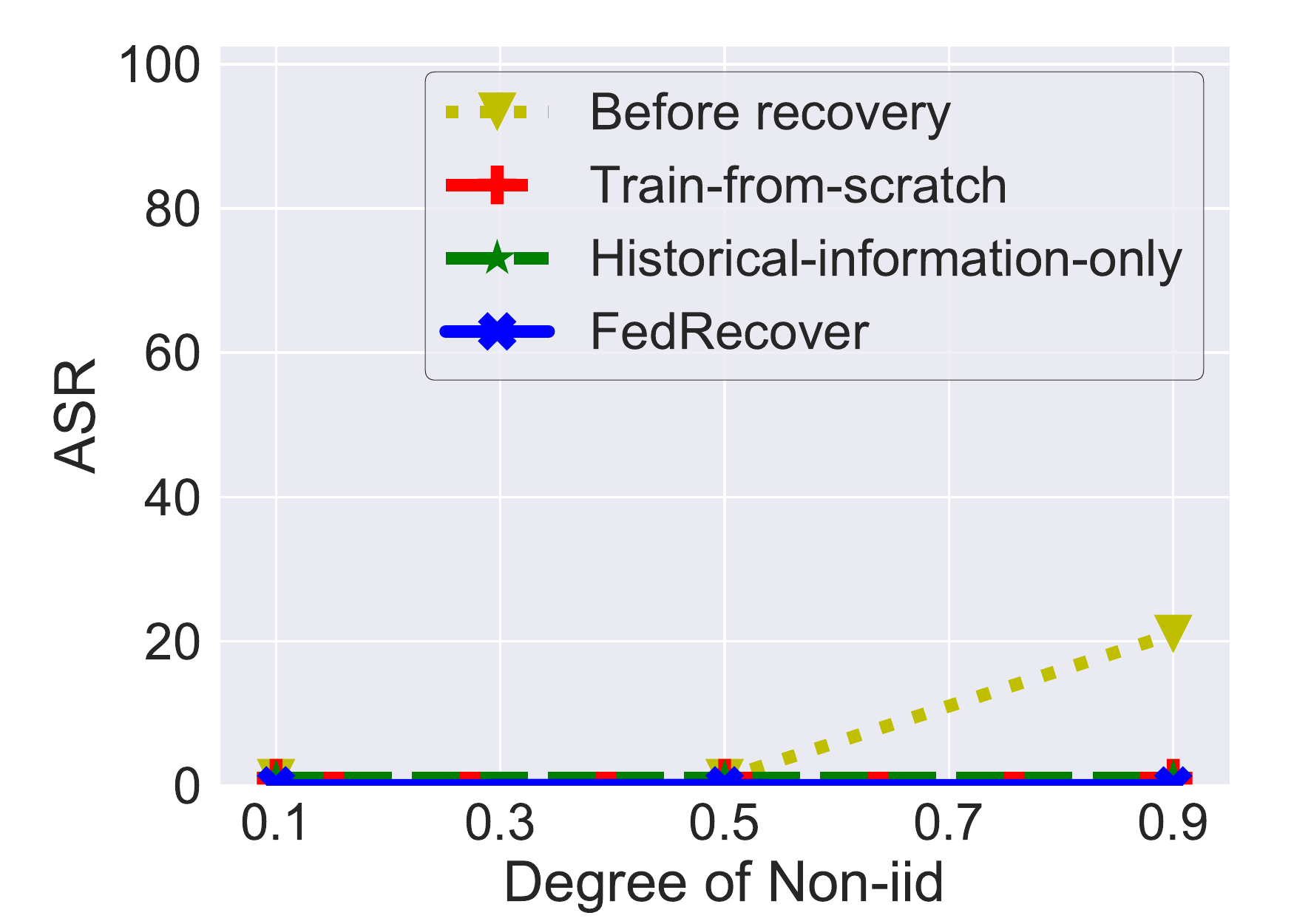}}
      %\label{fig:bd_noniid_trimmed_mean_sr}}
      \subfloat[ACP]{\includegraphics[width=.15\textwidth]{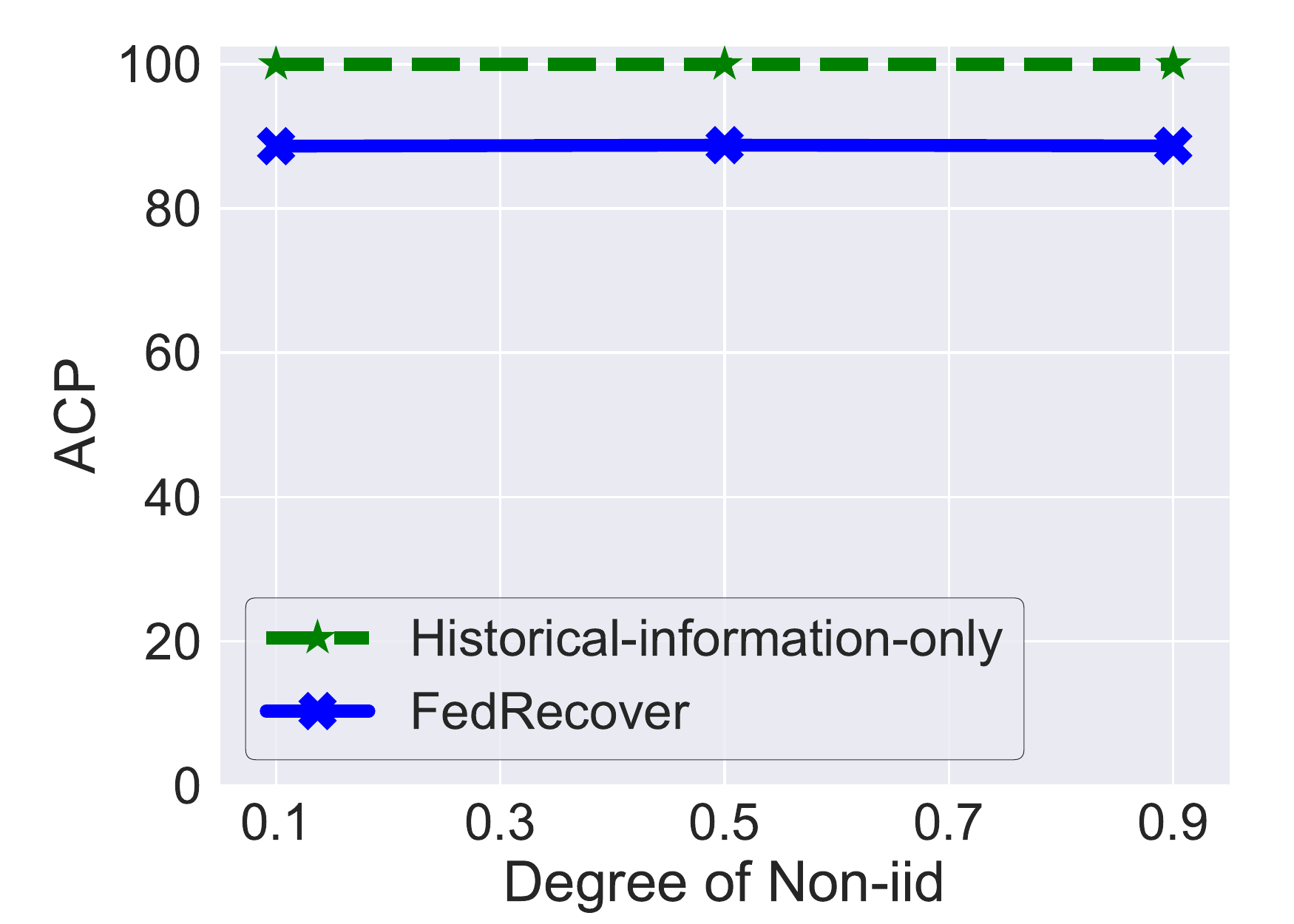}}
      %\label{fig:bd_noniid_trimmed_mean_cost}}  
      \\
      \vspace{-2mm}
    \caption{Effect of degree of non-iid on recovery from backdoor attack.  The aggregation rules are FedAvg (first row), Median (second row), and Trimmed-mean (third row).}
    \label{fig:bd_noniid_fedavg_median}
\end{figure}

\begin{figure}[!hbpt]
    \centering
       {\includegraphics[width=0.19\textwidth]{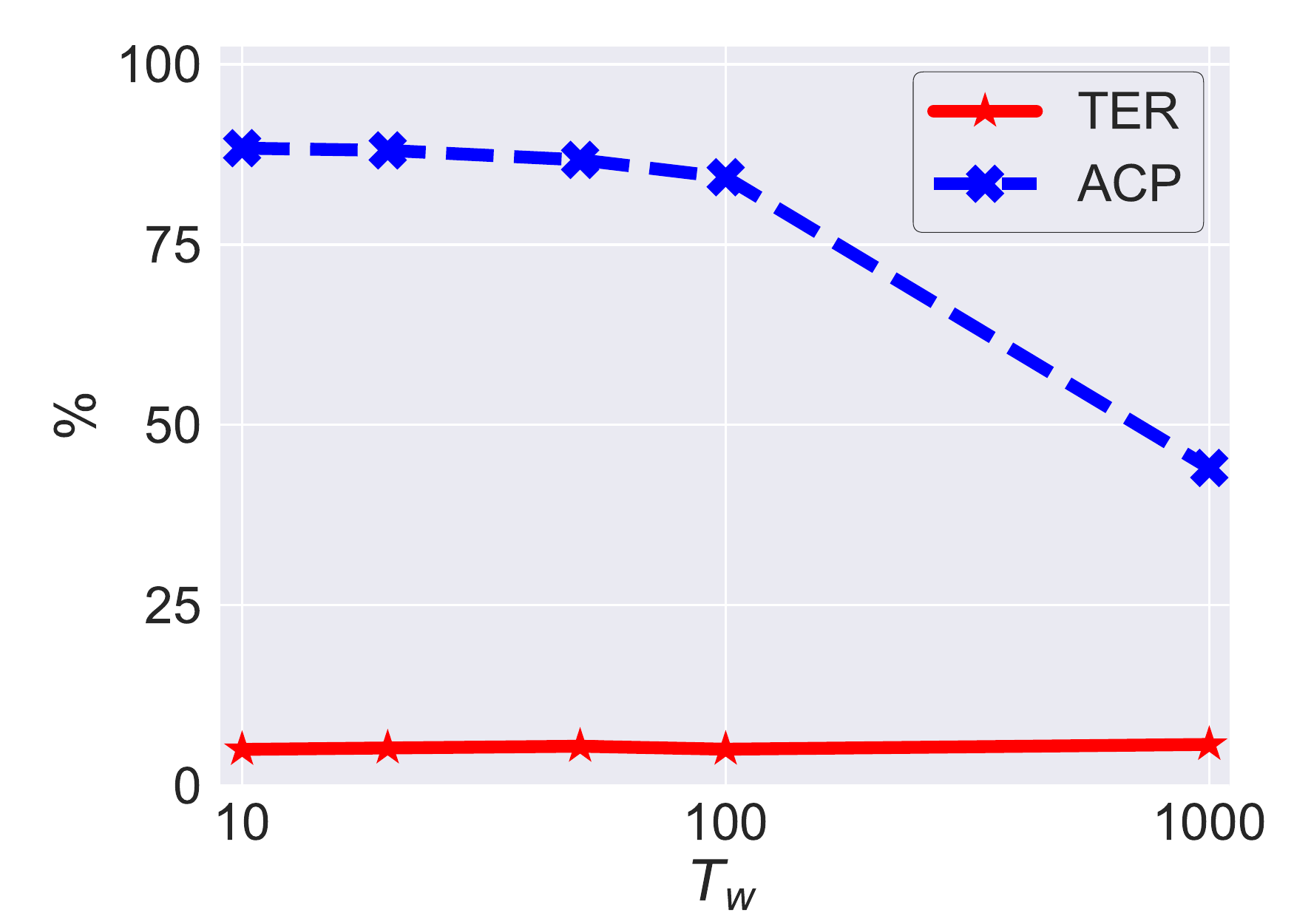}}
       {\includegraphics[width=0.19\textwidth]{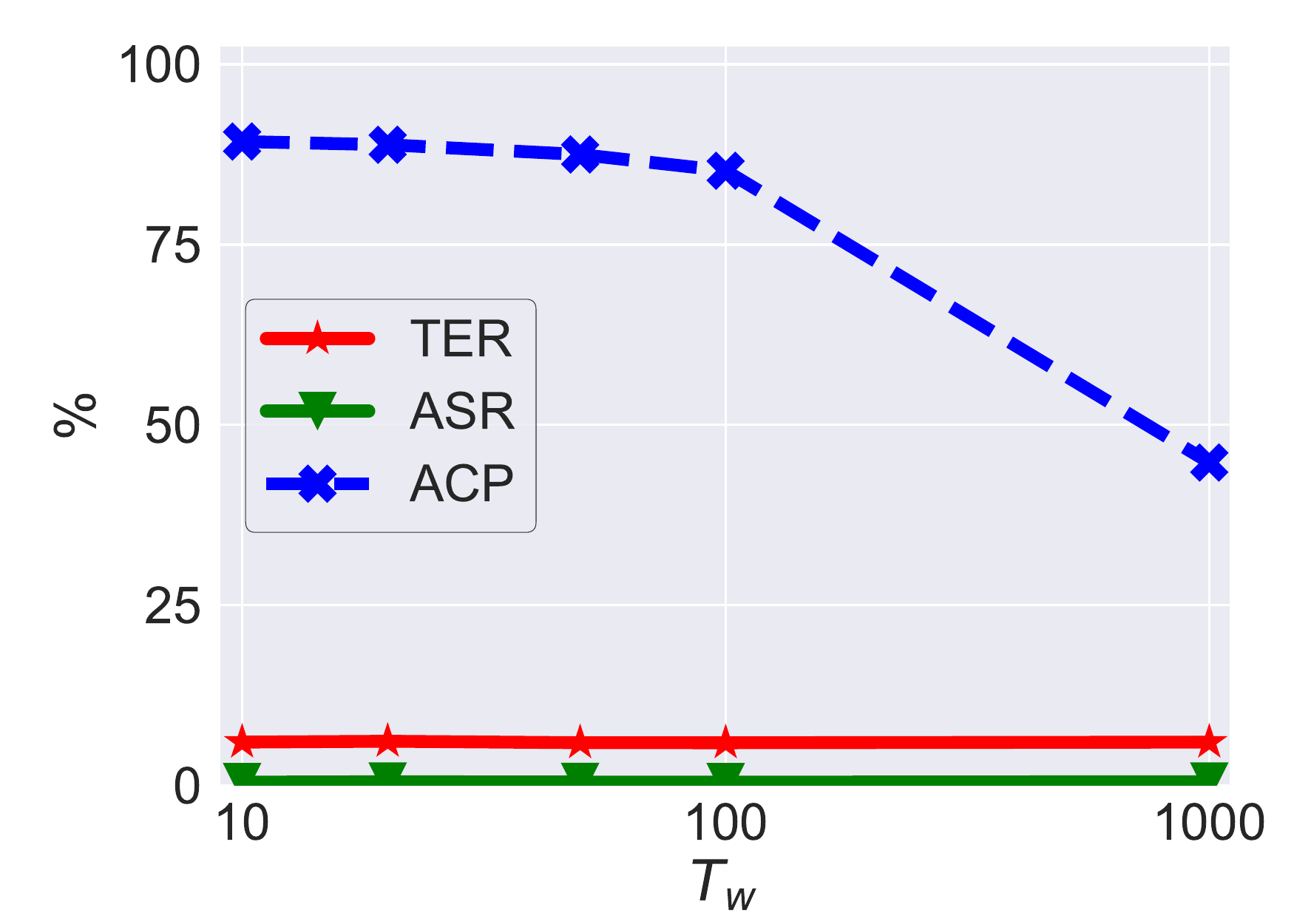}} \\
       \subfloat[Trim attack]{\includegraphics[width=0.19\textwidth]{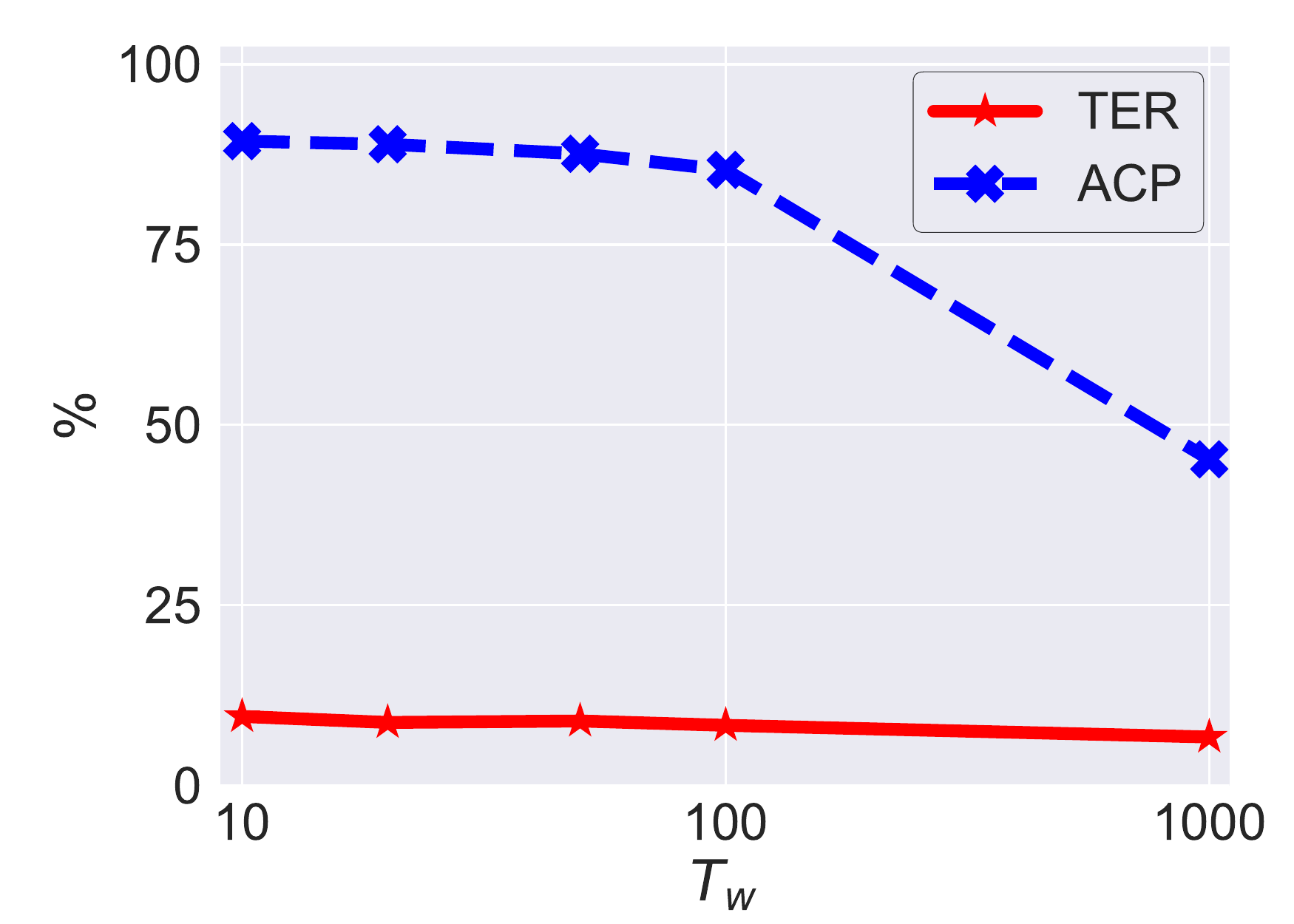}}
       \subfloat[Backdoor attack]{\includegraphics[width=0.19\textwidth]{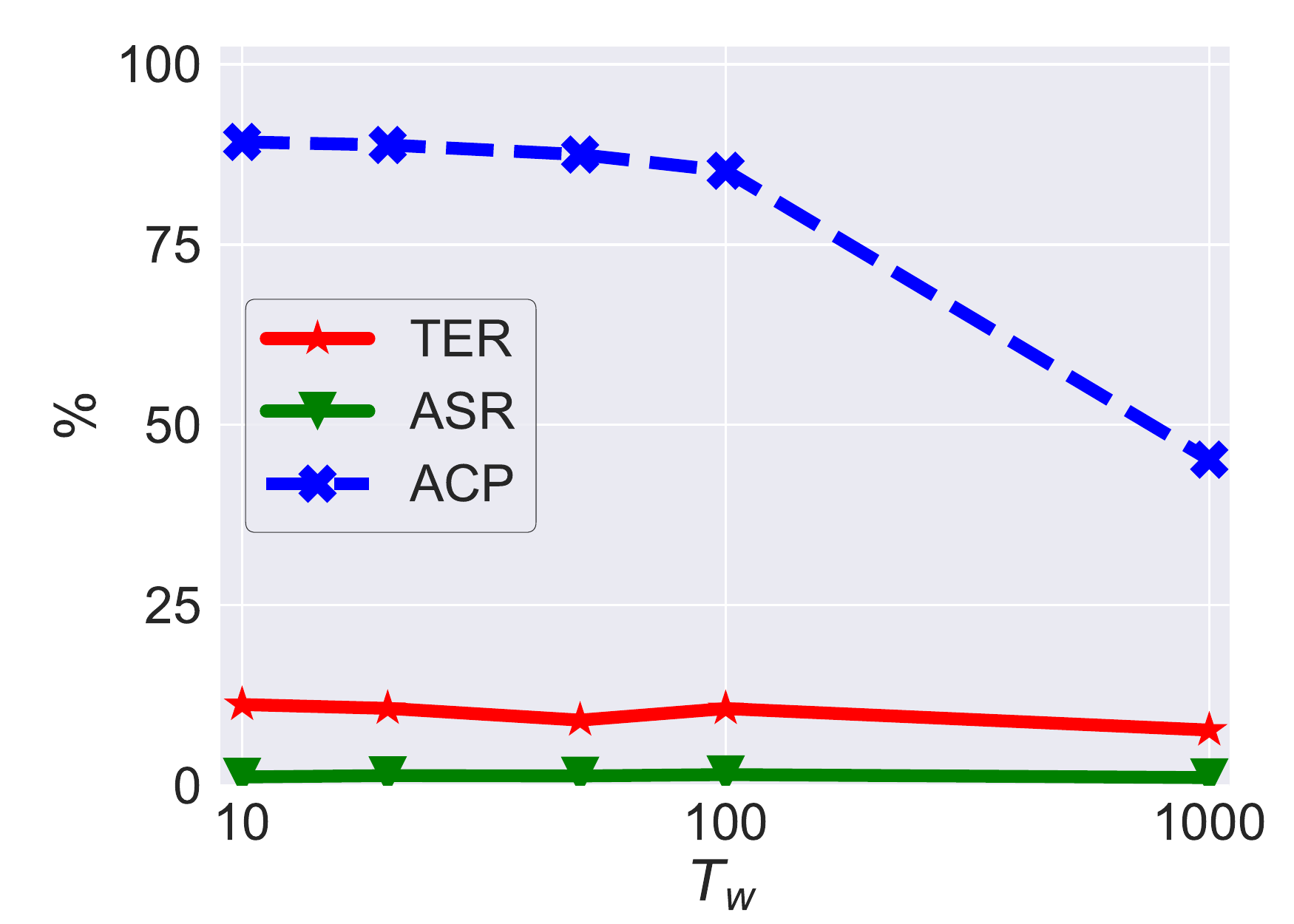}}
        \vspace{-2mm}
    \caption{Effect of the number of warm-up rounds $T_w$ on FedRecover for recovery from (a) Trim attack and (b) backdoor attack. The aggregation rules are FedAvg (first row) and Median (second row).}
    \label{fig:bd_tw_fedavg_median}
\end{figure}

\begin{figure}[!thp]
    \centering
       {\includegraphics[width=0.19\textwidth]{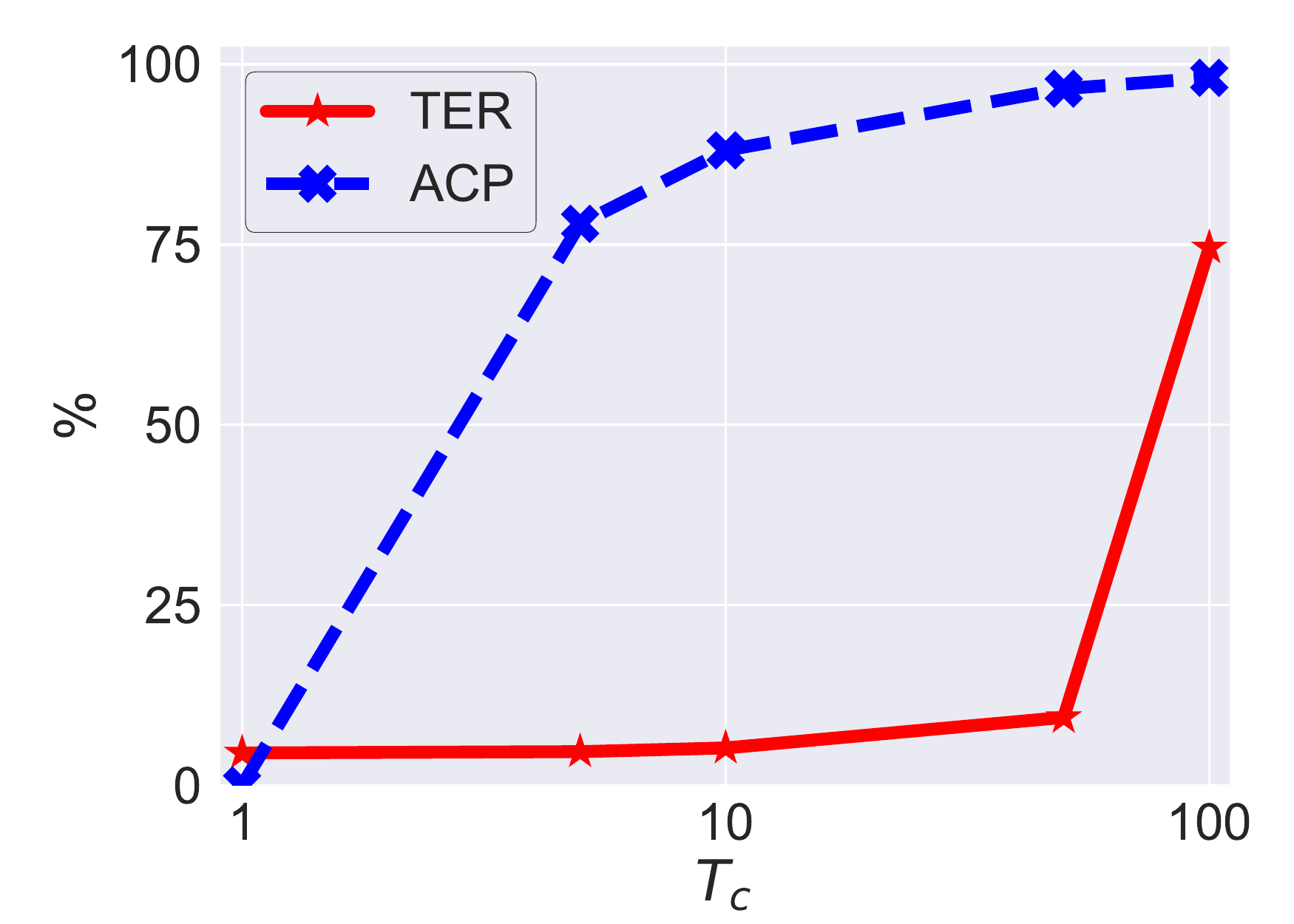}}
       {\includegraphics[width=0.19\textwidth]{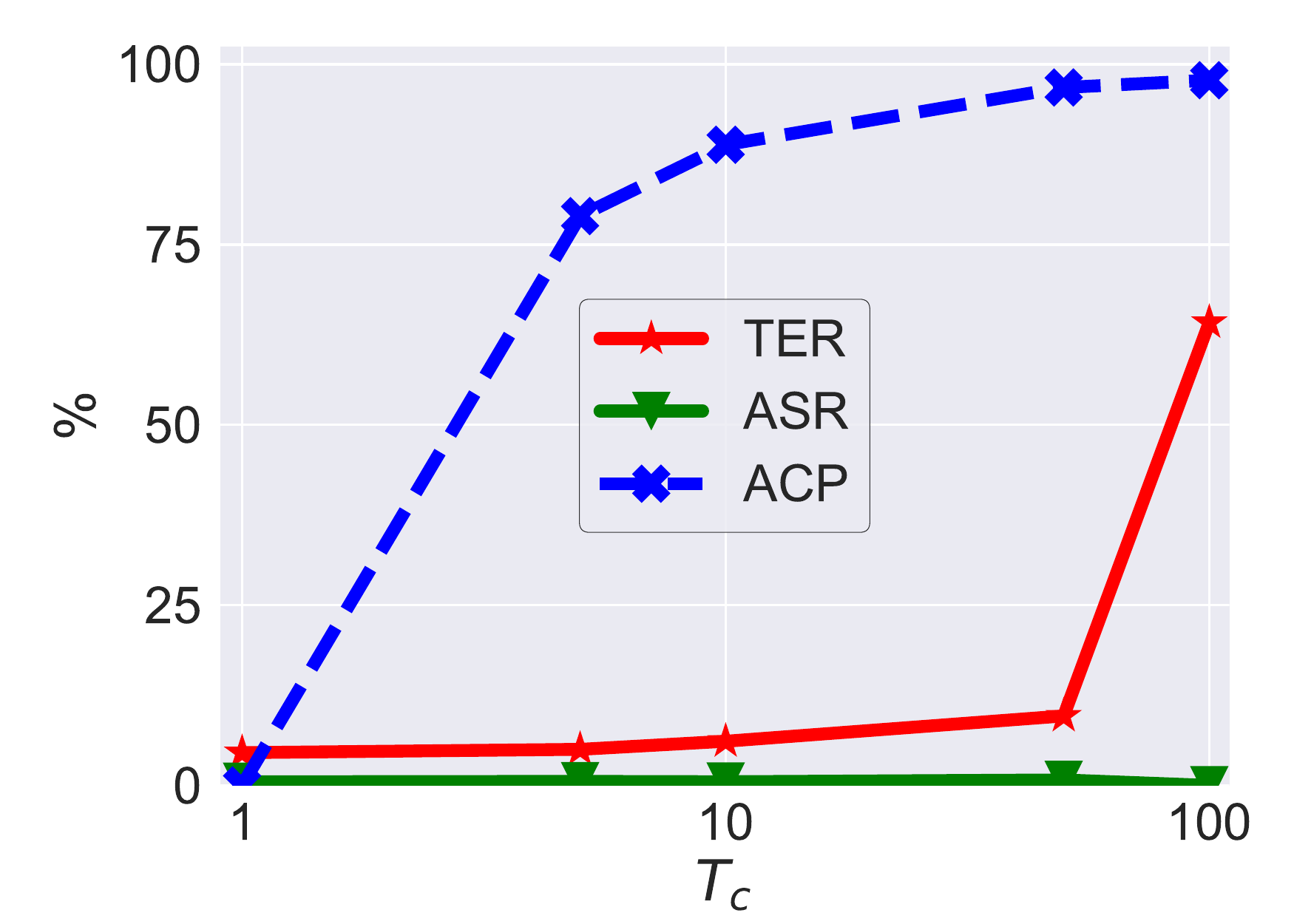}} \\
       \subfloat[Trim attack]{\includegraphics[width=0.19\textwidth]{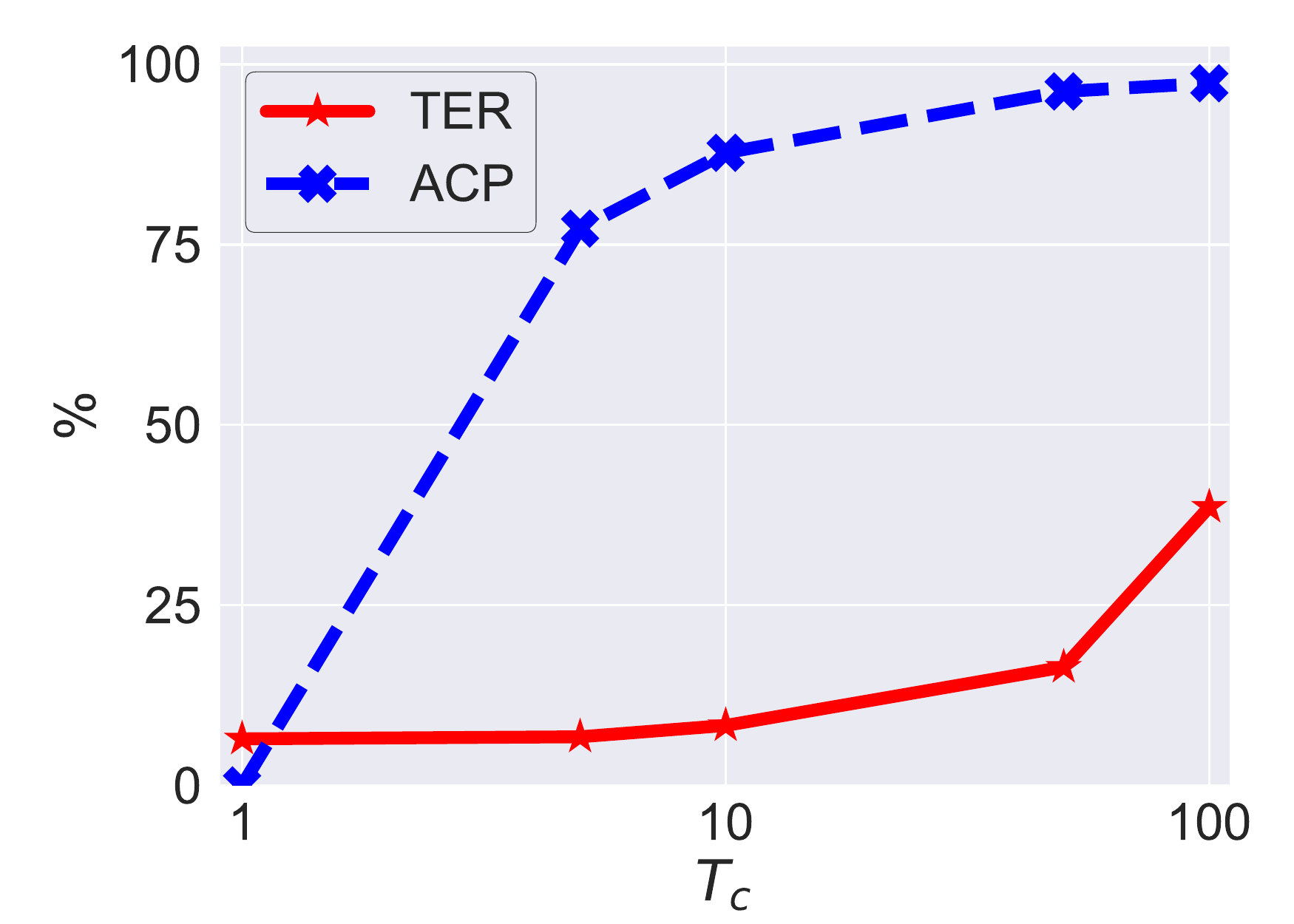}}
       \subfloat[Backdoor attack]{\includegraphics[width=0.19\textwidth]{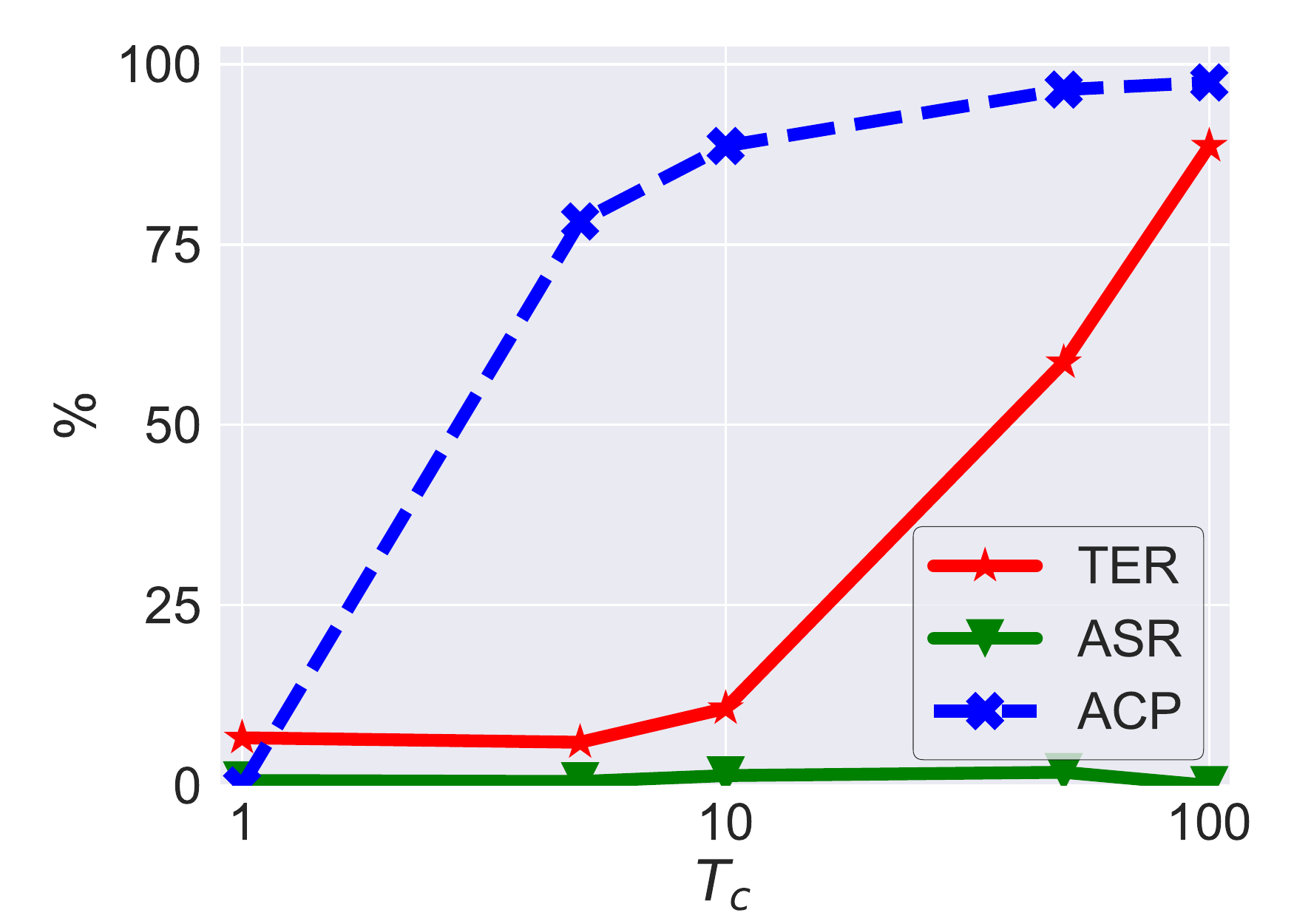}}
         \vspace{-2mm}
    \caption{Effect of the correction period $T_c$ on FedRecover for recovery from (a) Trim attack and (b) backdoor attack. The aggregation rules are FedAvg (first row) and Median (second row).}
    \label{fig:bd_tc_fedavg_median}
    \vspace{-3mm}
\end{figure}

\begin{figure}[!t]
    \centering
    {\includegraphics[width=0.19\textwidth]{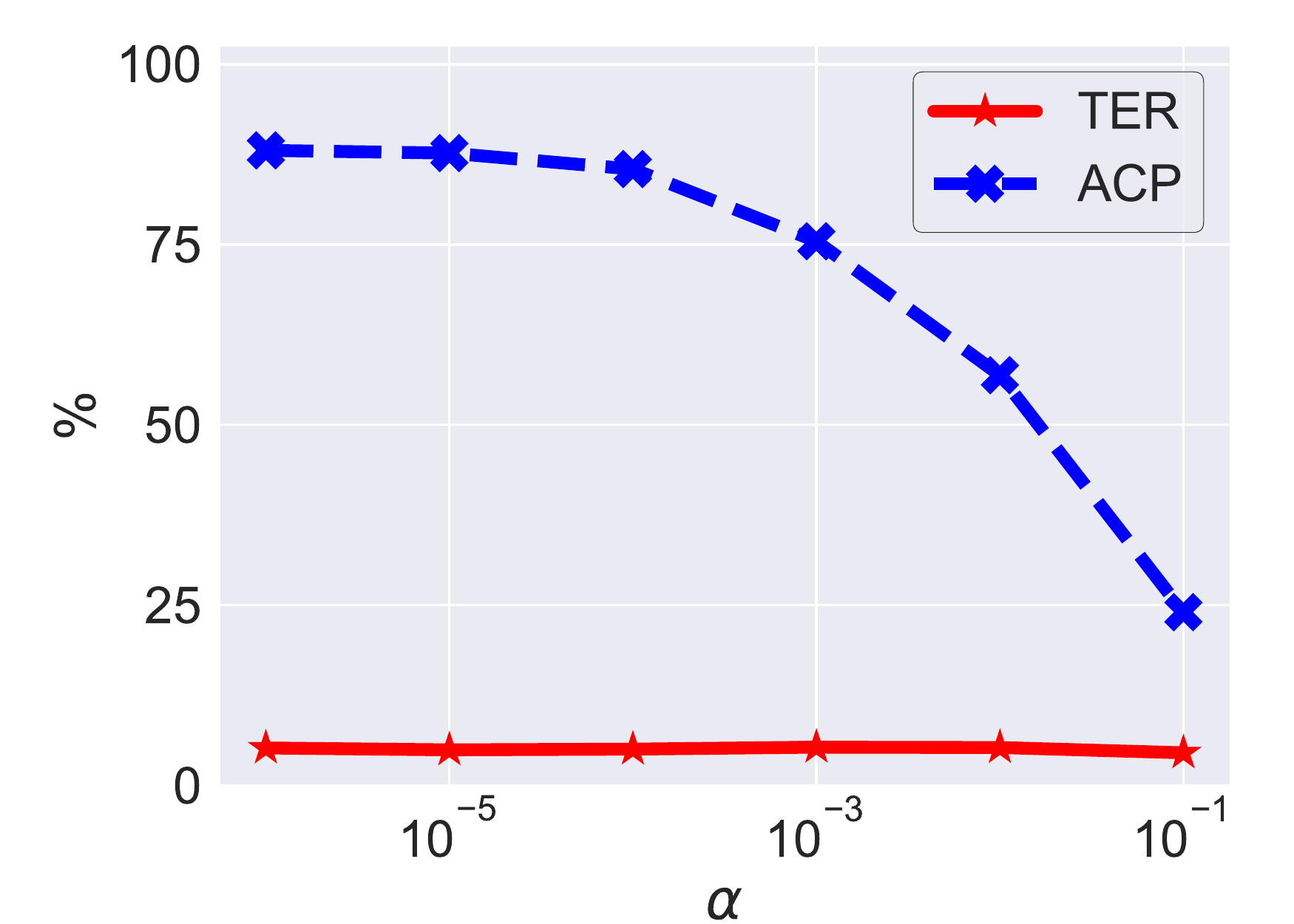}}
    {\includegraphics[width=0.19\textwidth]{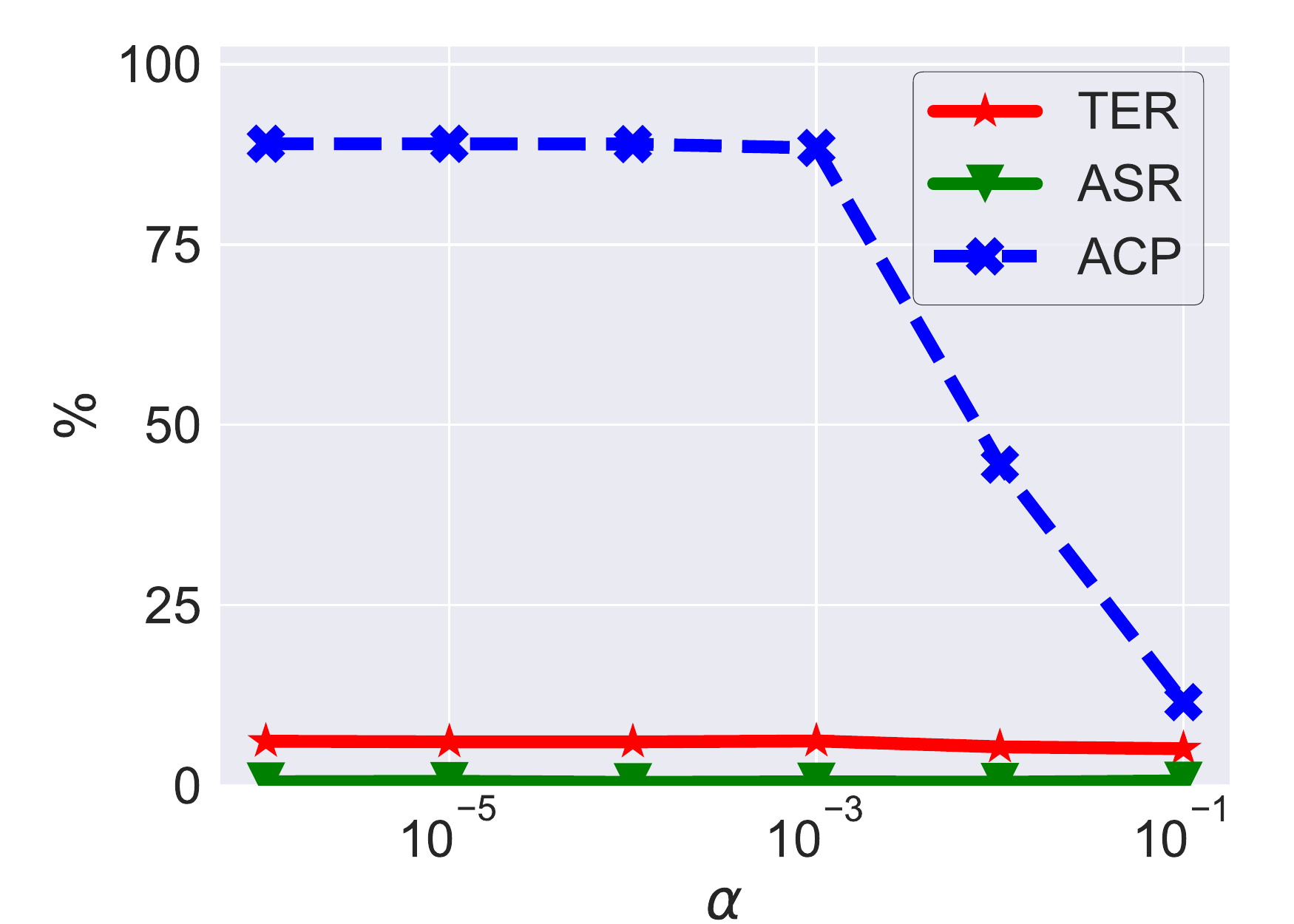}} \\
    \subfloat[Trim attack]{\includegraphics[width=0.19\textwidth]{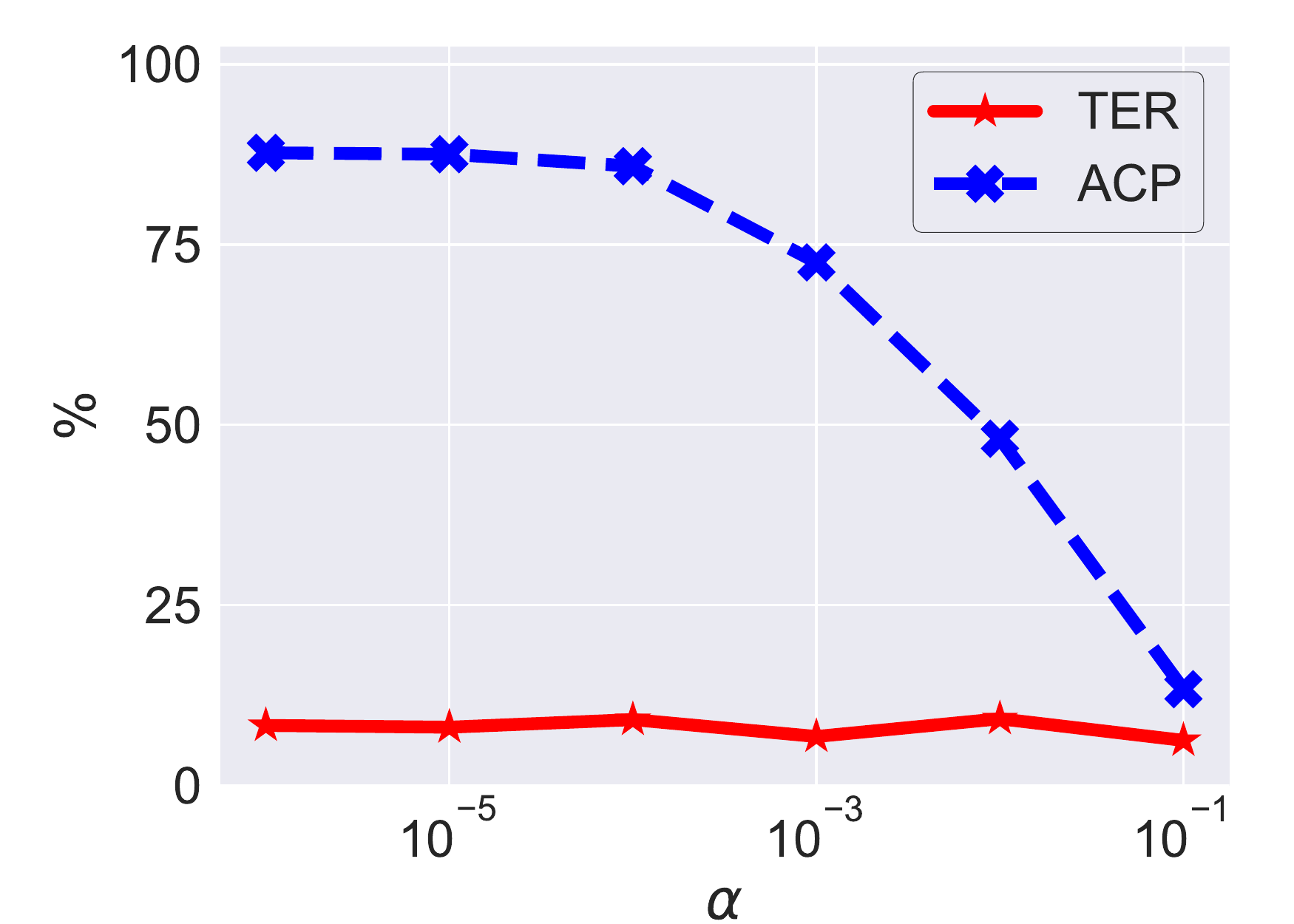}}
    \subfloat[Backdoor attack]{\includegraphics[width=0.19\textwidth]{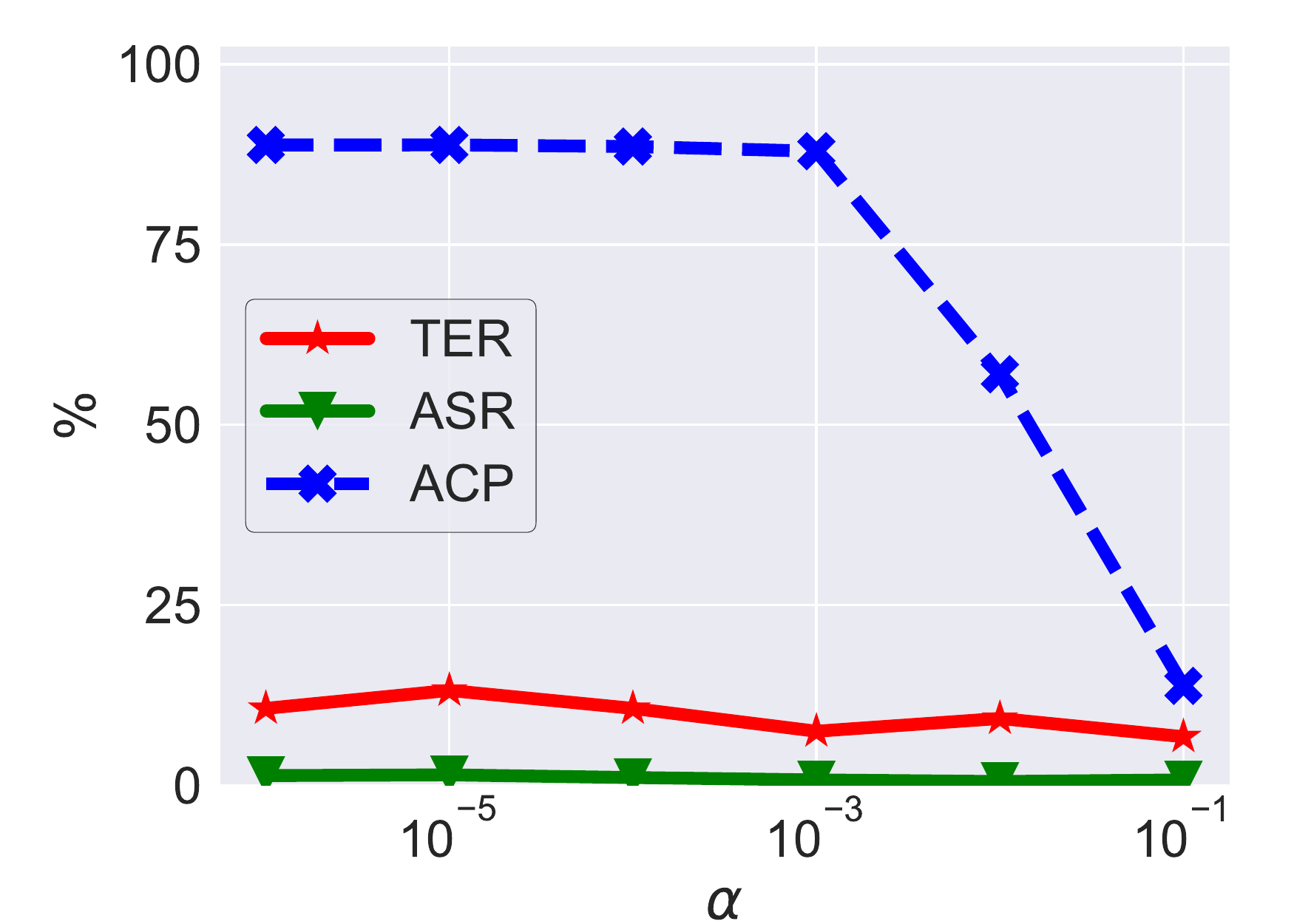}}  
    \vspace{-2mm}
    \caption{Effect of the tolerance rate $\alpha$ on FedRecover for recovery from (a) Trim attack and (b) backdoor attack. The aggregation rules are FedAvg (first row) and Median (second row).}
    \label{fig:bd_g_fedavg_median}
    \vspace{-3mm}
\end{figure}

\begin{figure}[!thp]
    \centering
       {\includegraphics[width=0.19\textwidth]{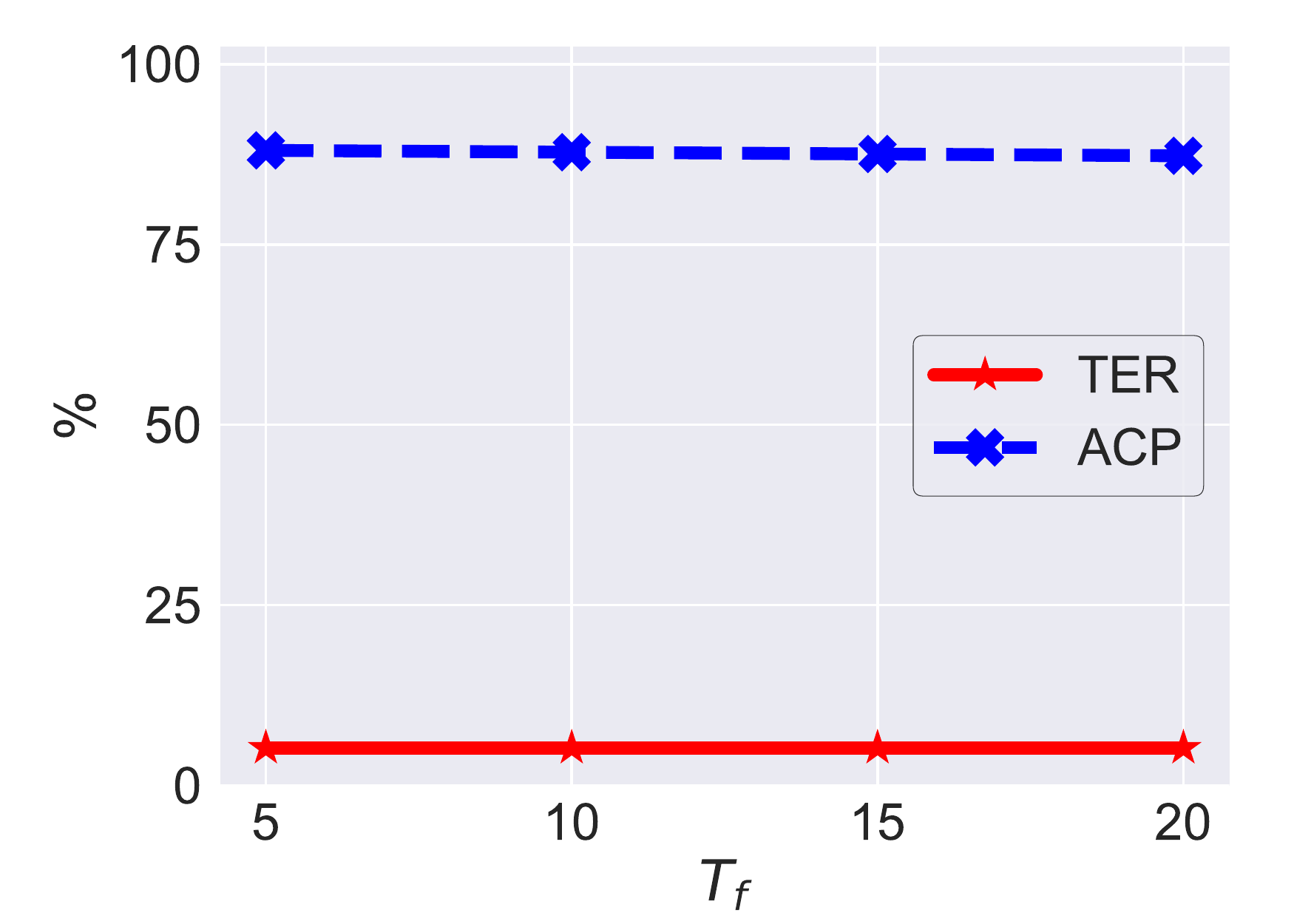}}
       {\includegraphics[width=0.19\textwidth]{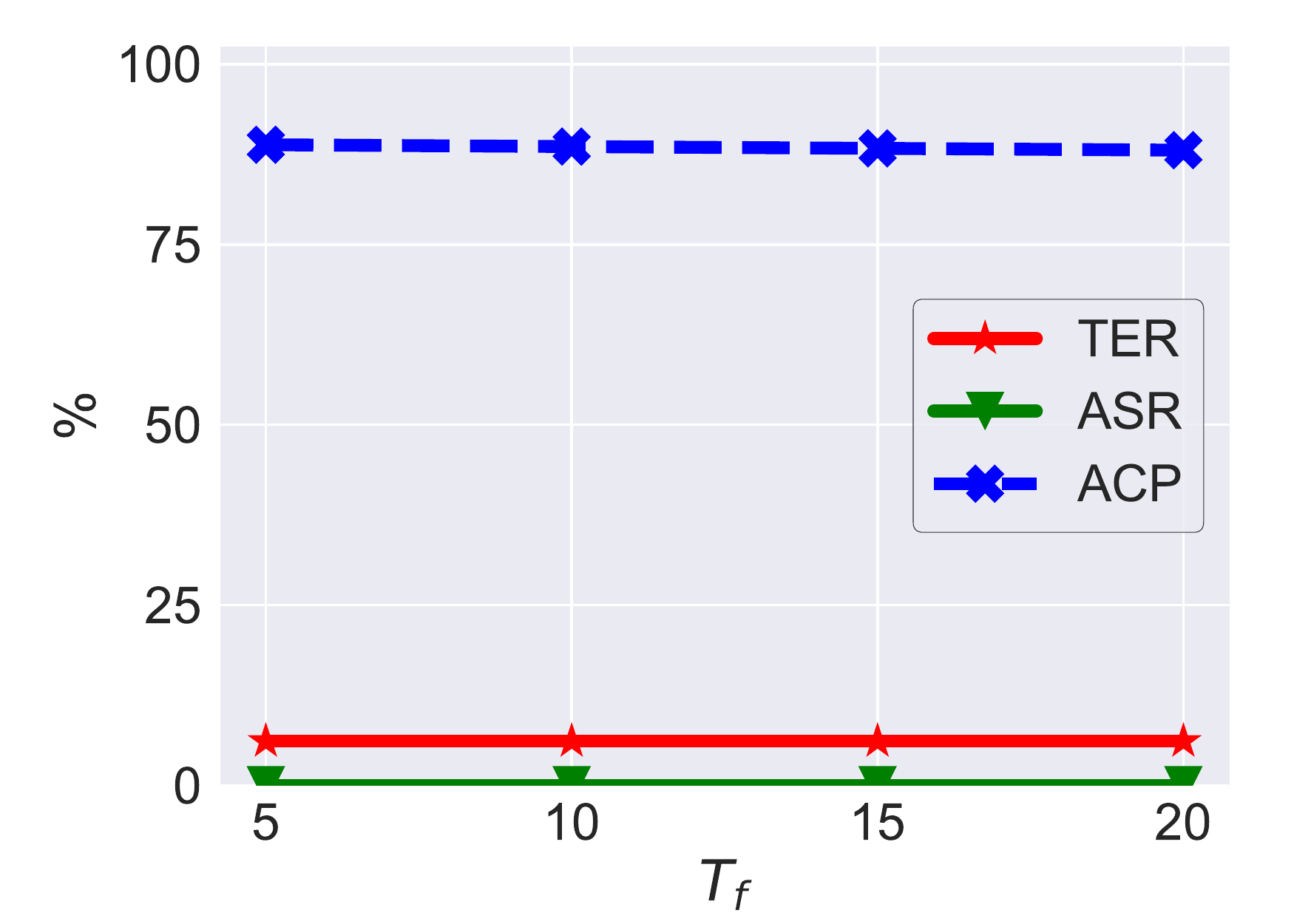}} \\
       \subfloat[Trim attack]{\includegraphics[width=0.19\textwidth]{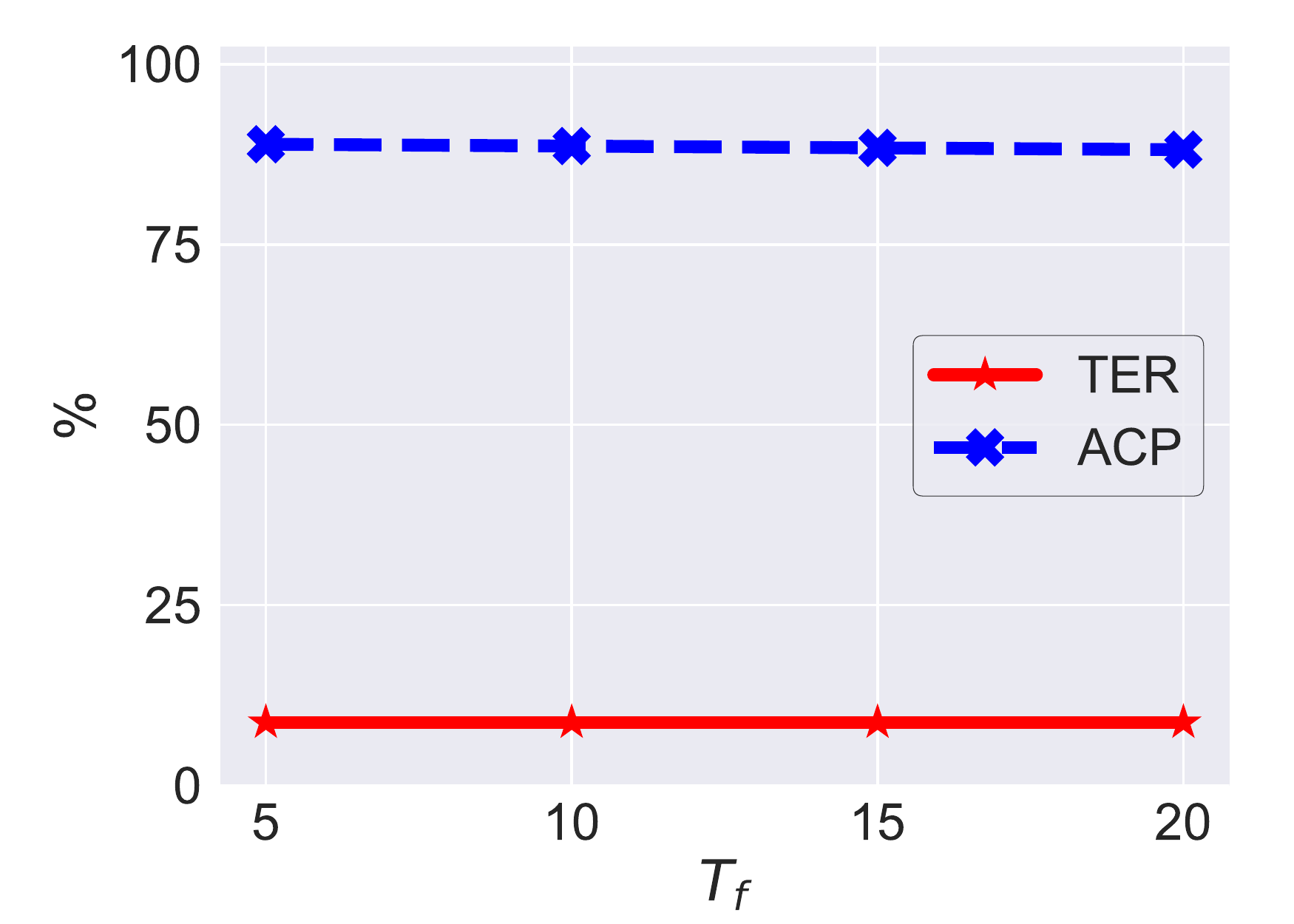}}
       \subfloat[Backdoor attack]{\includegraphics[width=0.19\textwidth]{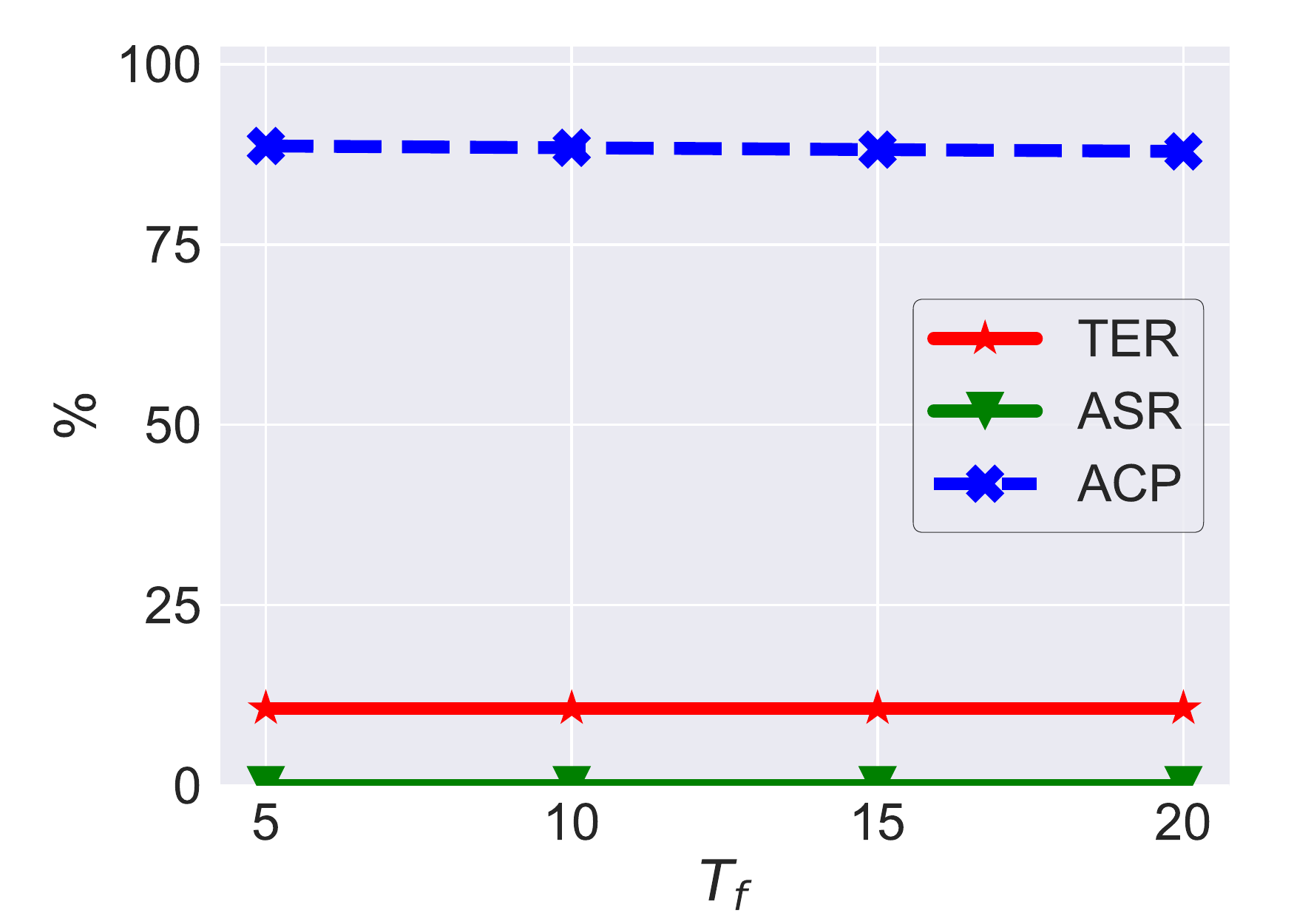}}
         \vspace{-2mm}
    \caption{Effect of the number of final tuning rounds $T_f$ on FedRecover for recovery from (a) Trim attack and (b) backdoor attack. The aggregation rules are FedAvg (first row) and Median (second row).}
    \label{fig:bd_tf_fedavg_median}
\end{figure}

\begin{figure}
    \centering
    \vspace{-2mm}
    \includegraphics[width=0.24\textwidth]{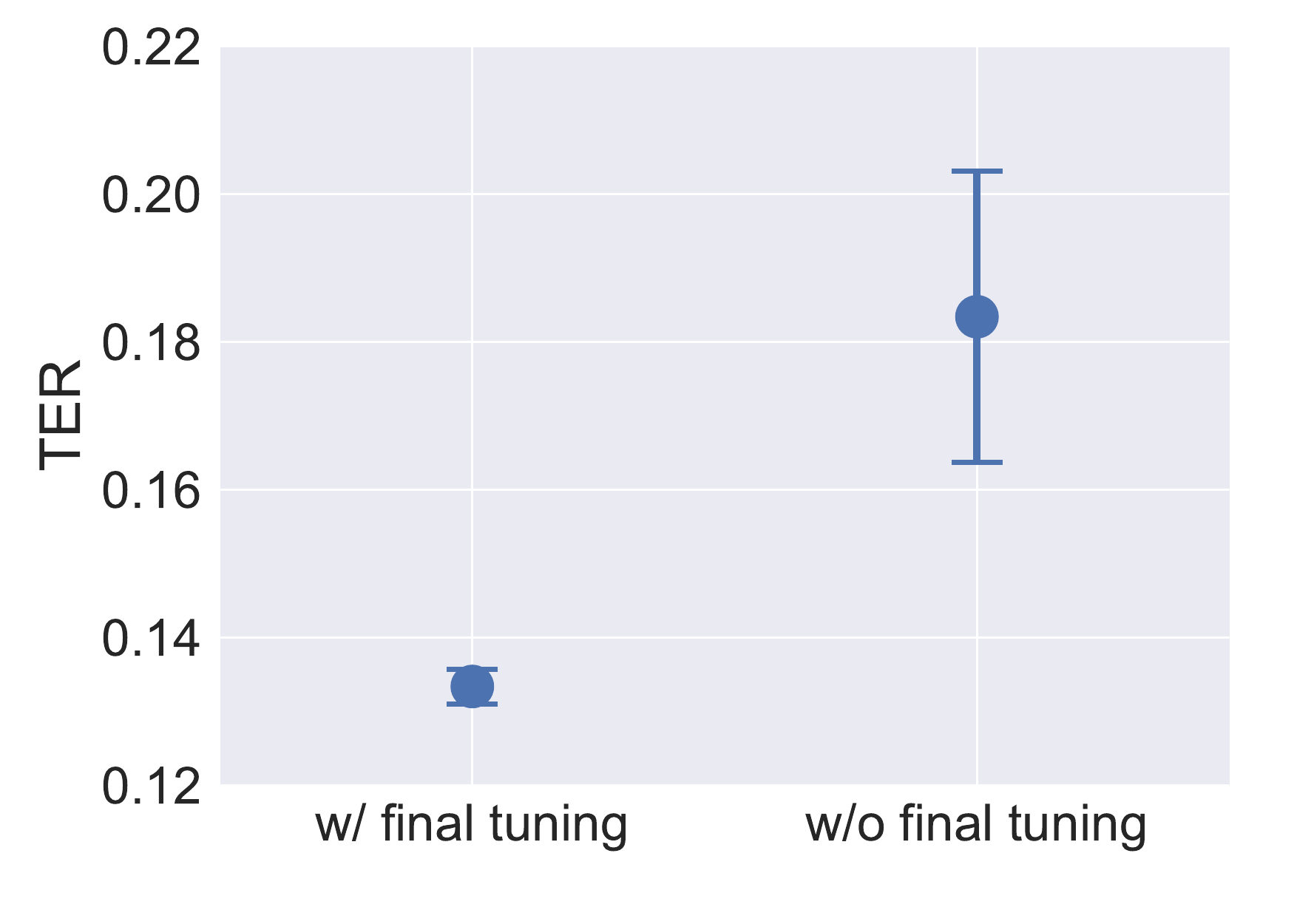}
    \vspace{-2mm}
    \caption{\xc{Error bar of TER of FedRecover with or without final tuning on Purchase dataset for recovery from Trim attack. The aggregation rule is Trimmed-mean. We run each experiment for 10 times. The points are the mean TER and the vertical lines are the standard deviation.  FedRecover with final tuning achieves lower TER and is more stable.}}
    \label{fig:ft_nft}
     \vspace{-3mm}
\end{figure}

\begin{table}[!t]
\setlength{\tabcolsep}{12pt}
    \centering
    \caption{\xc{The test error rate (TER), attack success rate (ASR), and average cost-saving percentage (ACP) of FedRecover without approximate local model updates and FedRecover. All values are in  \%. Smaller TER and ASR imply better accuracy and larger ACP implies better efficiency.}} 
    \addtolength{\tabcolsep}{-9pt}
    \scalebox{1}{\begin{tabular}{|c|c|c|c|c|c|c|}
        \hline        \multirow{2}{*}{FL method} & \multirow{2}{*}{Recovery method} & \multicolumn{2}{c|}{Trim attack} &  \multicolumn{3}{c|}{Backdoor attack} \\\cline{3-7}
        {} & {} & {TER} & {ACP} & {TER} & {ASR} & {ACP} \\\hline\hline
        \multirow{2}{*}{FedAvg} & {FedRecover w/o approx.} & {52} & {89} & {52} & {18} & {89}\\ \cline{2-7}
        {} & {FedRecover} & {5} & {88} & {6} & {0} & {89}\\ \hline\hline
        
        \multirow{2}{*}{Median} & {FedRecover w/o approx.} & {67} & {89} & {68} & {8} & {89}\\ \cline{2-7}
        {} & {FedRecover} & {8} & {87} & {10} & {1} & {89}\\ \hline\hline
        \multirow{2}{*}{Trimmed-mean} & {FedRecover w/o approx.} & {61} & {89} & {61} & {15} & {89}\\ \cline{2-7}
        {} & {FedRecover} & {7} & {88} & {9} & {1} & {89}\\ \hline
    \end{tabular}}
    \label{tab:wo_approx}
\end{table}

\begin{table}[!t]
\setlength{\tabcolsep}{12pt}
    \centering
    \caption{\xc{The test error rate (TER) and average cost-saving percentage (ACP) of different variants of FedRecover for recovery from Trim attack on MNIST and Purchase datasets. All values are in \%. FedRecover is not applicable without warm-up rounds. Our results show that all optimization strategies are necessary for FedRecover. }} 
    \addtolength{\tabcolsep}{-4pt}
    \scalebox{1}{\begin{tabular}{|c|c|c|c|c|}
        \hline       
        \multirow{2}{*}{Variant} & \multicolumn{2}{c|}{MNIST} & \multicolumn{2}{c|}{Purchase}\\\cline{2-5}
        {} & {TER} & {ACP} & {TER} & {ACP}\\ \hline
        {w/o periodic correction} & {37} & {93} & {35} & {97}\\\hline
        {w/o abnormality fixing} & {26} & {89} & {14} & {88}\\\hline
        {w/o final tuning} & {9} & {88} & {18} & {86}\\\hline
        {FedRecover} & {7} & {88} & {13} & {86} \\\hline
    \end{tabular}}
    \label{tab:variants}
    % \vspace{-2mm}
\end{table}

\section{FedRecover w/o Approx. Local Model Updates}
\label{withoutapproximate}

\xc{
We consider a variant of FedRecover without approximate local model updates. Specifically, we ask the clients to compute exact local model updates during warm-up, periodic correction, and final tuning rounds. Table \ref{tab:wo_approx} shows the results on MNIST dataset. We observe that the ACP for both FedRecover w/o approximate local model updates and FedRecover is similar. However, without approximate local model updates, the TER increases significantly. For instance, when recovering from Trim attack and the aggregation rule is Trimmed-mean, the TER without approximate local model updates is 61\%, compared to 7\% with approximate local model updates. Moreover, the ASR for recovery from backdoor attacks without approximate local model updates is higher. Our results imply that the approximate local model updates help recover an accurate global model. 
}

\begin{figure}[!thp]
    \centering
       {\includegraphics[width=0.24\textwidth]{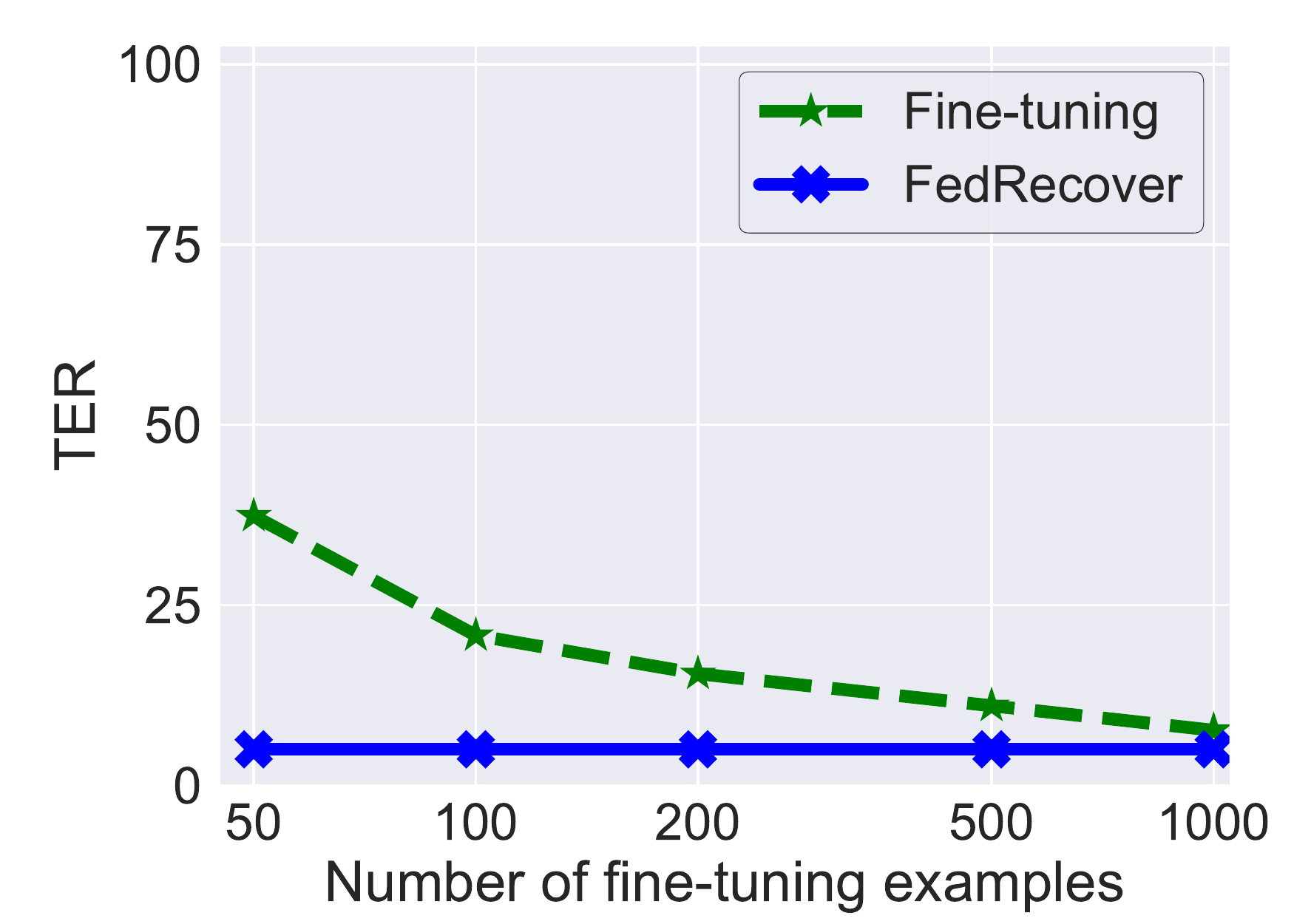}}
       {\includegraphics[width=0.24\textwidth]{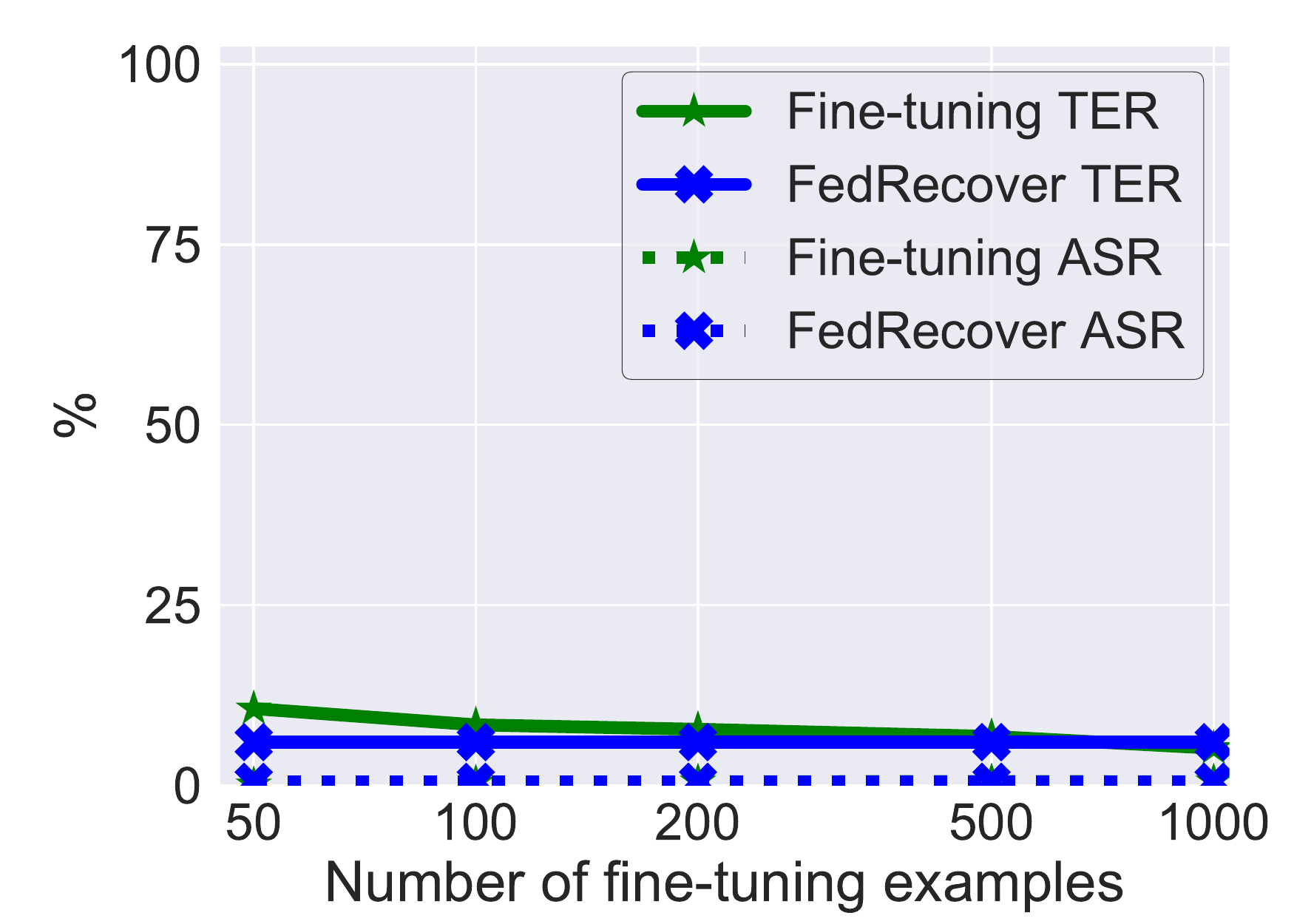}} \\
       \subfloat[Trim attack]{\includegraphics[width=0.24\textwidth]{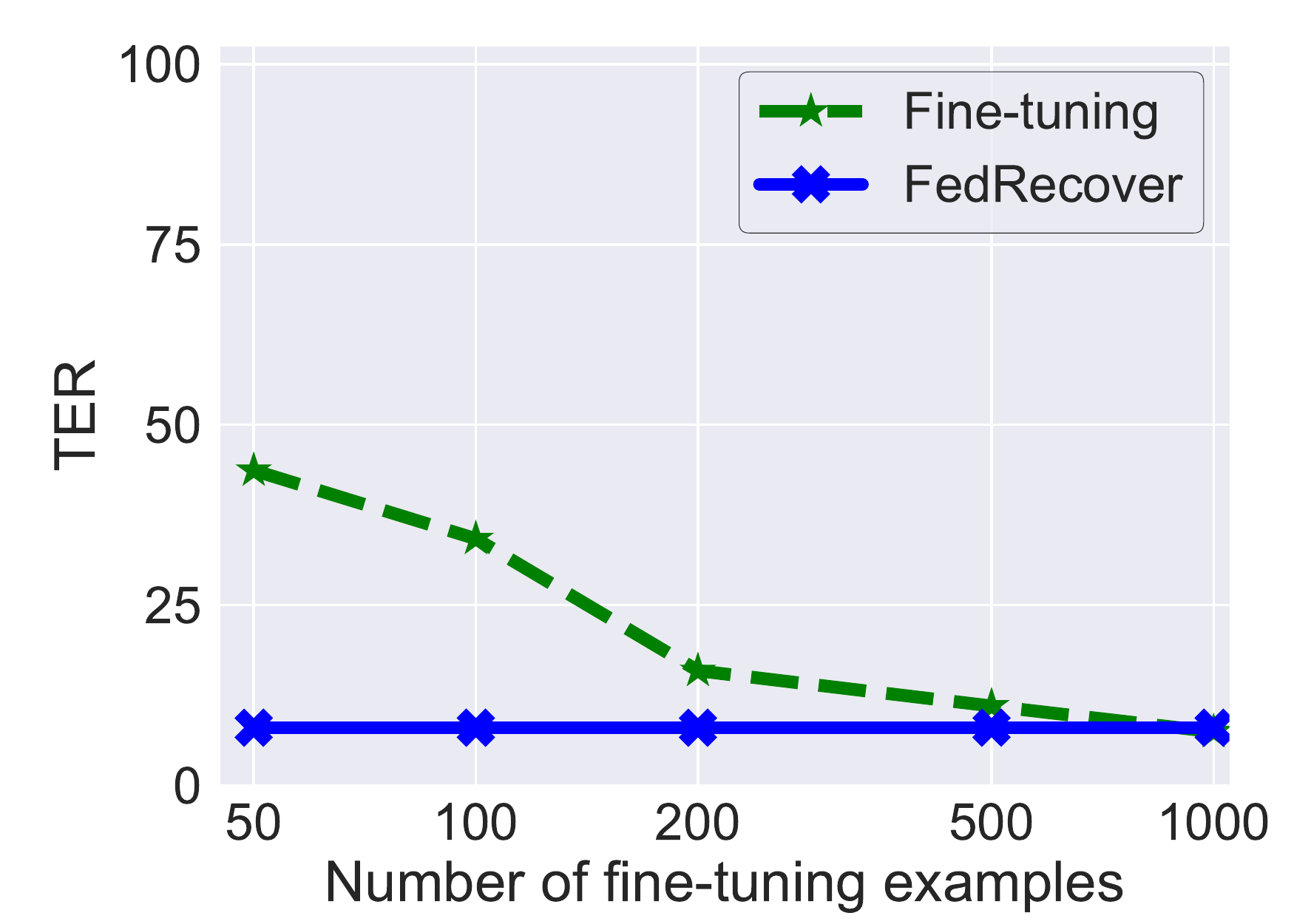}}
       \subfloat[Backdoor attack]{\includegraphics[width=0.24\textwidth]{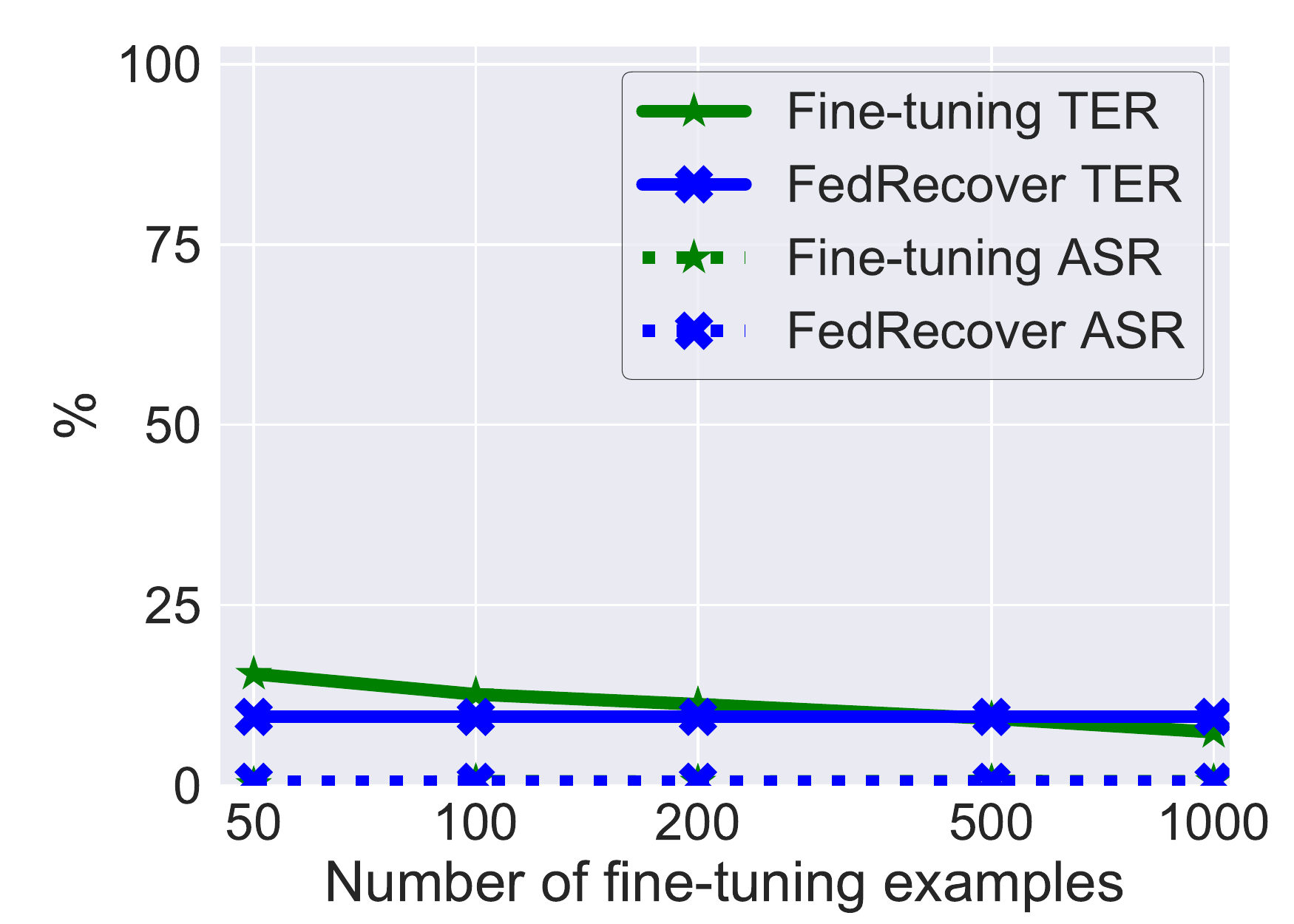}}\\
    \vspace{-2mm}
    \caption{\xc{Comparing FedRecover with fine-tuning for recovery from (a) Trim attack and (b) backdoor attack. The aggregation rules are FedAvg (first row) and Median (second row).}}
    \label{fig:tuning_fedavg_median}
    % \vspace{-3mm}
\end{figure}

\begin{figure}[!t]
    \centering
       \subfloat[Trim attack]{\includegraphics[width=0.24\textwidth]{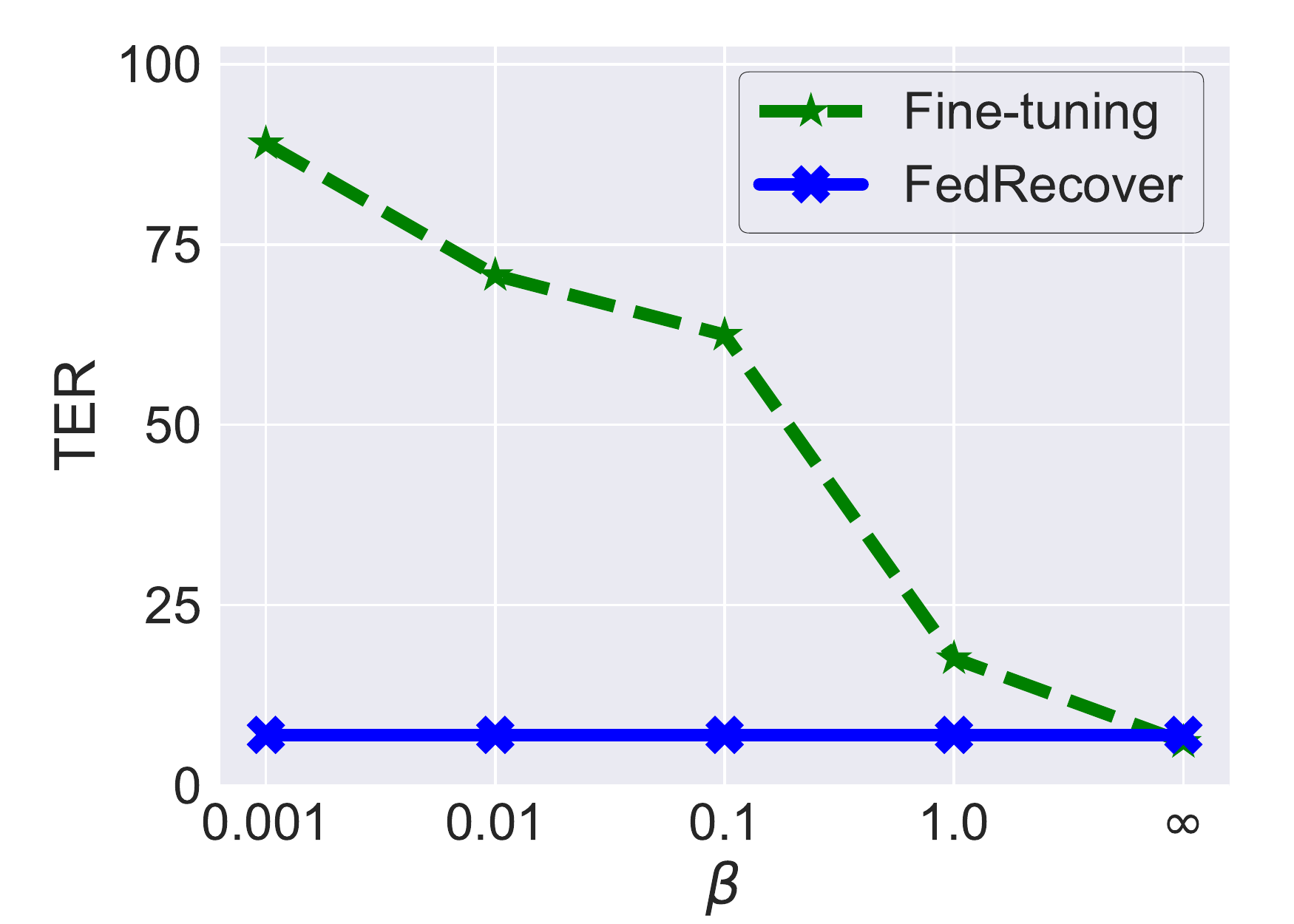}
       \label{fig:tuning_trimmed_mean}}
       \subfloat[Backdoor attack]{\includegraphics[width=0.24\textwidth]{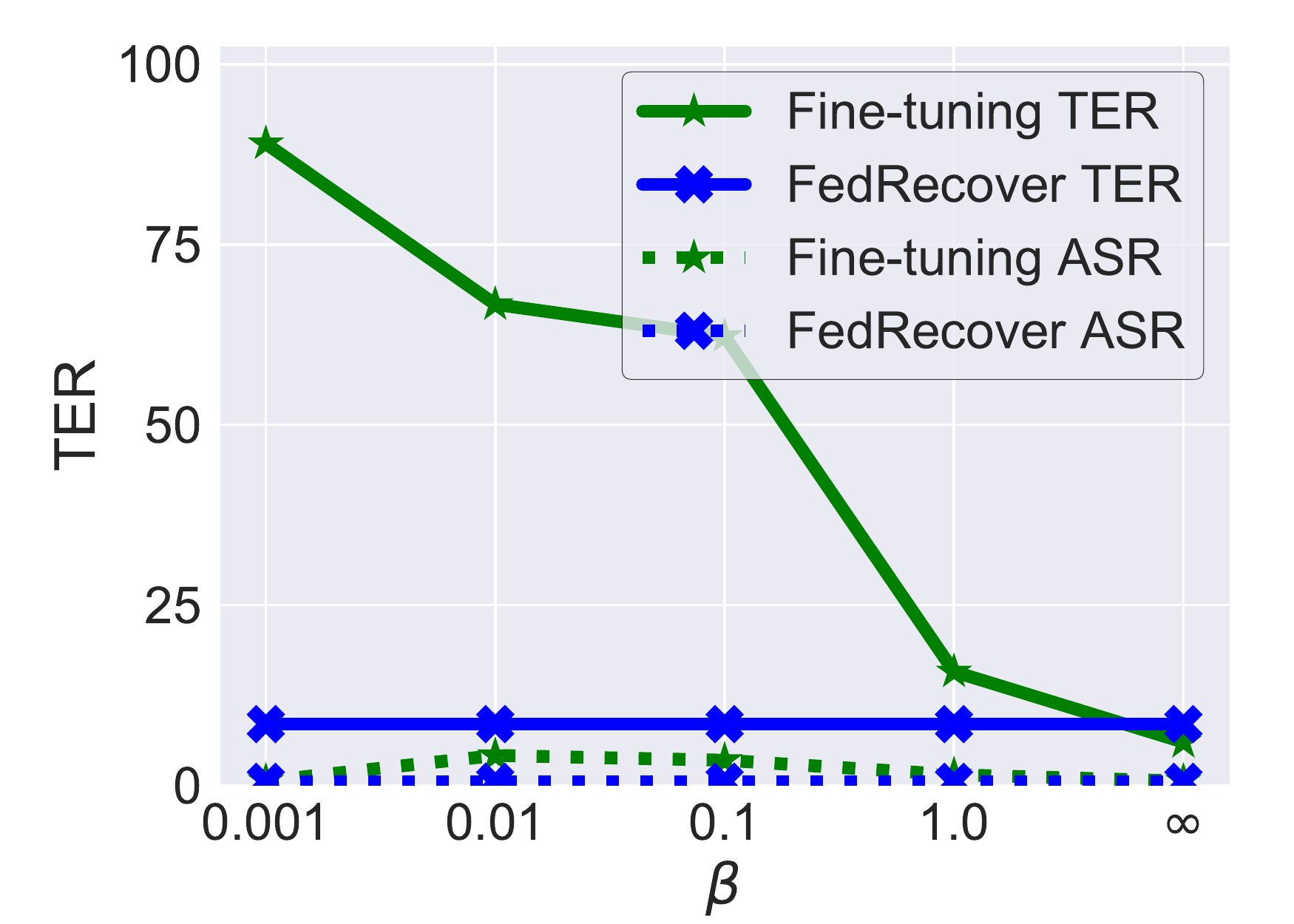}
       \label{fig:bd_tuning_trimmed_mean}}
         \vspace{-1mm}
    \caption{\xc{Comparing FedRecover with fine-tuning for recovery from (a) Trim attack and (b) backdoor attack when the fine-tuning dataset distribution deviates from the overall training data distribution. The aggregation rule is Trimmed-mean and the size of fine-tuning dataset is 1,000.}}
    \label{fig:tuning}
    %  \vspace{-4mm}
\end{figure}

% \input{oakland_revision}

% that's all folks
\end{document}